\newcommand{\NMS}{\textsc{NaturalMergeSort}\xspace}
\newcommand{\MS}{\textsc{MergeSort}\xspace}
\newcommand{\MinS}{\textsc{MinimalSort}\xspace}
\newcommand{\TS}{\textsc{TimSort}\xspace}
\newcommand{\QS}{\textsc{QuickSort}\xspace}
\newcommand{\caseX}{$\#$\textsc{X}\xspace}
\newcommand{\Python}{{Python}\xspace}
\newcommand{\Java}{{Java}\xspace}
\renewcommand{\H}{\mathcal{H}}
\renewcommand{\O}{\mathcal{O}}
\renewcommand{\S}{\mathcal{S}}
\newcommand{\C}{\mathcal{C}}
\newcommand{\R}{\mathcal{R}}
\newcommand{\rs}{\mathsf{r}}
\newcommand{\cost}{\textbf{c}}
\DeclareMathOperator{\height}{\texttt{height}}
\DeclareMathOperator{\rundecomp}{\texttt{runs}}
\DeclareMathOperator{\runstack}{\R}
\newcommand{\true}{\textbf{true}}
\newcommand{\pot}{\mathsf{pot}}
\newcommand{\ctok}{$\diamondsuit$\xspace}
\newcommand{\stok}{$\heartsuit$\xspace}
\renewcommand{\gets}{\ensuremath{\leftarrow}}
\newenvironment{disjunction}{\begin{itemize}}{\vspace{-\baselineskip}\end{itemize}}
\theoremstyle{plain}
\newtheorem{proposition}[theorem]{Proposition}
\newtheorem{claim}[theorem]{Claim}
\theoremstyle{definition}
\newtheorem{remark2}[theorem]{Remark}
\title{On the Worst-Case Complexity of TimSort} 
\author{Nicolas Auger, Vincent Jugé, Cyril Nicaud, and Carine Pivoteau}{Universit\'e Paris-Est, LIGM (UMR 8049), UPEM, F77454 Marne-la-Vall\'ee, France}{}{}{}
\authorrunning{N. Auger, V. Jugé, C. Nicaud, and C. Pivoteau} %mandatory. First: Use abbreviated first/middle names. Second (only in severe cases): Use first author plus 'et. al.'
\subjclass{\ccsdesc[100]{Theory of computation~Sorting and searching}}% mandatory: Please choose ACM 2012 classifications from https://www.acm.org/publications/class-2012 or https://dl.acm.org/ccs/ccs_flat.cfm . E.g., cite as "General and reference $\rightarrow$ General literature" or \ccsdesc[100]{General and reference~General literature}. 
\keywords{Sorting algorithms, Merge sorting algorithms, TimSort, Analysis of algorithms}% mandatory: Please provide 1-5 keywords
\begin{document}

\maketitle

\begin{abstract}
\TS is an intriguing sorting algorithm designed in 2002 for Python, 
whose worst-case complexity was announced, but not proved until our recent preprint.
In fact, there are two slightly different versions of \TS that are currently implemented in Python and in Java respectively. 
We propose a pedagogical and insightful proof that the Python version runs in time~$\O(n \log n)$. 
The approach we use in the analysis also applies to the Java version, although not without very involved technical details. 
As a byproduct of our study, we uncover a bug in the Java implementation that can cause the sorting method to fail during the execution. 
We also give a proof that Python's \TS running time is in $\O(n + n \H)$,
where $\H$ is the entropy of the distribution of runs (i.e. maximal monotonic sequences), which is quite a natural parameter here and part of the explanation for the good behavior of \TS on partially sorted inputs. Finally, we evaluate precisely the worst-case running time of Python's \TS,
and prove that it is equal to $1.5 n \H + \mathcal{O}(n)$.
\end{abstract}
%%%%%%%%%%%%%%%%%%%%%%%%%%%%%%%%%%%%%%%%%%%%%%%%%%%%%%%%%%%%%%%%%%%%%%%%%%%%%%%
%%%%%%%%%%%%%%%%%%%%%%%%%%%%%%%%%%%%%%%%%%%%%%%%%%%%%%%%%%%%%%%%%%%%%%%%%%%%%%%

\section{Introduction}\label{intro}
\TS is a sorting algorithm designed in 2002 by Tim Peters~\cite{Peters2015},
for use in the \Python programming language. It was thereafter implemented in
other well-known programming languages such as \Java. The algorithm includes many implementation optimizations, a few heuristics and some refined tuning, but its high-level principle is rather simple:
The sequence $S$ to be sorted is first decomposed greedily into monotonic runs (i.e. 
nonincreasing or nondecreasing subsequences of~$S$ as depicted on Figure~\ref{fig:runs}), which are then merged pairwise according to some specific rules.

\begin{figure}[h]
\centerline{$
S=(~\underbrace{12,10,7,5}_{\text{first run}},
~\underbrace{7,10,14,25,36}_{\text{second run}},
~\underbrace{3,5,11,14,15,21,22}_{\text{third run}},
~\underbrace{20,15,10,8,5,1}_{\text{fourth run}}~)
$}
\caption{A sequence and its {\em run decomposition} computed by \TS: for each run, the first two elements determine if it is increasing or decreasing, then it continues with the maximum number of consecutive elements that preserves the monotonicity.\label{fig:runs}}
\end{figure}

The idea of starting with a decomposition into runs is not new, and already appears in Knuth's \NMS~\cite{Knuth98}, where increasing runs are sorted using the same mechanism as in \MS. 
Other merging strategies combined with decomposition into runs appear in the literature, such as the \MinS of~\cite{Ta09} (see also~\cite{BaNa13} for other considerations on the same topic). 
All of them have nice properties: they run in $\O(n\log n)$ and even $\O(n+n\log\rho)$, where $\rho$ is the number of runs, which is optimal in the model of sorting by comparisons~\cite{Mannila1985}, using the classical counting argument for lower bounds. 
And yet, among all these merge-based algorithms, \TS was favored in several very popular programming languages, which suggests that it performs quite well in practice. 

\TS running time was implicitly assumed to be $\O(n\log n)$, but our unpublished preprint~\cite{AuNiPi15} contains, to our knowledge, the first proof of it. This was more than ten years after \TS started being used instead of \QS in several major programming languages. 
The growing popularity of this algorithm invites for a careful theoretical investigation. In the present paper, we make a thorough analysis which provides a better understanding of the inherent qualities of the merging strategy of \TS. 
Indeed, it reveals that, even without its refined heuristics,\footnote{These heuristics are useful in practice, but do not improve the worst-case complexity of the algorithm.} this is an effective sorting algorithm, computing and merging runs on the fly, using only local properties to make its decisions. 

% As the analysis we made in~\cite{AuNiPi15} was a bit involved and clumsy, w
We first propose in Section~\ref{sec:analysis1} a new pedagogical and self-contained exposition that \TS runs in time $\O(n + n \log n)$, which we want both clear and insightful.
In fact, we prove a stronger statement: on an input consisting of $\rho$ runs
of respective lengths $r_1,\ldots,r_\rho$, we establish that \TS runs
in $\O(n + n \H) \subseteq \O(n + n \log \rho) \subseteq \O(n + n \log n)$,
where $\H = H(r_1/n,\ldots,r_\rho/n)$ and
$H(p_1,\ldots,p_\rho) = - \sum_{i=1}^\rho p_i \log_2(p_i)$ is the binary Shannon entropy.

We then refine this approach, in Section~\ref{sec:analysis2},
to derive precise bounds on the worst-case running time of \TS,
and we prove that it is equal to $1.5 n \H + \mathcal{O}(n)$.
This answers positively a conjecture of~\cite{BuKno18}.
% 
% 
% 
% 
% Using the same approach, we also establish in Section~\ref{sec:analysis2} that it runs in $\O(n+n\log \rho)$, a question left open in our preprint and also in a recent work\footnote{In~\cite{BuKno18}, the authors refined the analysis of~\cite{AuNiPi15} to obtain very precise bounds for the complexity of \TS and of similar algorithms.} on \TS~\cite{BuKno18}.
Of course, the first result follows from the second, but since we believe that each one is interesting on its own, we devote one section to each of them.
% Besides, 
% the second result provides with an explanation to why \TS is a very good sorting algorithm, worth considering in most situations where in-place sorting is not needed.

To introduce our last contribution, we need to look into the evolution of the algorithm: there are actually not one, but two main versions of \TS. The first version of the algorithm contained a flaw, which was spotted in~\cite{GoRoBoBuHa15}: while the input was correctly sorted, the algorithm did not behave as announced (because of a broken invariant).
This was discovered by De Gouw and his co-authors while trying to prove formally the correctness of \TS. 
They proposed a simple way to patch the algorithm, which was quickly adopted in Python, leading to what we consider to be the real \TS. This is the one we analyze in Sections~\ref{sec:analysis1} and~\ref{sec:analysis2}. 
On the contrary, Java developers chose to stick with the first version of \TS, and adjusted some tuning values (which depend on the broken invariant; this is explained in Sections~\ref{presentation} and~\ref{sec:java}) to prevent the bug exposed by~\cite{GoRoBoBuHa15}. Motivated by its use in Java, we explain in Section~\ref{sec:java} how, at the expense of very complicated technical details, the elegant proofs of the Python version can be twisted to prove the same results for this older version. While working on this analysis, we discovered yet another error in the correction made in~Java. Thus, we compute yet another patch, even if we strongly agree that the algorithm proposed and formally proved in~\cite{GoRoBoBuHa15} (the one currently implemented in Python) is a better option.

%%%%%%%%%%%%%%%%%%%%%%%%%%%%%%%%%%%%%%%%%%%%%%%%%%%%%%%%%%%%%%%%%%%%%%%%%%%%%%%
%%%%%%%%%%%%%%%%%%%%%%%%%%%%%%%%%%%%%%%%%%%%%%%%%%%%%%%%%%%%%%%%%%%%%%%%%%%%%%%
\section{TimSort core algorithm}\label{presentation}

\begin{algorithm}[t]
\begin{small}
\SetArgSty{texttt}
\DontPrintSemicolon
\SetKwInOut{Input}{Input}
\Input{A sequence $S$ to sort}
\KwResult{The sequence $S$ is sorted into a single run, which  remains on the 
stack.}
\BlankLine
\SetKwInput{KwData}{Note}
\KwData{The function {\tt merge\_force\_collapse} repeatedly pops the last two runs on the stack~$\runstack$, merges them and pushes the resulting run back on the stack.}
\BlankLine
$\rundecomp \gets $ a run decomposition of $S$\;
$\runstack \gets $ an empty stack\;
\While(\tcp*[f]{main loop of \TS}){$\rundecomp\neq \emptyset$\label{algline:begin_loop}}{
  remove a run $r$ from $\rundecomp$ and push $r$ onto $\runstack$\;
  {\tt merge\_collapse}($\runstack$)\;\label{algline:end_loop}
}
\If(\tcp*[f]{the height of $\runstack$ is its number of runs}){$\height(\runstack) \neq 1$}{
  {\tt merge\_force\_collapse}($\runstack$)
}
\end{small}
\caption{\TS \hfill(Python 3.6.5) \label{alg:TimSortMainLoop}}
\end{algorithm}

The idea of \TS is to design a merge sort that can exploit the possible  
``non randomness'' of the data, without having to detect it beforehand and 
without damaging the performances on random-looking data. This follows the ideas 
of adaptive sorting (see~\cite{Mannila1985} for a survey on taking presortedness 
into account when designing and analyzing sorting algorithms).

The first feature of \TS is to work on the natural decomposition of the input  
sequence into maximal runs. In order to get larger subsequences, \TS allows both 
nondecreasing and decreasing runs, unlike most merge sort algorithms. 

Then, the merging strategy of \TS~(Algorithm~\ref{alg:TimSortMainLoop}) is quite simple yet very efficient. The runs are considered in the order  
given by the run decomposition and successively pushed onto a stack. 
If some conditions on the size of the topmost runs of the stack are not satisfied after a new run has been pushed, this can trigger a series of merges between pairs of runs at the top or right under. 
And at the end, when all the runs in the initial decomposition 
have been pushed, the last operation is to merge the remaining runs two by two, starting
at the top of the stack, to get a sorted sequence.
The conditions on the stack and the merging rules are implemented in the  
subroutine called~{\tt merge\_collapse} detailed in Algorithm~\ref{alg:merge_collapse}. 
This is what we consider to be \TS core mechanism and this is the main focus of our analysis. 

\begin{algorithm}[t]
\begin{small}
\SetArgSty{texttt}
\DontPrintSemicolon
\SetKwInOut{Input}{Input}
\Input{A stack of runs $\runstack$}
\KwResult{The invariant of Equations~\eqref{eq:inv1} and~\eqref{eq:inv2} is established.}
\BlankLine
\SetKwInput{KwData}{Note}
\KwData{The runs on the stack are denoted by $\runstack[1]\dots\runstack[\height(\runstack)]$, from top to bottom. The length of run $\runstack$[i] is denoted by $\rs_i$. The blue highlight indicates that the condition was not present in the original version of \TS (this will be discussed in section~\ref{sec:java}).}
\BlankLine
\While{$\height(\runstack)>1$}{
  $n \gets \height(\runstack) - 2$ 
    \BlankLine
  \If{
  ($n > 0$ and $\rs_{3}\leqslant \rs_{2} + \rs_{1}$)
  \textcolor{blue}{ or ($n > 1$ and $\rs_{4} \leqslant \rs_{3} + \rs_{2}$)}\,\label{algline:new_cond}}{
    \If{$\rs_{3} < \rs_{1}$}{
      merge runs $\runstack$[2] and $\runstack$[3] on the stack\;
    }
    \lElse{merge runs $\runstack$[1] and $\runstack$[2] on the stack}
  }
  \ElseIf{$\rs_{2} \leqslant \rs_{1}$}{
    merge runs $\runstack$[1] and $\runstack$[2] on the stack
  }
  \lElse{
    break
  }
}
\end{small}
\caption{The {\tt merge\_collapse} procedure \hfill(Python 3.6.5)\label{alg:merge_collapse}}
\end{algorithm}

Another strength of \TS is the use of many effective heuristics to save
time, such as ensuring that the initial runs are not to small thanks to an insertion sort 
or using a special technique called ``galloping'' to optimize the merges. 
However, this does not interfere with our analysis and we will not discuss this matter any further.

Let us have a closer look at Algorithm~\ref{alg:merge_collapse} which is a pseudo-code transcription
of the {\tt merge\_collapse} procedure found in the latest version of Python (3.6.5). 
To illustrate its mechanism, an example of execution of the main loop of \TS (lines~\ref*{algline:begin_loop}-\ref*{algline:end_loop} of Algorithm~\ref{alg:TimSortMainLoop}) is given in Figure~\ref{fig:ts-python-exec}. 
As stated in its note~\cite{Peters2015}, Tim Peter's idea was that: 
\begin{quote}
``The thrust of these rules when they trigger merging is to balance the run
lengths as closely as possible, while keeping a low bound on the number of
runs we have to remember.''    
\end{quote}
To achieve this, the merging conditions of {\tt merge\_collapse} are designed  
to ensure that the following invariant\footnote{Actually, in~\cite{Peters2015}, the invariant is only stated for the 3 topmost runs of the stack.} is true at the end of the procedure:
\begin{eqnarray}
    \rs_{i+2}&>&\rs_{i+1}+\rs_{i}, \label{eq:inv1}\\
    \rs_{i+1}&>&\rs_{i}.\label{eq:inv2}
\end{eqnarray}
This means that the runs lengths $\rs_i$ 
on the  stack grow at least as fast as the Fibonacci numbers and, therefore, 
that the height of the stack stays logarithmic (see Lemma~\ref{lm:h-is-small}, section~\ref{sec:analysis1}). 

Note that the bound on the height of the stack is not enough to justify the $\O(n\log n)$ running time of \TS. Indeed, without the smart strategy used to merge the runs ``on the fly'', it is easy to build an example using a stack containing at most two runs and that gives a $\Theta(n^2)$ complexity: just assume that all runs have size two, push them one by one onto a stack and perform a merge each time there are two runs in the stack. 

We are now ready to proceed with the analysis of \TS complexity. As mentioned earlier, Algorithm~\ref{alg:merge_collapse} does not correspond to the first implementation of \TS in Python, nor to the current one in Java, but to the latest Python version. The original version will be discussed in details later, in Section~\ref{sec:java}.   

\begin{figure}[t]
\begin{center}
\begin{small}
\setlength{\tabcolsep}{1pt}
\begin{tabular}{ccccccccccccccc}
\begin{tikzpicture}[every node/.style={align=center,text width=1em}]
\matrix (A) [matrix of nodes,nodes={draw}] { {\bf 24}\\ };
\node [above] at (A.north) {$\#1$};
\end{tikzpicture}
&
\begin{tikzpicture}[every node/.style={align=center,text width=1em}]
\matrix (A) [matrix of nodes,nodes={draw}] { {\bf 18}\\24\\ };
\node [above] at (A.north) {$\#1$};
\end{tikzpicture}
&
\begin{tikzpicture}[every node/.style={align=center,text width=1em}]
\matrix (A) [matrix of nodes,nodes={draw}] { {\bf 50}\\18\\24\\ };
\node [above] at (A.north) {$\#1$};
\end{tikzpicture}
&
\begin{tikzpicture}[every node/.style={align=center,text width=1em}]
\matrix (A) [matrix of nodes,nodes={draw}] { 50\\42\\ };
\node [above] at (A.north) {$\#2$};
\end{tikzpicture}
&
\begin{tikzpicture}[every node/.style={align=center,text width=1em}]
\matrix (A) [matrix of nodes,nodes={draw}] { 92\\ };
\node [above] at (A.north) {$\#3$};
\end{tikzpicture}
&
\begin{tikzpicture}[every node/.style={align=center,text width=1em}]
\matrix (A) [matrix of nodes,nodes={draw}] { {\bf 28}\\92\\ };
\node [above] at (A.north) {$\#1$};
\end{tikzpicture}
&
\begin{tikzpicture}[every node/.style={align=center,text width=1em}]
\matrix (A) [matrix of nodes,nodes={draw}] { {\bf 20}\\28\\92\\ };
\node [above] at (A.north) {$\#1$};
\end{tikzpicture}
&
\begin{tikzpicture}[every node/.style={align=center,text width=1em}]
\matrix (A) [matrix of nodes,nodes={draw}] { {\bf 6}\\20\\28\\92\\ };
\node [above] at (A.north) {$\#1$};
\end{tikzpicture}
&
\begin{tikzpicture}[every node/.style={align=center,text width=1em}]
\matrix (A) [matrix of nodes,nodes={draw}] { {\bf 4}\\6\\20\\28\\92\\ };
\node [above] at (A.north) {$\#1$};
\end{tikzpicture}
&
\begin{tikzpicture}[every node/.style={align=center,text width=1em}]
\matrix (A) [matrix of nodes,nodes={draw}] { {\bf 8}\\4\\6\\20\\28\\92\\ };
\node [above] at (A.north) {$\#1$};
\end{tikzpicture}
&
\begin{tikzpicture}[every node/.style={align=center,text width=1em}]
\matrix (A) [matrix of nodes,nodes={draw}] { 8\\10\\20\\28\\92\\ };
\node [above] at (A.north) {$\#2$};
\end{tikzpicture}
&
\begin{tikzpicture}[every node/.style={align=center,text width=1em}]
\matrix (A) [matrix of nodes,nodes={draw}] { 18\\20\\28\\92\\ };
\node [above] at (A.north) {$\#5$};
\end{tikzpicture}
&
\begin{tikzpicture}[every node/.style={align=center,text width=1em}]
\matrix (A) [matrix of nodes,nodes={draw}] { 38\\28\\92\\ };
\node [above] at (A.north) {$\#4$};
\end{tikzpicture}
&
\begin{tikzpicture}[every node/.style={align=center,text width=1em}]
\matrix (A) [matrix of nodes,nodes={draw}] { 66\\92\\ };
\node [above] at (A.north) {$\#3$};
\end{tikzpicture}
&
\begin{tikzpicture}[every node/.style={align=center,text width=1em}]
\matrix (A) [matrix of nodes,nodes={draw}] { {\bf 1}\\66\\92\\ };
\node [above] at (A.north) {$\#1$};
\end{tikzpicture}
\end{tabular}
\begin{tikzpicture}
\draw[decorate,thick,decoration={brace,amplitude=4pt,mirror}] (-3.8,0.2) -- (-2.3,0.2); 
\draw[decorate,thick,decoration={brace,amplitude=4pt,mirror}] (2.0,0.2) -- (5.3,0.2); 
\draw (-3.05,-0.1) node{\scriptsize {\tt merge\_collapse}};
\draw (3.65,-0.1) node{\scriptsize {\tt merge\_collapse}};
\node[text width=\textwidth]{};
\end{tikzpicture}
\end{small}
\end{center}
\vspace{-6mm}
\caption{ The successive states of the stack~$\runstack$ (the values are the  
lengths of the runs) during an execution of the main loop of \TS 
(Algorithm~\ref{alg:TimSortMainLoop}), with the lengths of the runs in 
$\rundecomp$ being $(24, 18, 50, 28, 20, 6, 4, 8, 1)$. 
The label $\#1$ indicates that a run has just been pushed onto the stack. 
The other labels refer to the different merges cases of {\tt merge\_collapse} as translated in Algorithm~\ref{alg:TS translated}.
\label{fig:ts-python-exec}}
\end{figure}

%%%%%%%%%%%%%%%%%%%%%%%%%%%%%%%%%%%%%%%%%%%%%%%%%%%%%%%%%%%%%%%%%%%%%%%%%%%%%%%
\section{TimSort runs in \texorpdfstring{$\O(n\log n)$}{Lg}}\label{sec:analysis1}

At the first release of \TS~\cite{Peters2015}, a time complexity of $\O(n\log n)$ was 
announced with no element of proof given. It seemed to remain unproved until 
our recent preprint~\cite{AuNiPi15}, where we provide a confirmation of this fact, using 
a proof which is not difficult but a bit tedious. This result was  refined later in~\cite{BuKno18}, where the authors provide lower and upper bounds, including explicit multiplicative constants, for different merge sort algorithms.

Our main concern is to provide an insightful proof of the complexity of \TS, in order to highlight how well designed is the strategy used to choose the order in which the merges are performed.
The present section is more detailed than the following ones as we want it to be
self-contained once \TS has been translated into Algorithm~\ref{alg:TS translated} (see below).

\begin{algorithm}[t]
\begin{small}
\SetArgSty{texttt}
\DontPrintSemicolon
\SetKwInOut{Input}{Input}
\Input{A sequence to $S$ to sort}
\KwResult{The sequence $S$ is sorted into a single run,  which remains on the 
stack.}
\SetKwInput{KwData}{Note}
\KwData{At any time, we denote the height of the stack $\runstack$ by $h$
and its $i$\textsuperscript{th} top-most run (for $1 \leqslant i 
\leqslant h$)  
by $R_i$. The size of this run is denoted by $r_i$.}
\BlankLine
\BlankLine
$\rundecomp \gets $ the run decomposition of $S$\;
$\runstack \gets $ an empty stack\;
\While(\tcp*[f]{main loop of \TS}){$\rundecomp\neq \emptyset$}{
  remove a run $r$ from $\rundecomp$ and push $r$ onto 
$\runstack$\tcp*[r]{\#1$\;$} 
  \While{\true}{\label{algline:inner_while_translated}
    \lIf{$h \geqslant 3$ and $r_1 > r_3$}{
        merge the runs $R_2$ and $R_3$ \tcp*[f]{\#2}
    }
    \lElseIf{$h \geqslant 2$ and $r_1 \geqslant r_2$}{
        merge the runs $R_1$ and $R_2$ \tcp*[f]{\#3}
    }
    \lElseIf{$h \geqslant 3$ and $r_1+r_2 \geqslant r_3$}{
        merge the runs $R_1$ and $R_2$ \tcp*[f]{\#4}
    }
    \textcolor{blue}{\lElseIf{$h \geqslant 4$ and $r_2+r_3 \geqslant r_4$}{ \label{algline:new_cond_translated}
        merge the runs $R_1$ and $R_2$ \tcp*[f]{\#5}
    }}
    \lElse{break}
  }
}
\lWhile{$h \neq 1$}{
  merge the runs $R_1$ and $R_2$%two top-most runs on the stack $\runstack$
}
\end{small}
\caption{TimSort: translation of Algorithm~\ref{alg:TimSortMainLoop} and Algorithm~\ref{alg:merge_collapse}\label{alg:TS translated}}
\end{algorithm}

As our analysis is about to demonstrate, in terms of worst-case complexity, the good performances of \TS do not rely on the way merges are performed. 
Thus we choose to ignore their many optimizations and consider that merging two runs of lengths~$r$ and~$r'$ requires both $r+r'$ element moves and $r+r'$ element comparisons.
Therefore, to quantify the running time of \TS, we only take into account the number of comparisons performed.

In particular, aiming at computing precise bounds on the running time of \TS, we
follow~\cite{golin1993queue,AuNiPi15,BuKno18,munro2018nearly} and define the
\emph{merge cost} for merging two runs of lengths $r$ and $r'$ as $r + r'$,
i.e., the length of the resulting run. Henceforth, we will identify the time spent for
merging two runs with the merge cost of this merge.

\vfill

\begin{theorem}\label{thm:complexity n + n H}
Let $\C$ be the class of arrays of length $n$, whose run decompositions
consist of $\rho$ monotonic runs of
respective lengths $r_1,\ldots,r_\rho$.
Let $H(p_1,\ldots,p_\rho) = - \sum_{i=1}^\rho p_i \log_2(p_i)$ be the binary Shannon entropy,
and let $\H = H(r_1/n,\ldots,r_\rho/n)$.

The running time of \TS on arrays in $\C$ is
$\mathcal{O}(n + n \H)$. 
\end{theorem}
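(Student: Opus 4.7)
The plan is to express the total running time as a weighted sum
\[
T(S) \;=\; \sum_{i=1}^{\rho} r_i\, m_i,
\]
where $m_i$ counts the number of merges in which an element originally lying in the $i$-th run of $S$ takes part. The identity is immediate from the cost model used in this section: a merge of runs of lengths $r$ and $r'$ costs exactly $r+r'$, and distributes one unit of cost to each of the elements it touches. The whole theorem then reduces to a per-run bound of the form $m_i \le \alpha \log_2(n/r_i) + \beta$ for some absolute constants $\alpha,\beta$; summing it weighted by $r_i$ yields
\[
T(S) \;\le\; \alpha\, n\mathcal{H} \;+\; \beta\, n \;=\; \mathcal{O}(n + n\mathcal{H}),
\]
which is the claimed bound.

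To establish the per-run bound I would follow the evolution of the run~$R$ containing a chosen element~$x$, exploiting the Fibonacci-like invariants \eqref{eq:inv1}--\eqref{eq:inv2} that \texttt{merge\_collapse} restores at the end of every iteration of the outer loop of Algorithm~\ref{alg:TimSortMainLoop}. These invariants force the run lengths on the stack, read from top to bottom, to grow at least as fast as the Fibonacci sequence, which in particular pins the stack height at $\mathcal{O}(\log n)$ (the ``Lemma~h-is-small'' referred to just before the theorem statement). More importantly, they entail that when any of the four merges \#2--\#5 of Algorithm~\ref{alg:TS translated} fires, the two runs being combined have lengths within a universal constant factor of each other, so that the length of~$R$ is multiplied by at least some factor~$\gamma > 1$ whenever $R$ participates in a merge. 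Since the length of~$R$ cannot exceed~$n$, this gives $m_i \le \log_\gamma(n/r_i) + \mathcal{O}(1)$, as needed.

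I expect the main obstacle to be the gap between the invariants as they are stated and the intermediate stack configurations met \emph{inside} the body of \texttt{merge\_collapse}, which by design momentarily violate them. Bridging this gap requires a careful case analysis of the four merge cases \#2--\#5, deriving the bounded length-ratio claim in each case by invoking the invariants that held at the previous stable configuration, together with the inequality triggering the case itself (e.g.\ case~\#2 uses $r_1 > r_3$ combined with the old invariant $r_3 > r_2 + r_1$-\textsf{no-longer-holding} to control $r_2/r_3$, and analogous arguments handle \#3, \#4, and the new case~\#5). A last technical point is the final \texttt{merge\_force\_collapse} phase: since the stack still satisfies \eqref{eq:inv1}--\eqref{eq:inv2} at the start of this phase, the Fibonacci growth makes the costs of its cascading merges telescope to $\mathcal{O}(n+n\mathcal{H})$, and this contribution fits within the global budget.
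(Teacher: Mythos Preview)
Your high-level plan---writing the total cost as $\sum_i r_i m_i$ and bounding each $m_i$ by $\alpha\log_2(n/r_i)+\beta$---is the right shape, and it is essentially what the paper ends up proving. The gap lies in the mechanism you invoke to bound $m_i$. You claim that in each of the merge cases \#2--\#5 the two runs being combined have lengths within a universal constant factor of each other, so that the run containing $x$ grows geometrically. This is false. Concretely, take the stable stack $(1,2,1000)$ (produced by pushing $1000,2,1$) and push a run of size $999$. Case~\#2 first merges the runs of sizes $1$ and $2$, leaving $(999,3,1000)$; then Case~\#3 fires and merges runs of sizes $999$ and $3$, a ratio of $333:1$. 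By scaling, the ratio at this first ending-sequence merge can be made arbitrarily large, so no universal $\gamma>1$ exists, and the inequality $m_i\le\log_\gamma(n/r_i)+\mathcal{O}(1)$ cannot be obtained this way.

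The paper circumvents exactly this obstacle by \emph{not} relying on per-merge balance. It first splits each iteration of the outer loop into a \emph{starting sequence} (the push \#1 followed by a maximal block of \#2's) and an \emph{ending sequence} (everything after the first \#3/\#4/\#5). The starting sequences are handled globally: their total merge cost is $\mathcal{O}(n)$, because the runs being merged there sit inside a stable stack and hence have geometrically decreasing sizes (Corollary on stable-stack growth), so the cost telescopes against the size $r$ of the freshly pushed run. For the ending sequences the paper uses a token argument keyed not on run length but on \emph{height in the stack}: an element receives $\mathcal{O}(1)$ tokens each time its height decreases during an ending sequence, and these tokens suffice to pay for every ending-sequence merge (the unbalanced Case~\#3 is paid entirely by the larger side $R_1$, whose height drops). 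Since height never increases and is at most $4+2\log_2(n/r)$ right after the starting sequence of $R$ (this is Lemma~\ref{lm:h-is-small}), the total number of tokens ever credited to an element of $R$ is $\mathcal{O}(\log(n/r))$, which is your desired bound on $m_i$---but obtained via stack height rather than via run growth. Your sketch conflates these two quantities; the paper's argument shows they are not interchangeable, and that the starting/ending split is not just a convenience but is needed to isolate the $\mathcal{O}(n)$ contribution of the cascades of \#2's that your per-merge balance claim cannot absorb.
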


From this result, we easily deduce the following complexity bound on \TS,
which is less precise but more simple.

\begin{theorem}\label{thm:complexity n log n}
The running time of \TS on arrays of length $n$ that consist of $\rho$ monotonic runs is
$\O(n + n\log \rho)$, and therefore $\O(n \log n)$.
\end{theorem}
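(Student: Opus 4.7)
The plan is to obtain Theorem~\ref{thm:complexity n log n} as a direct corollary of the sharper bound already established in Theorem~\ref{thm:complexity n + n H}. The key ingredient I need is the standard fact from information theory that, among all probability distributions on $\rho$ atoms, the Shannon entropy is maximized by the uniform distribution, so that
\[
H(p_1,\ldots,p_\rho) \leqslant \log_2 \rho
\]
for any probability vector $(p_1,\ldots,p_\rho)$. This is immediate from Jensen's inequality applied to the concave function $-x\log_2 x$, or equivalently from the non-negativity of the Kullback--Leibler divergence between $(p_1,\ldots,p_\rho)$ and the uniform distribution on $\rho$ atoms.

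Specializing this inequality to $p_i = r_i/n$ yields $\H \leqslant \log_2 \rho$. Plugging this into the running-time estimate of Theorem~\ref{thm:complexity n + n H}, I immediately obtain the first claim, namely that \TS runs in time $\O(n + n\log\rho)$ on inputs with $\rho$ monotonic runs.

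For the $\O(n\log n)$ bound, the plan is then to use the trivial inequality $\rho \leqslant n$ (each monotonic run has length at least one), which yields $\log\rho \leqslant \log n$ and therefore $n + n\log\rho \leqslant n(1+\log_2 n) = \O(n\log n)$. There is no real obstacle here: the analytical work sits entirely inside Theorem~\ref{thm:complexity n + n H}, and this statement amounts to bookkeeping around the maximum-entropy bound for distributions on a finite alphabet.
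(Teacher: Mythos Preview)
Your proof is correct and follows essentially the same route as the paper: both derive the bound $\H \leqslant \log_2 \rho$ from the concavity of $x \mapsto -x\log_2 x$ (Jensen's inequality / maximum-entropy principle), apply Theorem~\ref{thm:complexity n + n H}, and then use $\rho \leqslant n$ to reach $\O(n\log n)$.
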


\begin{proof}
The function $f : x \mapsto - x \ln(x)$ is concave on the interval $\mathbb{R}_{>0}$ of positive real numbers,
since its second derivative is $f''(x) = - 1/x$. Hence, when $p_1,\ldots,p_\rho$ are
positive real numbers that sum up to one, we have
$H(p_1,\ldots,p_\rho) = {\textstyle\sum_{i=1}^\rho f(p_i)/\ln(2)} \leqslant \rho f(1/\rho)/\ln(2) = \log_2(\rho)$.
In particular, this means that $\H \leqslant \log_2(\rho)$, and therefore that
\TS runs in time $\O(n + n\log \rho)$.
Since $\rho \leqslant n$, it further follows that $\O(n + n\log \rho) \subseteq \O(n + n\log n) = \O(n \log n)$, which completes the proof.
\end{proof}

Before proving Theorem~\ref{thm:complexity n + n H},
we first show that it is optimal up to a multiplicative constant,
by recalling the following variant of a result from~\cite[Theorem~2]{BaNa13}.

\begin{proposition}\label{proposition:optimality}
For every algorithm comparing only pairs of elements,
there exists an array in the class $\C$ whose sorting
requires at least $n \H - 3 n$ element comparisons.
\end{proposition}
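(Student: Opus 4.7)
The plan is to use the standard information-theoretic lower bound. Restricting attention to inputs in $\C$ that are permutations of $\{1,\ldots,n\}$, any comparison-based sorting algorithm corresponds to a binary decision tree that must reach a distinct leaf on each distinct permutation; thus its worst-case number of comparisons on $\C$ is at least $\log_2 |\C|$. The task therefore reduces to showing $\log_2 |\C| \geq n\H - 3n$.

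To lower-bound $|\C|$, I would start from the $\binom{n}{r_1,\ldots,r_\rho}$ ordered partitions $(A_1,\ldots,A_\rho)$ of $\{1,\ldots,n\}$ with $|A_i| = r_i$. Sorting each $A_i$ in increasing order and concatenating yields a permutation whose descents lie only at block boundaries, and whose TimSort run decomposition coincides with $(r_1,\ldots,r_\rho)$ as soon as every boundary is an actual descent, i.e., $\max A_i > \min A_{i+1}$ for every $i < \rho$. I would then describe a swap-based injection which, at each "missing descent", exchanges $\max A_i$ with $\min A_{i+1}$ and records the corrected boundaries in a bitstring of length $\rho - 1$; decoding the bitstring reverses the swaps one by one. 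This establishes $|\C| \geq 2^{-(\rho-1)} \binom{n}{r_1,\ldots,r_\rho}$.

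The final step applies Stirling's formula, which gives
\[
\log_2 \binom{n}{r_1, \ldots, r_\rho} \;=\; n\H \;-\; \tfrac{1}{2}\sum_{i=1}^\rho \log_2 r_i \;+\; O(\rho + \log n).
\]
By the AM--GM inequality, $\sum_i \log_2 r_i \leq \rho \log_2(n/\rho)$, whose maximum over $\rho \in [1,n]$ is $n\log_2(e)/e \leq 0.55\,n$. Adding the $\rho - 1 \leq n$ loss from the injection and absorbing the remaining $O(\rho + \log n)$ into the linear error, one obtains $\log_2 |\C| \geq n\H - 3n$ for $n$ large enough, the statement being vacuous for small $n$.

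The main obstacle is the combinatorial step verifying that the swap injection is well-defined and genuinely injective. The events "boundary $i$ is a descent" are not independent across $i$, so neither a product bound nor a union bound on descents applies. Worse, when an intermediate block has size one, a swap at one boundary can flip its neighbour from a descent to an ascent; one must therefore either prescribe a careful swap order, or package the modified partition together with the bitstring of corrections in such a way that the original partition can always be reconstructed uniquely. All other steps reduce to routine, if careful, Stirling-type estimates.
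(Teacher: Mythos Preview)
Your approach is essentially identical to the paper's: both reduce to $\log_2|\C|$, lower-bound $|\C|$ by $2^{-(\rho-1)}\binom{n}{r_1,\ldots,r_\rho}$ via a swap at each non-descent boundary, and finish with Stirling plus the concavity bound $\sum_i \log_2 r_i \leqslant \rho\log_2(n/\rho) \leqslant (n/e)\log_2 e$.

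The obstacle you flag is not real: by construction of the greedy run decomposition, every run except possibly the last has length at least $2$, so $r_1,\ldots,r_{\rho-1}\geqslant 2$. The paper invokes exactly this fact. With it, the swap at boundary $i$ (moving $\max\pi_i$ into $\pi_{i+1}$ and $\min\pi_{i+1}$ into $\pi_i$) never changes $\min\pi_i$ nor $\max\pi_{i+1}$, so processing boundaries left to right neither destroys an earlier descent nor interferes with the next swap; reversibility from the bitstring is then immediate.
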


\begin{proof}
In the comparison model, at least $\log_2(|\C|)$ element comparisons
are required for sorting all arrays in $\C$.
Hence, we prove below that $\log_2(|\C|) \geqslant n \H - 3 n$.

Let $\pi = (\pi_1,\ldots,\pi_\rho)$ be a partition of the set $\{1,\ldots,n\}$
into $\rho$ subsets of respective sizes $r_1,\ldots,r_\rho$;
we say that $\pi$ is \emph{nice} if $\max \pi_i > \min \pi_{i+1}$ for all $i \leqslant \rho-1$.
Let us denote by $\mathcal{P}$ the set of partitions $\pi$ of $\{1,\ldots,n\}$ such that $|\pi_i| = r_i$
for all $i \leqslant \rho$, and by $\mathcal{N}$ the set of nice partitions.

Let us transform every partition $\pi \in \mathcal{P}$ into a nice partition as follows.
First, by construction of the run decomposition of an array, we know that $r_1,\ldots,r_{\rho-1} \geqslant 2$,
and therefore that $\min \pi_i < \max \pi_i$ for all $i \leqslant \rho-1$.
Then, for all $i \leqslant \rho-1$, if $\max \pi_i < \min \pi_{i+1}$, we exchange the partitions to which belong $\max \pi_i$ and $\min \pi_{i+1}$, i.e., we move $\max \pi_i$ from the set $\pi_i$ to $\pi_{i+1}$,
and $\min \pi_{i+1}$ from $\pi_{i+1}$ to $\pi_i$. Let $\pi^\ast$ be the partition obtained after
these exchanges have been performed.

Observe that $\pi^\ast$ is nice, and that at most $2^{\rho-1}$ partitions $\pi \in \mathcal{P}$
can be transformed into $\pi^\ast$. This proves that $2^{\rho-1}|\mathcal{N}| \geqslant |\mathcal{P}|$.
Let us further identify every nice partition $\pi^\ast$ with an array in $\C$,
which starts with the elements of $\pi^\ast_1$ (listed in increasing order),
then of $\pi^\ast_2, \ldots, \pi^\ast_\rho$.
We thereby define an injective map from $\mathcal{N}$ to $\C$,
which proves that $|\C| \geqslant |\mathcal{N}|$.

Finally, variants of the Stirling formula indicate that
$(k/e)^k \leqslant k! \leqslant e \sqrt{k} (k/e)^k$ for all $k \geqslant 1$.
This proves that
\begin{align*}
\log_2(|\C|) & \geqslant \log_2(|\C|) \geqslant (1 - \rho) + \log_2(|\mathcal{P}|) \\
& \geqslant (1 - \rho) + n \log_2(n) - \rho \log_2(e) - {\textstyle\sum_{i=1}^\rho} (r_i+1/2) \log_2(r_i) \\
& \geqslant n \H + (1 - \rho - \rho \log_2(e)) - 1/2 {\textstyle\sum_{i=1}^\rho} \log_2(r_i).
\end{align*}
By concavity of the function $x \mapsto \log_2(x)$, it follows that
$\textstyle\sum_{i=1}^\rho \log_2(r_i) \leqslant \rho \log_2(n/\rho)$.
One checks easily that the function $x \mapsto x \log_2(n/x)$ takes its maximum value at $x = n/e$,
and since $n \geqslant \rho$, we conclude that
$\log_2(|\C|) \geqslant n \H - (1+\log_2(e)+\log_2(e)/e) n \geqslant n \H - 3 n$.
\end{proof}

We focus now on proving Theorem~\ref{thm:complexity n + n H}.
The first step consists in rewriting Algorithm~\ref{alg:TimSortMainLoop} and 
Algorithm~\ref{alg:merge_collapse} in a form that is easier to deal with. This is done in Algorithm~\ref{alg:TS translated}. 

\begin{claim}
For any input, Algorithms~\ref{alg:TimSortMainLoop} and~\ref{alg:TS translated} perform the same comparisons. 
\end{claim}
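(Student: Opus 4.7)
The plan is to observe that Algorithms~\ref{alg:TimSortMainLoop} and~\ref{alg:TS translated} share the same skeleton: both compute the same natural run decomposition of~$S$, push its runs in the same order, and terminate with the same ``force-collapse'' phase that repeatedly merges the two topmost runs until the stack has height one. The only place where their behaviour might diverge lies between two successive pushes, where Algorithm~\ref{alg:TimSortMainLoop} invokes \texttt{merge\_collapse} (Algorithm~\ref{alg:merge_collapse}) while Algorithm~\ref{alg:TS translated} runs its inner \textbf{while true} loop. Both are deterministic functions of the current stack of run lengths, so it suffices to prove that, on every possible stack state, the two procedures take the same elementary action---either merge the same adjacent pair of runs, or stop.

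I would carry out this verification by a case analysis on the stack height~$h$ and on the order of the top run lengths $r_1,r_2,r_3,r_4$. For $h=1$ both procedures exit immediately. For $h=2$ the variable $n = h-2$ equals~$0$, so both disjuncts guarding the first \textbf{if} of Algorithm~\ref{alg:merge_collapse} are vacuous and only the fallback \textbf{elif} $r_2 \leqslant r_1$ can fire, matching \#3 of Algorithm~\ref{alg:TS translated}. For $h=3$ only the first disjunct $r_3 \leqslant r_1+r_2$ can be active, and splitting on $r_3 < r_1$ versus $r_3 \geqslant r_1$ (and, in the latter sub-case, on whether $r_3 \leqslant r_1+r_2$ or $r_2 \leqslant r_1$ holds) matches exactly \#2, \#3 and~\#4. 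The real work is the case $h \geqslant 4$: one has to check that the disjunction $(r_3 \leqslant r_1+r_2)$ or $(r_4 \leqslant r_2+r_3)$, combined with the inner test $r_3 < r_1$, is equivalent to the ordered cascade \#2, \#3, \#4, \#5 of Algorithm~\ref{alg:TS translated}. The key observations that handle the subcases are: when $r_1 > r_3$ the outer disjunction is automatic (since $r_3 < r_1 \leqslant r_1+r_2$) and the sub-test selects $(R_2,R_3)$, matching \#2; when $r_1 \leqslant r_3$ and $r_2 \leqslant r_1$, Algorithm~\ref{alg:merge_collapse} merges $(R_1,R_2)$ through the main branch or, if the outer disjunction fails, through the fallback \textbf{elif} $r_2 \leqslant r_1$, matching \#3; and when $r_1 \leqslant r_3$ and $r_1 < r_2$, both algorithms merge $(R_1,R_2)$ exactly when $r_3 \leqslant r_1+r_2$ or $r_4 \leqslant r_2+r_3$, matching \#4 and~\#5, and otherwise break.

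To conclude, I would note that the element comparisons performed by a single merge depend only on the two runs being merged, not on when the merge is scheduled. Since the above shows that both algorithms issue the same sequence of merges on the same pairs of runs for every input, they perform identical sequences of comparisons. The only real obstacle is the bookkeeping in the $h \geqslant 4$ case: exhaustively enumerating the sign patterns of the four inequalities $r_3$ vs.\ $r_1$, $r_2$ vs.\ $r_1$, $r_3$ vs.\ $r_1+r_2$ and $r_4$ vs.\ $r_2+r_3$ is tedious but purely mechanical, and the three headline observations above collapse it into a short argument.
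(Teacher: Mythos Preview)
Your proposal is correct and follows essentially the same approach as the paper: a case analysis on the current stack showing that \texttt{merge\_collapse} and the inner \textbf{while} loop of Algorithm~\ref{alg:TS translated} always take the same action (same merge or same break). The paper's version is terser---it splits only on $r_3 < r_1$ versus $r_3 \geqslant r_1$ and absorbs the height checks implicitly---whereas you stratify first by $h$ and then by the inequalities, but the underlying argument and the key observation (that $r_1 > r_3$ already forces the outer disjunction of Algorithm~\ref{alg:merge_collapse}) are identical.
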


\begin{proof}
The only difference is that Algorithm~\ref{alg:merge_collapse} was changed into the \texttt{while} loop of lines~5 to~10 in Algorithm~\ref{alg:TS translated}.
Observing the different cases, it is straightforward to verify that merges involving the same runs take place in the same order in both algorithms.
Indeed, if $r_3 < r_1$, then $r_3 \leqslant r_1 + r_2$, and therefore line 5 is triggered in Algorithm~\ref{alg:merge_collapse}, so that
both algorithms merge the $2$\textsuperscript{nd} and $3$\textsuperscript{rd} runs.
On the contrary, if $r_3 \geqslant r_1$, then both algorithms merge the $1$\textsuperscript{st} and $2$\textsuperscript{nd} runs if and only if
$r_2 \leqslant r_1$ or $r_3 \leqslant r_1 + r_2$ ({\color{blue}or $r_4 \leqslant r_2 + r_3$}).
\end{proof}

\begin{remark2}\label{rem:main-loop}
Proving Theorem~\ref{thm:complexity n log n} only requires
analyzing the \emph{main loop} of the algorithm (lines~3 to 10).
Indeed, computing the run decomposition (line~1) can be done on the fly,
by a greedy algorithm, in time linear in $n$, and the \emph{final loop} (line~11) might be performed in the main loop by adding a fictitious run of length $n+1$ at the end of the decomposition.

In the sequel, for the sake of readability, we also omit
checking that $h$ is large enough to trigger the cases \#2 to \#5.
Once again, such omissions are benign, since
adding fictitious runs of respective lengths $8n$, $4n$, $2n$ and $n$
(in this order) at the beginning of the decomposition
would ensure that $h \geqslant 4$ during the whole loop.
\end{remark2}

% In Algorithm~\ref{alg:TS translated}, we can see that the merges performed during Case \#2 allow a very large run to be pushed and ``absorbed'' onto the stack without being merged all the way down, but by collapsing the stack under this run instead. Meanwhile, the purpose of Cases \#3--\#5 is mainly to re-establish the invariant~\eqref{eq:inv1} and~\eqref{eq:inv2}, ensuring an exponential growth of the run lengths within the stack.
% Along this process, the cost of keeping the stack in good shape is compensated by the absorption of these large runs which naturally calls for an \emph{amortized complexity} analysis.

We sketch now the main steps of our proof, i.e., the amortized analysis of the main loop.
A first step is to establish the invariant~\eqref{eq:inv1} and~\eqref{eq:inv2},
ensuring an exponential growth of the run lengths within the stack.

Elements of the input array are easily identified by their starting position in the array, so we consider them as well-defined and distinct entities (even if they have the same value). 
The \emph{height} of an element in the stack of runs is the number of runs that are
below it in the stack: the elements belonging to the run~$R_i$ in the stack
$\S = (R_1,\ldots, R_h)$ have height $h-i$, and we recall that the length of the run $R_i$ is denoted
by $r_i$.
% To simplify the presentation, we also distinguish two kinds of tokens, the \ctok-tokens and the \stok-tokens, which can both be used to pay for comparisons.

\begin{lemma}\label{lm:invariant I}
At any step during the main loop of \TS, we have~$r_i+r_{i+1}<r_{i+2}$ for all $i \in \{3,\ldots,h-2\}$.
\end{lemma}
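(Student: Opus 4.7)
The plan is to proceed by induction on the number of elementary stack operations (a push of a single run, or a merge of two adjacent runs) performed so far in the main loop of Algorithm~\ref{alg:TS translated}, showing that after each such operation the inequality $r_i+r_{i+1}<r_{i+2}$ holds for every $i\in\{3,\ldots,h-2\}$. The base case is the empty stack, where the index range is itself empty. Appealing to Remark~\ref{rem:main-loop}, I would silently assume $h\geqslant 4$ throughout the loop, so no corner cases due to small $h$ need be spelled out.

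I would first handle merges, which are the easy case: any of the four rules \#2, \#3, \#4, \#5 touches only the two or three topmost runs, yielding a new stack of height $h'=h-1$ with $R'_{i'}=R_{i'+1}$ on the old stack for every $i'\geqslant 3$. The target inequality at a new index $i'\in\{3,\ldots,h'-2\}$ then translates verbatim to the old inequality at index $i'+1\in\{4,\ldots,h-2\}$, which is given by the inductive hypothesis.

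A push is subtler. It inserts a new topmost run and shifts every old run down by one, so $h'=h+1$ and $R'_{i'}=R_{i'-1}$ for $i'\geqslant 2$. The new inequalities at positions $i'\in\{4,\ldots,h'-2\}$ translate to old inequalities at $i'-1\in\{3,\ldots,h-2\}$ and are again delivered by induction. The only genuinely new inequality concerns $i'=3$: it reads $r_2+r_3<r_4$ on the old stack, and is needed precisely when $h'\geqslant 5$. Here I would use that a push occurs only at the top of an outer-loop iteration, so the stack immediately before it is the configuration left by the inner \texttt{while} loop having just broken; that break asserts that rules \#2--\#5 all failed, and in particular the failure of rule \#5 yields exactly the inequality $r_2+r_3<r_4$ that was missing.

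The main obstacle is this very last check: the invariant at the new depth $i'=3$ right after a push hinges on rule \#5 having failed beforehand, and rule \#5 is the blue-highlighted condition introduced in the current Python version. Without it, $r_2+r_3<r_4$ could not be guaranteed going into a push and the induction would silently break at depth $3$; this is essentially the flaw diagnosed in~\cite{GoRoBoBuHa15}. Everything else reduces to careful bookkeeping of the index shifts induced by pushes and merges.
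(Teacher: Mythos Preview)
Your proof is correct and follows essentially the same inductive case analysis as the paper: merges shift indices so that the needed inequalities are inherited from the old stack, while a push requires only the fresh inequality $r_2+r_3<r_4$, supplied by the failure of rule~\#5 at the preceding break. Your write-up is more explicit about the index bookkeeping and about why rule~\#5 is indispensable here, but the underlying argument is identical.
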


\begin{proof}
We proceed by induction. The proof consists in verifying that, if the
invariant holds at some point, then it still holds when an update of the stack occurs in one of the five situations labeled \#1 to \#5 in the algorithm.
This can be done by a straightforward case analysis.
We denote by $\overline{\S}=(\overline{R}_1,\ldots, \overline{R}_{\overline{h}})$ the new state of the stack after the update:
\begin{disjunction}
  \item If Case \#1 just occurred, a new run $\overline{R}_1$ was pushed.
  This implies that none of the conditions of Cases \#2 to \#5 hold in $\S$, otherwise merges would have continued.  In particular, we have $r_2+r_3<r_4$.
  As $\overline{r}_i=r_{i-1}$ for all $i\geqslant 2$, and since the invariant holds for $\S$, it also
  holds for $\overline{\S}$.
  
  \item If one of the Cases \#2 to \#5 just occurred, $\overline{r}_i=r_{i+1}$ for all $i\geqslant 3$. Since the invariant holds for $\S$, it must also hold for $\overline{\S}$.
\end{disjunction}
\end{proof}

\begin{corollary}\label{cor:invariant I'}
During the main loop of \TS, whenever a run is about to be pushed onto the stack, we have
$r_i \leqslant 2^{(i+1-j)/2} r_j$ for all integers $i \leqslant j \leqslant h$.
\end{corollary}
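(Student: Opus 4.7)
The plan is to combine Lemma~\ref{lm:invariant I}, which gives the Fibonacci-like inequality for deep indices $i \geqslant 3$, with the additional structure that is available precisely at the moment a run is about to be pushed onto the stack, and then to iterate.

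First I would exploit the phrasing ``whenever a run is about to be pushed''. At such a moment, the inner \texttt{while} loop of Algorithm~\ref{alg:TS translated} has just exited without triggering any of Cases~\#2--\#5; reading off the negations of their guards (and invoking Remark~\ref{rem:main-loop} to dismiss the height conditions) yields $r_1 < r_2$, $r_1 + r_2 < r_3$, and $r_2 + r_3 < r_4$. Together with Lemma~\ref{lm:invariant I}, this upgrades the inequality $r_k + r_{k+1} < r_{k+2}$ from indices $k \geqslant 3$ to every $k \in \{1,\ldots,h-2\}$, and in particular gives $r_k < r_{k+1}$ for every $k \in \{1,\ldots,h-1\}$. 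Hence both invariants~\eqref{eq:inv1} and~\eqref{eq:inv2} hold throughout the stack, not just in its deeper part.

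Second, I would rewrite the desired bound as $r_j \geqslant 2^{(j-i-1)/2}\, r_i$ and prove it by strong induction on $n = j - i$. The cases $n = 0$ (trivial, since $2^{-1/2} \leqslant 1$) and $n = 1$ (which is exactly $r_{i+1} \geqslant r_i$, established in the previous paragraph) form the base. For $n \geqslant 2$, the enlarged Fibonacci invariant at index $j - 2$ yields $r_j > r_{j-2} + r_{j-1} \geqslant 2\, r_{j-2}$, since $r_{j-1} \geqslant r_{j-2}$ by invariant~\eqref{eq:inv2}. Applying the induction hypothesis to the pair $(i, j-2)$ then gives $r_j > 2 \cdot 2^{(j-i-3)/2}\, r_i = 2^{(j-i-1)/2}\, r_i$, which is exactly what we need.

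The only delicate point, and in my view the main obstacle, is the first step. The shallow part of the invariant (the inequalities involving $r_1, r_2, r_3, r_4$) is generally \emph{not} preserved during the execution of the inner while loop: right after an intermediate merge in Case~\#2, \#3, \#4 or \#5, it may well fail. Consequently, the hypothesis ``about to be pushed'' (equivalently, ``the inner loop has just terminated'') cannot be weakened to ``at some arbitrary point of the main loop'', and I would make sure to highlight this subtlety. Once it is nailed down, the remainder is a routine induction.
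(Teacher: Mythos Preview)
Your proposal is correct and follows essentially the same approach as the paper: first use the fact that none of Cases~\#2--\#5 hold to extend the Fibonacci-type inequality $r_k + r_{k+1} < r_{k+2}$ down to $k=1$, then iterate $r_{k+2} \geqslant 2r_k$ to obtain the geometric bound. The paper phrases the iteration as $r_i \leqslant 2^{-k} r_{i+2k} \leqslant 2^{-k} r_{i+2k+1}$ rather than as a formal induction on $j-i$, but this is purely cosmetic.
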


\begin{proof}
Since a run is about to be pushed, none of the conditions of Cases \#2 to \#5 hold in the stack $\S$.
Hence, we have $r_1 < r_2$, $r_1 + r_2 < r_3$ and $r_2 + r_3 < r_4$, and Lemma~\ref{lm:invariant I}
further proves that $r_i + r_{i+1} < r_{i+2}$ for all $i \in \{3,\ldots,h-2\}$.
In particular, for all $i \leqslant h-2$, we have $r_i < r_{i+1}$, and thus
$2 r_i \leqslant r_i + r_{i+1} \leqslant r_{i+2}$.
It follows immediately that $r_i \leqslant 2^{-k} r_{i+2k} \leqslant 2^{-k} r_{i+2k+1}$
for all integers $k \geqslant 0$, which is exactly the statement of Corollary~\ref{cor:invariant I'}.
\end{proof}

Corollary~\ref{cor:invariant I'} will be crucial in proving that the main loop of \TS can be performed
for a merge cost $\O(n + n \H)$. However, we do not prove this upper bound directly.
Instead, we need to distinguish several situations that may occur within the main loop.

Consider the sequence of Cases \#1 to \#5 triggered during the execution of the main loop of \TS.
It can be seen as a word on the alphabet $\{\#1,\ldots,\#5\}$ that starts with \#1, which completely encodes the execution of the algorithm. 
We split this word at every \#1, so that each piece corresponds to an iteration of the main loop. 
Those pieces are in turn split into two parts, at the first occurrence of a symbol \#3, \#4 or \#5.
The first half is called a \emph{starting sequence} and is made of a \#1 followed by the maximal number of \#2's. 
The second half is called an \emph{ending sequence}, it starts with \#3, \#4 or \#5 (or is empty) and it contains no occurrence of \#1 (see Figure~\ref{fig:sequence} for an example).

\begin{figure}[H]
\smallskip
\centerline{$
\underbrace{\#1\;\#2\;\#2\;\#2}_{\text{starting seq.}}
~~\underbrace{\#3\;\#2\;\#5\;\#2\;\#4\;\#2}_{\text{ending seq.}}
~~~\underbrace{\#1\;\#2\;\#2\;\#2\;\#2\;\#2}_{\text{starting seq.}}
~~\underbrace{\#5\;\#2\;\#3\;\#3\;\#4\;\#2}_{\text{ending seq.}}
$}
\caption{The decomposition of the encoding of an execution into starting and ending sequences.\label{fig:sequence}}
\end{figure}

We bound the merge cost of starting sequences first, and will deal with ending sequences afterwards.

\begin{lemma}\label{lm:starting}
The cost of all merges performed during the starting sequences is~$\O(n)$.
\end{lemma}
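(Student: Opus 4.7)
The plan is to show that the merge cost of any \emph{single} starting sequence is $O(s)$, where $s$ is the length of the run pushed at its initial case~\#1. Since each starting sequence corresponds to a distinct run of the input decomposition and these runs have total length $n$, summing this bound yields the claimed $O(n)$.

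Fix a starting sequence and let $a_1,a_2,\ldots$ denote the run lengths of the stack (from top to bottom) \emph{immediately before} the push, so that right after the push the stack reads $(s,a_1,a_2,\ldots)$. Since case~\#2 merges $R_2$ and $R_3$ while leaving $R_1$ untouched, a straightforward induction shows that, just before the $i$-th \#2-merge ($1\leq i\leq k$, with $k\geq 0$ the number of \#2 steps), the stack reads $(s,a_1+\cdots+a_i,a_{i+1},a_{i+2},\ldots)$, so this merge incurs cost $a_1+\cdots+a_{i+1}$. Summing gives
\[
T \;=\; \sum_{i=1}^{k}\sum_{j=1}^{i+1} a_j \;=\; k\,a_1 \;+\; \sum_{j=2}^{k+1}(k-j+2)\,a_j.
\]

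Two ingredients bound $T$. First, the triggering condition of case~\#2 at the $i$-th merge, $r_1>r_3$, translates to $s>a_{i+1}$; hence $a_j<s$ for every $j\in\{2,\ldots,k+1\}$. Second, Corollary~\ref{cor:invariant I'} applies to the old stack (a run is about to be pushed), and yields both $a_{j+2}\geq 2a_j$ and $a_j<a_{j+1}$ throughout. Combining these gives the geometric estimates $a_{k+1-m}\leq s/2^{\lfloor m/2\rfloor}$ for $0\leq m\leq k-1$ and $a_1\leq s/2^{\lfloor k/2\rfloor}$.

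Substituting $m=k+1-j$ in the expression for $T$ and using these bounds,
\[
T \;\leq\; \frac{k\,s}{2^{\lfloor k/2\rfloor}} \;+\; s\sum_{m=0}^{k-1}(m+1)\,2^{-\lfloor m/2\rfloor} \;\leq\; C\,s,
\]
where $C$ is an absolute constant: the quantity $k/2^{\lfloor k/2\rfloor}$ is bounded and the series $\sum_{m\geq 0}(m+1)\,2^{-\lfloor m/2\rfloor}$ converges geometrically (splitting even/odd $m$ gives $\sum_{\ell\geq 0}(4\ell+3)/2^{\ell} = 14$). Summing $T$ over all starting sequences bounds the total by $C\sum s = Cn = O(n)$. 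The only mild obstacle is the bookkeeping: matching the growing coefficients $k-j+2$ against the geometric growth of $a_j$ from the bottom up, plus the separate control of $k\,a_1$ using that $a_1$ lies the furthest from the largest run and therefore enjoys the strongest geometric gain.
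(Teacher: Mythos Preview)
Your proof is correct and follows essentially the same approach as the paper: bound the cost of each starting sequence by a constant times the size $s$ of the pushed run, using the geometric growth of the pre-push stack (from Corollary~\ref{cor:invariant I'}) together with the inequality $s>a_{k+1}$ coming from the last \#2, then sum over all runs. The only cosmetic differences are your indexing ($k$ counts \#2's rather than total letters), your separate handling of the $k\,a_1$ term, and your use of the doubling $a_{j+2}\geq 2a_j$ (which is established inside the proof of Corollary~\ref{cor:invariant I'} rather than in its statement) in place of the paper's direct application of the Corollary's bound $r_i\leq 2^{(i+1-j)/2}r_j$.
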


\begin{proof}
More precisely, for a stack $\S=(R_1,\ldots,R_h)$, we prove that a starting sequence beginning with a push of a run~$R$ of size $r$ onto $\S$ uses at most $\gamma r$ comparisons in total,
where~$\gamma$ is the real constant $2 \sum_{j \geqslant 1} j / 2^{j/2}$. 
After the push, the stack is $\overline{\S} = (R,R_1,\ldots,R_h)$ and,
if the starting sequence contains $k \geqslant 1$ letters, i.e. $k-1$ occurrences of \#2, 
then this sequence amounts to merging the runs $R_1$, $R_2$, \ldots, $R_k$.
Since no merge is performed if $k = 1$, we assume below that $k \geqslant 2$.

More precisely, the total cost of these merges is
\[
C = (k-1)r_1+(k-1)r_2 + (k-2)r_3+\ldots + r_k \leqslant {\textstyle\sum_{i=1}^k} (k+1-i)r_i.
\]
The last occurrence of Case \#2 ensures that $r > r_k$,
hence applying Corollary~\ref{cor:invariant I'} to the stack $\S = (R_1,\ldots,R_h)$ shows that
$r \geqslant r_k \geqslant 2^{(k-1-i)/2} r_i$ for all $i = 1,\ldots,k$.
It follows that
\[
C / r \leqslant {\textstyle\sum_{i=1}^k} (k+1-i)2^{(i+1-k)/2} =
2 {\textstyle\sum_{j=1}^k} j 2^{-j/2} < \gamma.
\]

This concludes the proof, since each run is the beginning of exactly one starting sequence, and the sum of their lengths is $n$.
\end{proof}

Now, we must take care of run merges that take place during ending sequences.
The cost of merging two runs will be taken care of by making run elements pay tokens:
whenever two runs of lengths $r$ and $r'$ are merged, $r+r'$ tokens are paid
(not necessarily by the elements of those runs that are merged).
In order to do so, and to simplify the presentation, we also distinguish two kinds of tokens,
the \ctok-tokens and the \stok-tokens, which can both be used to pay for comparisons.

Two \ctok-tokens and one \stok-token are credited to an element when its run is pushed onto the stack
or when its height later decreases \emph{because of a merge that took place during an ending sequence}:
in the latter case, all the elements of $R_1$ are credited when $R_1$ and $R_2$ are merged, and
all the elements of $R_1$ and $R_2$ are credited when $R_2$ and $R_3$ are merged.
Tokens are spent to pay for comparisons, depending on % \juge{which case of Algorithm~\ref{alg:TS translated} is triggered:}
the case triggered:
\begin{itemize}
\item Case \#2: every element of $R_1$ and $R_2$ pays 1 \ctok.
This is enough to cover the cost of merging $R_2$ and $R_3$, 
because $r_1 > r_3$ in this case, and therefore $r_2+r_1 \geqslant r_2 + r_3$.
\item Case \#3: every element of $R_1$ pays 2 \ctok. In this case $r_1\geqslant r_2$, and the cost is $r_1+r_2 \leqslant 2r_1$.
\item Cases \#4 and \#5: every element of $R_1$ pays 1 \ctok
and every element of $R_2$ pays 1 \stok. The cost $r_1+r_2$ is exactly the number of tokens spent.
\end{itemize}

\begin{lemma}\label{lm:balance}
The balances of \ctok-tokens and \stok-tokens of each element remain non-negative throughout  the main loop of \TS.
\end{lemma}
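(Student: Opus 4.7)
The plan is to track the \ctok-balance and the \stok-balance of each element $e$ separately, showing that each stays non-negative after every operation. Since the token scheme only applies to ending-sequence merges (starting-sequence costs being handled by Lemma~\ref{lm:starting}), I only need to analyze how a single ending-sequence merge updates $e$'s balances, and the initial credit of $2\,\ctok + 1\,\stok$ given at the push of $e$.

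For the \ctok-balance I expect a direct monotonicity argument. Whenever $e$ pays \ctok-tokens it is either in $R_1$ (in any of Cases \#2--\#5) or in $R_2$ (only in Case \#2); in every such scenario the merge causes $e$'s height to strictly decrease, so by the crediting rule $e$ is simultaneously granted $2\,\ctok$. Since the payment is at most $2\,\ctok$ (precisely $2$ in Case \#3, otherwise $1$), the net change per transaction is $\geq 0$. Starting from the $2\,\ctok$ credited on push, the \ctok-balance therefore always remains $\geq 2$.

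The \stok-balance requires more care. Its only decreases come from Cases \#4 and \#5 with $e$ in $R_2$, where $e$ pays $1\,\stok$ but its height is unchanged and it receives no credit. The heart of the proof is the structural claim that \emph{within an ending sequence, no Case \#4 or \#5 merge can be the final merge before the inner loop exits.} Writing $r_1,\ldots,r_h$ for the run sizes just before such a merge and $r'_1,\ldots,r'_{h-1}$ for those just after, one checks: after Case \#4, $r'_1 = r_1+r_2 \geq r_3 = r'_2$, so Case \#3's condition $r'_1 \geq r'_2$ holds; after Case \#5, $r'_1+r'_2 = r_1+r_2+r_3 \geq r_4 = r'_3$, so Case \#4's condition $r'_1+r'_2 \geq r'_3$ holds. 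Since Case \#4 requires $h \geq 3$ and Case \#5 requires $h \geq 4$, the height after the merge is still large enough (respectively $\geq 2$ and $\geq 3$) for Case \#3 or Case \#4 to fire, so the inner loop performs at least one more merge. In that next merge $e$ now sits at the top of the stack (having just been merged into $R_1$), so its height strictly decreases and it receives the credit of $1\,\stok$.

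Thus every $-1\,\stok$ event for $e$ is immediately matched by a $+1\,\stok$ event on the following merge; combined with the $1\,\stok$ credited at push time, a straightforward induction on the sequence of transactions gives non-negativity of the \stok-balance throughout the main loop. The only non-routine step is the structural claim above; the rest is a case analysis over Cases \#1--\#5 classifying how each merge affects $e$'s two balances as a function of its current position in the stack. I expect the main risk of error to lie in the boundary bookkeeping on $h$ after Cases \#4 and \#5, which is why I make those height checks explicit in the argument.
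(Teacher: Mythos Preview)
Your proposal is correct and follows essentially the same approach as the paper: for \ctok-tokens you observe that every payment coincides with a height decrease and hence with a simultaneous credit of $2\,\ctok$, and for \stok-tokens you prove the key structural fact that a Case~\#4 or~\#5 merge is never the last merge of an ending sequence, so the paying element (now at the top) regains its \stok-token in the very next merge without spending another one. Your explicit height checks after Cases~\#4 and~\#5 are a welcome addition over the paper, which handwaves them via Remark~\ref{rem:main-loop}.
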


\begin{proof}
In all four cases \#2 to \#5, because the height of the elements of $R_1$ and possibly the height of those of $R_2$ decrease,
the number of credited \ctok-tokens after the merge is at least the number of \ctok-tokens spent.
The \stok-tokens are spent in Cases \#4 and \#5 only: every element of $R_2$ pays one \stok-token, and then belongs to the topmost run $\overline{R}_1$ of the new stack $\overline{\S}=(\overline{R}_1,\ldots, \overline{R}_{h-1})$ obtained after  merging $R_1$ and $R_2$. Since 
$\overline{R}_{i} = R_{i+1}$ for $i\geqslant 2$, the condition of Case~\#4 implies that $\overline{r}_1\geqslant \overline{r}_2$ and
the condition of Case~\#5 implies that $\overline{r}_1+\overline{r}_2\geqslant \overline{r}_3$: in both cases, the next modification of the stack $\overline{\S}$ is another merge, which belongs to the same ending sequence.

This merge decreases the height of $\overline{R}_1$, and therefore decreases the height of the elements of $R_2$, who will regain one \stok-token without losing any, since the topmost run of the stack never
pays with \stok-tokens. This proves that, whenever an element pay one \stok-token, the next modification is another merge during which it regains its \stok-token. This concludes the proof by direct induction.
\end{proof}

Finally, consider some element belonging to a run $R$.
Let $\S$ be the stack just before pushing the run $R$, and let
$\overline{S} = (\overline{R}_1,\ldots,\overline{R}_h)$ be the
stack just after the starting sequence of the run $R$
(i.e., the starting sequence initiated when $R$ is pushed onto $\S$) is over.
Every element of $R$ will be given at most $2h$ \ctok-tokens and $h$ \stok-tokens during the main loop
of the algorithm.

\begin{lemma}\label{lm:h-is-small}
The height of the stack when the starting sequence of the run $R$ is over
satisfies the inequality $h \leqslant 4 + 2 \log_2(n/r)$.
\end{lemma}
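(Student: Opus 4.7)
The plan is to read off from the structure of a starting sequence exactly which old runs survive in $\overline{\S}$, and then apply Corollary~\ref{cor:invariant I'} to the stack $\S = (R_1^{\mathrm{old}}, \ldots, R_{h'}^{\mathrm{old}})$ that existed just before $R$ was pushed. Corollary~\ref{cor:invariant I'} applies at that instant, since a run is about to be pushed, and gives the exponential-growth inequality $r_i^{\mathrm{old}} \leqslant 2^{(i+1-j)/2}\, r_j^{\mathrm{old}}$ for all $i \leqslant j \leqslant h'$, which will provide all the room we need.

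Next, I would describe the effect of the starting sequence on $\S$. Since Case~\#2 merges the two runs immediately below the top, writing $k-1$ for the number of \#2's occurring (with $k \geqslant 1$), a straightforward induction shows that after the starting sequence the stack is
\[
\overline{\S} \;=\; \bigl(R,\; R_1^{\mathrm{old}} \oplus \cdots \oplus R_k^{\mathrm{old}},\; R_{k+1}^{\mathrm{old}},\, \ldots,\, R_{h'}^{\mathrm{old}}\bigr),
\]
of height $h = h' - k + 2$, and that the $j$-th occurrence of Case~\#2 fires precisely when $r > r_{j+1}^{\mathrm{old}}$.

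I would then split on why the starting sequence terminated. If $h \leqslant 2$, then since $r \leqslant n$ we have $\log_2(n/r) \geqslant 0$ and the inequality $h \leqslant 4 + 2\log_2(n/r)$ is trivial. Otherwise $h \geqslant 3$, so the only way Case~\#2 could have stopped triggering is that its size condition failed, giving $r \leqslant \overline{r}_3 = r_{k+1}^{\mathrm{old}}$. Applying Corollary~\ref{cor:invariant I'} with $i = k+1$ and $j = h'$ yields
\[
r_{h'}^{\mathrm{old}} \;\geqslant\; 2^{(h'-k-2)/2}\, r_{k+1}^{\mathrm{old}} \;\geqslant\; 2^{(h'-k-2)/2}\, r,
\]
and combining this with the trivial bound $r_{h'}^{\mathrm{old}} \leqslant n$ gives $(h'-k-2)/2 \leqslant \log_2(n/r)$, hence $h = h' - k + 2 \leqslant 4 + 2\log_2(n/r)$.

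The proof is not conceptually hard; the only real obstacle is the bookkeeping, namely pinning down precisely which old run ends up at each position of $\overline{\S}$ (it is easy to be off by one on the index $k$) and making sure Corollary~\ref{cor:invariant I'} is invoked at the correct moment---immediately before the push of $R$, when none of the triggering conditions of Cases~\#2 to~\#5 holds on $\S$ so that the corollary is available.
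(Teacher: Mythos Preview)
Your proof is correct and follows essentially the same approach as the paper's. Both arguments observe that, at the end of the starting sequence, the top run is still $R$ (so $\overline{r}_1 = r$), that the failure of Case~\#2 forces $r \leqslant \overline{r}_3$, and that Corollary~\ref{cor:invariant I'} applied to the pre-push stack $\S$ yields the exponential lower bound on $\overline{r}_h$ in terms of $\overline{r}_3$; you are simply more explicit about the bookkeeping (tracking the index shift $h = h' - k + 2$ and handling the edge case $h \leqslant 2$ separately).
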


\begin{proof}
Since none of the runs $\overline{R}_3,\ldots,\overline{R}_h$
has been merged during the starting sequence of $R$,
applying Corollary~\ref{cor:invariant I'} to the stack $\S$ proves that
$\overline{r}_3 \leqslant 2^{2-h/2} \overline{r}_h \leqslant 2^{2-h/2} n$.
The run $R$ has not yet been merged either, which means that $r = \overline{r}_1$.
Moreover, at the end of this starting sequence, the conditions of case \#2 do not hold anymore,
which means that $\overline{r}_1 \leqslant \overline{r}_3$.
It follows that $r = \overline{r}_1 \leqslant \overline{r}_3 \leqslant 2^{2-h/2} n$,
which entails the desired inequality.
\end{proof}

Collecting all the above results is enough to prove Theorem~\ref{thm:complexity n + n H}.
First, as mentioned in Remark~\ref{rem:main-loop},
computing the run decomposition can be done in linear time. Then, we proved
that the starting sequences of the main loop have a merge cost $\O(n)$,
and that the ending sequences have a merge cost
$\O(\sum_{i=1}^\rho (1+\log(n/r_i))r_i) = \O(n + n \H)$.
Finally, the additional
merges of line~11 may be taken care of by Remark~\ref{rem:main-loop}.
This concludes the proof of the theorem.

%%%%%%%%%%%%%%%%%%%%%%%%%%%%%%%%%%%%%%%%%%%%%%%%%%%%%%%%%%%%%%%%%%%%%%%%%%%%%%%
\section{Refined analysis and precise worst-case complexity}\label{sec:analysis2}

The analysis performed in Section~\ref{sec:analysis1} proves that \TS sorts arrays in time
$\O(n + n \H)$. Looking more closely at the constants hidden in the $\O$ notation,
we may in fact prove that the cost of merges performed during an execution of \TS
is never greater than $6 n \H + \O(n)$.
However, the lower bound provided by Proposition~\ref{proposition:optimality}
only proves that the cost of these merges must be at least $n \H + \O(n)$.
In addition, there exist sorting algorithms~\cite{munro2018nearly}
whose merge cost is exactly $n \H + \O(n)$.

Hence, \TS is optimal only up to a multiplicative constant.
We focus now on finding the least real constant $\kappa$ such that
the merge cost of \TS is at most $\kappa n \H + \O(n)$,
thereby proving a conjecture of~\cite{BuKno18}.

\begin{theorem}\label{thm:complexity 1.5 n + n H}
The merge cost of \TS on arrays in $\C$ is at most $\kappa n \H + \O(n)$,
where $\kappa = 3/2$. Furthermore, $\kappa = 3/2$ is the least real constant
with this property.
\end{theorem}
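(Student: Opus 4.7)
My plan consists of two independent parts: for the upper bound, tighten the amortized argument of Section~\ref{sec:analysis1} to reduce the implicit constant from~$6$ down to~$3/2$; for the lower bound, exhibit an explicit family of worst-case inputs saturating this constant.

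For the upper bound, the token scheme of Lemma~\ref{lm:balance} is loose: it credits $3$ tokens (two \ctok and one \stok) per height-decrease event, whereas Lemma~\ref{lm:h-is-small} already provides the sharp height bound $2\log_2(n/r) + \O(1)$, so the slack is entirely in the per-event credit. The first step is therefore to replace the discrete bookkeeping by a real-valued potential $\Phi(\mathcal{S}) = \sum_{i=1}^{h} r_i\,\varphi_i$, where $\varphi_i$ is a logarithmic weight depending on the suffix $(r_i,\ldots,r_h)$ of the stack. I would calibrate $\varphi_i$ so that (i)~$\Phi(\mathcal{S}) \leqslant \tfrac{3}{2}\, n \H + \O(n)$ throughout the main loop (which follows from Corollary~\ref{cor:invariant I'} together with a Shannon-style manipulation), and (ii)~for each of the five transitions of Algorithm~\ref{alg:TS translated} the amortized quantity $\Delta\Phi + (\text{merge cost})$ is non-positive, up to an additive $\O(r)$ term per run of length~$r$ pushed onto the stack, which sums to $\O(n)$ overall via Lemma~\ref{lm:starting}. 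The sharp constant $3/2$ is dictated by Case~\#3, in which the trigger $r_2 \leqslant r_1 \leqslant r_3$ admits essentially balanced ($1{:}1$) merges that the potential must absorb by telescoping with the next merge downstack.

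The main technical obstacle is the verification that $\Delta\Phi + (\text{merge cost})$ stays non-positive in Case~\#2, because that case merges $R_2$ and $R_3$ and thereby modifies the potential of a run ($R_1$) that is not itself being touched; accommodating this forces $\varphi_i$ to depend on several successive runs rather than on $r_i$ alone, making the case analysis noticeably more delicate than in Section~\ref{sec:analysis1}. For the matching lower bound, I would construct, for each $\varepsilon > 0$, an input in $\mathcal{C}$ of arbitrarily large size whose merge cost under \TS exceeds $(\tfrac{3}{2}-\varepsilon)\, n \H$. A natural family is geometric, with run lengths $r_i = \lfloor \alpha^i \rfloor$ for $\alpha$ slightly above a critical value, so that the growing stack saturates invariant~\eqref{eq:inv1} and each Case~\#3 merge fires on pairs of nearly equal size; a direct telescoping computation of the total merge cost along the resulting merge tree, together with an explicit evaluation of $\H$, should yield ratio $\to 3/2$ as $n\to\infty$, and combining this with Proposition~\ref{proposition:optimality} settles the optimality of $\kappa = 3/2$.
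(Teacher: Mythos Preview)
Your upper-bound plan is on the right track in spirit---the paper also uses a potential---but the calibration you leave unspecified is precisely where all the work lies, and the shape you propose (weights $\varphi_i$ depending on the stack suffix) is not what succeeds. The paper takes the simpler potential $\pot(r)=\tfrac32 r\log_2 r$ summed over the runs currently present; merging runs of sizes $r,r'$ then \emph{increases} the potential by $\tfrac32(r+r')H\bigl(\tfrac{r}{r+r'}\bigr)$, and this dominates the merge cost $r+r'$ exactly when the ratio $r/r'$ lies in $[\phi^{-2},\phi^2]$. The substance of the proof is a chain of fine stack invariants involving the golden ratio (for instance $r_2+\cdots+r_{i-1}<\phi\, r_i$ and $r_1<\phi^2 r_2$ after every ending-sequence merge) which force most merges into this balanced regime, together with a combinatorial lemma ruling out the patterns $\#2\,\#2$ and $\caseX\caseX\#2$ inside any ending sequence, so that each remaining unbalanced merge can be paired with its successor and the pair shown balanced jointly. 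None of this structure is visible in your outline, and your remark that Case~\#3 ``dictates the constant $3/2$'' is off: the constant comes from the entropy identity $\tfrac32 n\log_2 n-\sum_i\tfrac32 r_i\log_2 r_i=\tfrac32 n\H$, while Case~\#3 merges are in fact automatically balanced.

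Your lower-bound construction does not achieve $3/2$. A geometric sequence $r_i=\lfloor\alpha^i\rfloor$ pushed in decreasing order with $\alpha>\phi$ produces a stable stack with no merges until the final collapse, and that collapse is a left comb with limiting ratio~$1$; pushed in increasing order so that \#3 (or \#2) fires after each push, the merges again form a comb, and a direct computation shows the ratio $\cost/(n\H)$ stays bounded away from $3/2$ for every~$\alpha$. The construction that works is recursive and ternary: $\R(2k)=\R(k)\cdot\R(k-2)\cdot\langle 2\rangle$, arranged so that at each level two half-size blocks sit on the stack and a final tiny run forces a chain of two merges costing $k+2k=3k\approx\tfrac32 n$; over $\log_2 n$ levels this yields $\tfrac32 n\log_2 n+\O(n)$. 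Finally, Proposition~\ref{proposition:optimality} is irrelevant here: it gives the information-theoretic bound $n\H-3n$ valid for \emph{every} comparison sort and says nothing about the \TS-specific constant~$3/2$.
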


The rest of this Section is devoted to proving Theorem~\ref{thm:complexity 1.5 n + n H}.
The theorem can be divided into two statements: one that states that \TS is asymptotically
optimal up to a multiplicative constant of $\kappa = 3/2$, and one that states that $\kappa$ is optimal.
The latter statement was proved in~\cite{BuKno18}. Here, we borrow their proof for the sake of completeness.

\begin{proposition}\label{pro:kappa-optimal-BuKno18}
There exist arrays of length $n$ on which the merge cost of \TS is at least $3/2 n \log_2(n) + \O(n)$.
\end{proposition}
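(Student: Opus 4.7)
The plan is to construct, for each integer $k \geq 1$, an array of length $n_k$ (with $n_k \to \infty$) on which the merge cost of \TS is at least $\frac{3}{2} n_k \log_2(n_k) - O(n_k)$. I would proceed recursively: define a run-length sequence $L_k$ of total length $n_k$, and build $L_{k+1}$ by concatenating several carefully padded copies of $L_k$ in such a way that, when \TS processes the corresponding array, it essentially sorts each copy independently and then performs a handful of extra merges whose combined cost contributes a term of order $\frac{1}{2} n_{k+1}$ on top of the cost inherited from the sub-instances.

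The driving intuition behind the factor $\frac{3}{2}$ is that a single merge between two runs of sizes $m$ and $2m$ costs $3m$ to produce a run of size $3m$, the same cost as merging two balanced runs of size $\frac{3m}{2}$ but placed at one level shallower in the merge tree. By arranging the run lengths so that nearly every merge triggered by Algorithm~\ref{alg:TS translated} (in Cases \#3, \#4, or \#5) involves such an unbalanced pair, one forces each input element to participate in roughly $\frac{3}{2} \log_2(n_k)$ units of merge cost. To realize this, the sizes $r_i$ inside $L_k$ should be chosen along a carefully tuned geometric-like progression, possibly interleaved with small guard values, so that \TS's invariants~\eqref{eq:inv1}--\eqref{eq:inv2} stay tight throughout the execution and the merges predicted by the recurrence indeed occur.

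The concrete verification steps are: (i) trace \TS's behavior on $L_k$ step by step, confirming by direct case analysis that every push triggers exactly the expected sequence of Cases \#1 to \#5, with no unintended collapses due to stray inequalities; (ii) set up and solve a recurrence of the form $c_{k+1} \geq \alpha\, c_k + \frac{1}{2} n_{k+1}$, where $\alpha$ counts the number of sub-copies of $L_k$ inside $L_{k+1}$ and the extra term accounts for the top-level merges; (iii) compare $c_k$ with $n_k \log_2(n_k)$ to conclude. Since the matching upper bound in Theorem~\ref{thm:complexity 1.5 n + n H} shows that $\kappa \leq \frac{3}{2}$, exhibiting a single family reaching $\frac{3}{2} n \log_2 n - O(n)$ suffices to pin down the optimal constant.

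The main obstacle is step~(i): \TS's merging rules are delicate enough that a naive choice of run lengths can cause \texttt{merge\_collapse} to cascade further than intended, breaking the invariants that underlie the recurrence. I would guard against this by inserting explicit buffer runs of carefully chosen sizes between consecutive copies of $L_k$, small enough to preserve the asymptotic cost estimate yet large enough to block any unwanted collapse. Verifying the absence of such collateral merges is the delicate part; once the execution trace is under control, the cost computation reduces to a geometric sum.
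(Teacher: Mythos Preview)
Your outline has the right overall shape (a recursive construction with controlled collapse behavior), but the quantitative mechanism you describe does not reach the constant $\tfrac{3}{2}$. Two points:

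First, the recurrence you write, $c_{k+1} \geqslant \alpha\, c_k + \tfrac{1}{2} n_{k+1}$ with $n_{k+1}\approx \alpha\, n_k$, solves to $c_k \approx \tfrac{1}{2} n_k \log_\alpha n_k = \tfrac{n_k \log_2 n_k}{2\log_2\alpha}$, which is at most $\tfrac{1}{2} n_k\log_2 n_k$ for every integer $\alpha\geqslant 2$. To hit $\tfrac{3}{2}n\log_2 n$ you need an extra cost of order $\tfrac{3}{2}n_{k+1}$ per level, not $\tfrac{1}{2}n_{k+1}$.

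Second, the intuition that ``$(m,2m)$ merges'' are the source of the loss is off. A merge of sizes $m$ and $2m$ has cost $3m$ while its entropy contribution is $3m\cdot H(1/3,2/3)\approx 2.75\,m$; the inefficiency ratio is only about $1.09$, nowhere near $3/2$. The genuinely wasteful merges are the extremely unbalanced ones, where a very small run is merged into a large one.

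The paper's construction exploits exactly this: set $\mathcal{R}(2k)=\mathcal{R}(k)\cdot\mathcal{R}(k-2)\cdot\langle 2\rangle$. After the two recursive blocks collapse to single runs of sizes $k$ and $k-2$, pushing the trailing run of size $2$ triggers Case~\#4 (merging $2$ with $k-2$, cost $k$) and then Case~\#3 (merging $k$ with $k$, cost $2k$), for a per-level surplus of $3k=\tfrac{3}{2}n$ over $\log_2 n$ levels. The ``buffer run'' you anticipate is indeed present, but its role is not merely to separate copies: it is the run of size $2$ that \emph{forces} \TS into the bad merge. If you revise your plan so that (a) $\alpha=2$, (b) the buffer is a constant-size run appended after the two halves, and (c) you verify that this tiny run triggers Case~\#4 followed by Case~\#3, your recurrence becomes $c(n)\geqslant c(\lfloor n/2\rfloor)+c(\lceil n/2\rceil-2)+\tfrac{3}{2}n+O(1)$, and the argument goes through.
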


\begin{proof}
The dynamics of \TS when sorting an array involves only the lengths of the monotonic runs in which
the array is split, not the actual array values. Hence, we identify every array with the sequence
of its run lengths. Therefore, every sequence of run lengths $\langle r_1,\ldots,r_\rho \rangle$ such that
$r_1,\ldots,r_{\rho-1} \geqslant 2$, $r_\rho \geqslant 1$ and $r_1+\ldots+r_\rho = n$ represents
at least one possible array of length $n$.

We define inductively a sequence of run lengths $\R(n)$ as follows:
\[\R(n) = \begin{cases}\langle n \rangle & \text{if } 1 \leqslant n \leqslant 6, \\
\R(k) \cdot \R(k-2) \cdot \langle 2\rangle & \text{if } n = 2k \text{ for some } k \geqslant 4, \\
\R(k) \cdot \R(k-1) \cdot \langle 2\rangle & \text{if } n = 2k+1 \text{ for some } k \geqslant 3,
\end{cases}\]
where the concanetation of two sequences $s$ and $t$ is denoted by $s \cdot t$.

Then, let us apply the main loop of TimSort
on an array whose associated monotonic runs
have lengths $\mathbf{r} = \langle r_1,\ldots,r_\rho \rangle$,
starting with an empty stack.
We denote the associated merge cost by $c(\mathbf{r})$ and,
if $\overline{\S} = (\overline{R}_1,\ldots,
\overline{R}_{\overline{h}})$ is the stack obtained
after the main loop has been applied,
we denote by $s(\mathbf{r})$ the sequence
$\langle \overline{r}_1,\ldots,\overline{r}_{\overline{h}}\rangle$.

An immediate induction shows that, if
$r_1 \geqslant r_2+\ldots+r_\rho+1$, then
$c(\mathbf{r}) = c(\langle r_2,\ldots,r_\rho\rangle)$ and
$s(\mathbf{r}) = \langle r_1 \rangle \cdot s(\langle r_2,\ldots,r_\rho\rangle)$. Similarly, if $r_1 \geqslant r_2+\ldots+r_\rho+1$ and
$r_2 \geqslant r_3+\ldots+r_\rho+1$, then
$c(\mathbf{r}) = c(\langle r_3,\ldots,r_\rho\rangle)$ and
$s(\mathbf{r}) = \langle r_1,r_2 \rangle \cdot s(\langle r_3,\ldots,r_\rho\rangle)$.

Consequently, and by another induction on $n$, it holds that
$s(\R(n)) = \langle n \rangle$ and that
\[c(\R(n)) = \begin{cases} 0 & \text{if } 1 \leqslant n \leqslant 6, \\
c(\R(k)) + c(\R(k-2)) + 3k & \text{if } n = 2k \text{ for some } k \geqslant 4, \\
c(\R(k)) + c(\R(k-1)) + 3k+2 & \text{if } n = 2k+1 \text{ for some } k \geqslant 3.
\end{cases}\]

Let $u_x = c(\R(\lfloor x \rfloor))$
and $v_x = (u_{x-4} - 15/2)/x - 3 \log_2(x) / 2$.
An immediate induction shows that $c(\R(n)) \geqslant c(\R(n+1))$
for all integers $n \geqslant 0$, which means that $x \mapsto u_x$
is non-decreasing. Then, we have 
$u_n = u_{n/2} + u_{(n-3)/2} + \lceil 3n/2 \rceil$
for all integers $n \geqslant 6$, and therefore
$u_x \geqslant 2 u_{x/2-2} + 3(x-1)/2$ for all real numbers $x \geqslant 6$. Consequently, for $x \geqslant 11$, it holds that
\[x v_x = u_{x-4} - 3 x \log_2(x) / 2 - 15/2 \geqslant
2 u_{x/2-4} + 3(x-5)/2 - 3 x \log_2(x) / 2 - 15/2 = x v_{x/2}.\]
This proves that $v_x \geqslant v_{x/2}$, from which it follows
that $v_x \geqslant \inf\{v_t \,:\, 11/2 \leqslant t < 11\}$.
Since $v_t = -15/(2t) -3 \log_2(t)/2 \geqslant -15/11 - 3 \log_2(11)/2 \geqslant -7$ for all $t \in [11/2,11)$,
we conclude that
$v_x \geqslant -7$ for all $x \geqslant 11$, and thus that
\[c(\R(n)) = u_n \geqslant (n+4) v_{n+4} + 3 (n+4) \log_2(n+4) / 2 \geqslant 3 n \log_2(n) / 2 - 7 (n+4),\] thereby proving
Proposition~\ref{pro:kappa-optimal-BuKno18}.
\end{proof}

It remains to prove the first statement of Theorem~\ref{thm:complexity 1.5 n + n H}. Our initial step towards
this statement consists in refining Lemma~\ref{lm:invariant I}.
This is the essence of Lemmas~\ref{lm:invariant II} to~\ref{lm:invariant IV}.

\begin{lemma}\label{lm:invariant II}
At any step during the main loop of \TS, if $h \geqslant 4$,
we have~$r_2<r_4$ and~$r_3<r_4$.
\end{lemma}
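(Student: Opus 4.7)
The plan is to prove the lemma by induction on the successive stack updates during the main loop of \TS, in the same spirit as the proof of Lemma~\ref{lm:invariant I}, using Lemma~\ref{lm:invariant I} itself as an auxiliary tool. After each update I denote by $\overline{\S} = (\overline{R}_1, \ldots, \overline{R}_{\overline{h}})$ the resulting stack and by $\S = (R_1, \ldots, R_h)$ the previous one, then case-split on which of Cases~\#1 through~\#5 just fired. Updates with $\overline{h} \leqslant 3$ are dispatched vacuously.

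First I would treat Case~\#1, where a new run of length $r$ is pushed and $\overline{\S} = (R, R_1, \ldots, R_h)$, so that $\overline{r}_3 = r_2$ and $\overline{r}_4 = r_3$. The decisive observation is that the inner \texttt{while} loop of the previous iteration exited through the \texttt{break}, so none of the merge conditions of Cases~\#2--\#5 holds on $\S$; in particular, since $\overline{h} \geqslant 4$ forces $h \geqslant 3$, the negation of Case~\#4 gives $r_1 + r_2 < r_3$. Because run lengths are positive integers, this yields both $r_1 < r_3$ and $r_2 < r_3$, which is exactly $\overline{r}_2 < \overline{r}_4$ and $\overline{r}_3 < \overline{r}_4$.

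Next I would dispatch Cases~\#3, \#4 and~\#5 together: all three merge $R_1$ and $R_2$, so $\overline{r}_2 = r_3$, $\overline{r}_3 = r_4$ and $\overline{r}_4 = r_5$, and both inequalities collapse to $r_3 < r_5$ and $r_4 < r_5$. Both are immediate consequences of Lemma~\ref{lm:invariant I} applied at $i = 3$, which is valid because $\overline{h} \geqslant 4$ forces $h \geqslant 5$.

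The main obstacle is Case~\#2, where merging $R_2$ and $R_3$ gives $\overline{r}_2 = r_2 + r_3$, $\overline{r}_3 = r_4$ and $\overline{r}_4 = r_5$. The inequality $\overline{r}_3 < \overline{r}_4$ again comes from Lemma~\ref{lm:invariant I} at $i = 3$, but $\overline{r}_2 < \overline{r}_4$ amounts to $r_2 + r_3 < r_5$, which does not follow from Lemma~\ref{lm:invariant I} alone. Here I would invoke the induction hypothesis on $\S$ (applicable since $h \geqslant 5 \geqslant 4$), which gives $r_2 < r_4$, and combine it with Lemma~\ref{lm:invariant I} at $i = 3$, which gives $r_3 + r_4 < r_5$; chaining yields $r_2 + r_3 < r_4 + r_3 < r_5$, closing the induction.
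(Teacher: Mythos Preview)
Your proof is correct and follows essentially the same induction as the paper's. The only organizational difference is that the paper treats all merge cases \#2--\#5 at once via the uniform bound $\overline{r}_2 \leqslant r_2+r_3$, then invokes the induction hypothesis $r_2<r_4$ together with Lemma~\ref{lm:invariant I} to obtain $r_2+r_3 < r_3+r_4 < r_5$; you instead split off Cases \#3--\#5 (where $\overline{r}_2 = r_3$ and the induction hypothesis is not needed) from Case~\#2 (where it is), which is a harmless and arguably cleaner partition.
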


\begin{proof}
We proceed by induction. The proof consists in verifying that, if the
invariant holds at some point, then it still holds when an update of the stack occurs in one of the five situations labeled \#1 to \#5 in the algorithm.
This can be done by a straightforward case analysis.
We denote by $\S=(R_1,\ldots, R_h)$ the stack just before
the update, and by
$\overline{\S}=(\overline{R}_1,\ldots, \overline{R}_{\overline{h}})$ 
the new state of the stack after the update:
\begin{disjunction}
  \item If Case \#1 just occurred, a new run $\overline{R}_1$ was pushed.
  This implies that the conditions of Cases \#2 and \#4 did not hold in $\S$, otherwise merges would have continued. In particular, we have $\overline{r}_2=r_1<r_3=\overline{r}_4$ and
  $\overline{r}_3=r_2<r_1+r_2<r_3=\overline{r}_4$.
  
  \item If one of the Cases \#2 to \#5 just occurred, it holds that $\overline{r}_2 \leqslant r_2+r_3$, that $\overline{r}_3=r_4$ and that $\overline{r}_4=r_5$. Since Lemma~\ref{lm:invariant I} proves that $r_3+r_4<r_5$, it follows that
  $\overline{r}_2 \leqslant r_2+r_3 < r_3+r_4 < r_5 = \overline{r}_4$
  and that $\overline{r}_3 = r_4<r_3+r_4<r_5 = \overline{r}_4$.
\end{disjunction}
\end{proof}

\begin{lemma}\label{lm:invariant III}
At any step during the main loop of \TS, and for all
$i \in \{3,\ldots,h\}$, it holds that
$r_2+\ldots+r_{i-1} < \phi \, r_i$.
\end{lemma}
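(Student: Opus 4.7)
The plan is to proceed by induction on the sequence of operations (pushes and merges) performed during the main loop. The invariant holds vacuously for the empty stack, so it suffices to show preservation under each of the five operations. The merge cases \#2--\#5 are routine: Case~\#2 yields $\overline{r}_1 = r_1$, $\overline{r}_2 = r_2 + r_3$, and $\overline{r}_k = r_{k+1}$ for $k \geqslant 3$, so $\overline{r}_2 + \cdots + \overline{r}_{i-1} = r_2 + \cdots + r_i$ and $\overline{r}_i = r_{i+1}$, reducing the bound to the inductive hypothesis at index $i+1$; Cases~\#3--\#5 merge $R_1$ and $R_2$, giving $\overline{r}_k = r_{k+1}$ for $k \geqslant 2$, and the analogous rewriting $\overline{r}_2 + \cdots + \overline{r}_{i-1} = r_3 + \cdots + r_i < \phi\, r_{i+1}$ again invokes the prior invariant at $i+1$.

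The hard case is Case~\#1. The key leverage is that just before a push, the stack must be stable---none of Cases~\#2--\#5 fires---which yields $r_1 < r_2$, $r_1 + r_2 < r_3$, and (when $h \geqslant 4$) $r_2 + r_3 < r_4$. Combining the last two with Lemma~\ref{lm:invariant I} gives the uniform strict super-Fibonacci inequality $r_j > r_{j-1} + r_{j-2}$ for every $j \in \{4, \ldots, h\}$. After the push, using $\overline{r}_{k+1} = r_k$ and setting $j = i - 1$, the new-stack invariant becomes the auxiliary claim $T(j) \colon r_1 + r_2 + \cdots + r_{j-1} < \phi\, r_j$ for all $j \in \{2, \ldots, h\}$. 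The base cases $T(2)$ and $T(3)$ follow immediately from $r_1 < r_2 < \phi\, r_2$ and $r_1 + r_2 < r_3 < \phi\, r_3$.

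For the inductive step at $j \geqslant 4$, I would split according to whether $r_{j-1} \geqslant \phi\, r_{j-2}$ (Case~A) or $r_{j-1} < \phi\, r_{j-2}$ (Case~B). In Case~A, adding $r_{j-2} + r_{j-1}$ to $T(j-2)$ gives $r_1 + \cdots + r_{j-1} < \phi\, r_{j-2} + r_{j-2} + r_{j-1} = \phi^2 r_{j-2} + r_{j-1}$; using $\phi^2 = \phi + 1$ the Case~A hypothesis rearranges exactly into $\phi^2 r_{j-2} + r_{j-1} \leqslant \phi(r_{j-2} + r_{j-1})$, and the super-Fibonacci inequality closes the estimate via $\phi(r_{j-2} + r_{j-1}) < \phi\, r_j$. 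In Case~B, substituting $r_{j-2} > r_{j-1}/\phi$ into the super-Fibonacci inequality yields $r_j > r_{j-1}(1 + 1/\phi) = \phi\, r_{j-1}$; then $T(j-1)$ gives $r_1 + \cdots + r_{j-1} < \phi\, r_{j-1} + r_{j-1} = \phi^2 r_{j-1} < \phi\, r_j$.

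The main obstacle is that a single-step induction on $j$ is insufficient: the ratio $r_j / r_{j-1}$ is not bounded below by $\phi$ in general, so $T(j-1)$ alone does not imply $T(j)$. The dichotomy on $r_{j-1}/r_{j-2}$ is what rescues the argument --- either the preceding growth is already large enough to permit a two-step induction via $T(j-2)$, or one can extract $r_j > \phi\, r_{j-1}$ from the super-Fibonacci recurrence itself and close via $T(j-1)$. Recognising this case split (and exploiting the identity $\phi^2 = \phi + 1$ throughout) is the structural heart of the proof.
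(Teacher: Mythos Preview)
Your proof is correct and follows essentially the same approach as the paper: induction on operations, with the merge cases handled by index-shifting and the push case handled by an inner induction that splits into two cases based on a ratio compared against $\phi$, exploiting $\phi^2=\phi+1$ and the super-Fibonacci inequality throughout. The only cosmetic difference is that you split on whether $r_{j-1}/r_{j-2}\geqslant\phi$ whereas the paper splits on whether $\overline r_{i-1}/\overline r_i\leqslant\phi^{-1}$; these dichotomies are shifted by one index but lead to the same pair of sub-arguments (one invoking the inductive hypothesis at the previous index, the other two indices back).
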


\begin{proof}
Like for Lemmas~\ref{lm:invariant I} and~\ref{lm:invariant II},
we proceed by induction and verify that, if the
invariant holds at some point, then it still holds when an update of the stack occurs in one of the five situations labeled \#1 to \#5 in the algorithm.
Let us denote by $\S=(R_1,\ldots, R_h)$ the stack just before
the update, and by
$\overline{\S}=(\overline{R}_1,\ldots, \overline{R}_{\overline{h}})$ 
the new state of the stack after the update:
\begin{disjunction}
  \item If Case \#1 just occurred, then we proceed by induction on
  $i \geqslant 3$. First, for $i = 3$,  since the conditions for Cases \#3 and \#4 do not hold in $\S$, we know that $\overline{r}_2 = r_1 < r_2 = \overline{r}_3$ and that
  $\overline{r}_2 + \overline{r}_3 = r_1+r_2 < r_3 = \overline{r}_4$.
  Then, for $i \geqslant 5$, Lemma~\ref{lm:invariant I} states that $r_{i-2}+r_{i-1} < r_i$, and therefore
  \begin{enumerate}[(i)]
   \item if $\overline{r}_{i-1} \leqslant \phi^{-1} \, \overline{r}_i$, then
  $\overline{r}_2+\ldots+\overline{r}_{i-1} < (\phi + 1) \overline{r}_{i-1} = \phi^2 \overline{r}_{i-1} \leqslant \phi \overline{r}_i$, and
  \item if $\overline{r}_{i-1} \geqslant \phi^{-1} \, \overline{r}_i$, then
  $\overline{r}_{i-2} \leqslant (1-\phi^{-1}) \ \overline{r}_i = \phi^{-2} \, \overline{r}_i$, and thus
  $\overline{r}_2+\ldots+\overline{r}_{i-1} < (\phi+1) \, \overline{r}_{i-2} + \overline{r}_{i-1} \leqslant
  \phi \, \overline{r}_{i-2} + \overline{r}_i \leqslant (\phi^{-1} + 1) \overline{r}_i = \phi \, \overline{r}_i$.
  \end{enumerate}
  Hence, in that case, it holds that $\overline{r}_2+\ldots+\overline{r}_{i-1} < \phi \, \overline{r}_i$ for all $i \in \{3,\ldots,h\}$.
  
  \item If one of the Cases \#2 to \#5 just occurred, it holds that $\overline{r}_2 \leqslant r_2+r_3$ and that $\overline{r}_j = r_{j+1}$ for all $j \geqslant 3$. It follows that
  $\overline{r}_2+\ldots+\overline{r}_{i-1} \leqslant
  r_2+\ldots+r_i < \phi \, r_{i+1} = \overline{r}_i$.
\end{disjunction}
\end{proof}

\begin{remark}
We could also have derived directly Lemma~\ref{lm:invariant II} from
Lemma~\ref{lm:invariant III}, by noting that
$\phi^2 \, r_2 = (\phi+1) r_2 < \phi \, r_2 + \phi \, r_3 < \phi^2 \, r_4$.
\end{remark}

\begin{lemma}\label{lm:invariant IV}
After every merge that occurred during an ending sequence,
we have~$r_1 < \phi^2 r_2$.
\end{lemma}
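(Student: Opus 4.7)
The plan is to prove Lemma~\ref{lm:invariant IV} by induction on the position of the merge within its ending sequence, splitting on which of the four merging cases (\#2 to \#5) triggered the merge and combining this case analysis with Lemma~\ref{lm:invariant III} applied at $i=3$ (which yields $r_2 < \phi\, r_3$). Throughout, I will write $\S=(R_1,\ldots,R_h)$ for the stack right before the merge and $\overline{\S}=(\overline{R}_1,\ldots)$ for the stack right after it.

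For the base case (the first merge of the ending sequence), two crucial facts are available on $\S$. First, by definition of the starting/ending decomposition, the preceding starting sequence has absorbed every possible \#2 step, so Case \#2 does not trigger on $\S$ and hence $r_1 \leqslant r_3$. Second, Lemma~\ref{lm:invariant III} with $i=3$ gives $r_2 < \phi\, r_3$. Also by definition, this first merge is Case \#3, \#4, or \#5, and in each of these it is $R_1$ and $R_2$ that are merged, so $\overline{r}_1 = r_1+r_2$ and $\overline{r}_2 = r_3$. If Case \#3 triggers, its condition $r_1 \geqslant r_2$ gives $\overline{r}_1 \leqslant 2 r_1 \leqslant 2 r_3 = 2 \overline{r}_2 < \phi^2\, \overline{r}_2$. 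If Case \#4 or \#5 triggers, then $\overline{r}_1 = r_1+r_2 < r_3 + \phi\, r_3 = \phi^2 r_3 = \phi^2\, \overline{r}_2$, directly from $r_1 \leqslant r_3$ and $r_2 < \phi\, r_3$.

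For the inductive step, suppose $r_1 < \phi^2\, r_2$ on $\S$ (which is exactly what the previous merge just established). The next merge is again Case \#2, \#3, \#4, or \#5, since no \#1 appears inside an ending sequence. Cases \#3, \#4 and \#5 merge $R_1$ and $R_2$, and the exact argument of the base case applies verbatim: it uses only the priority order of cases in Algorithm~\ref{alg:TS translated} (absence of \#2 gives $r_1 \leqslant r_3$, absence of \#3 gives $r_1 < r_2$) together with Lemma~\ref{lm:invariant III}, and never needs the inductive hypothesis. The only case that genuinely invokes the inductive hypothesis is \#2, which merges $R_2$ and $R_3$: then $\overline{r}_1 = r_1 < \phi^2\, r_2 \leqslant \phi^2(r_2+r_3) = \phi^2\, \overline{r}_2$.

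I do not anticipate any real obstacle here; the statement is essentially a disciplined bookkeeping exercise, and the only mildly delicate point is to remember that the priority order of the cases silently provides the inequalities (namely $r_1 \leqslant r_3$ and $r_1 < r_2$) needed to control $\overline{r}_1 = r_1+r_2$ against $\overline{r}_2 = r_3$ in Cases \#3--\#5.
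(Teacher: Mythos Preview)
Your proof is correct and follows essentially the same approach as the paper: induction on the merges of the ending sequence, handling Case~\#2 via the induction hypothesis and Cases~\#3--\#5 via the inequality $r_1 \leqslant r_3$ (from the failure of~\#2) combined with $r_2 < \phi\, r_3$ from Lemma~\ref{lm:invariant III}. The only cosmetic difference is that the paper treats \#3, \#4, \#5 uniformly with the single bound $r_1+r_2 < (1+\phi)r_3 = \phi^2 r_3$, whereas you peel off Case~\#3 and bound it by $2r_3 < \phi^2 r_3$; this split is unnecessary (the uniform argument already covers~\#3) but not wrong, and your parenthetical remark that ``absence of~\#3 gives $r_1<r_2$'' is true but never actually used.
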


\begin{proof}
Once again, we proceed by induction.
We denote by $\S=(R_1,\ldots, R_h)$ the stack just before an update occurs, and by
$\overline{\S}=(\overline{R}_1,\ldots, \overline{R}_{\overline{h}})$ 
the new state of the stack after after the update:
\begin{disjunction}
  \item If Case \#2 just occurred, then
  this update is not the first one within the ending sequence,
  hence~$\overline{r}_1 = r_1 < \phi^2 \, r_2 < \phi^2 (r_2 + r_3) = \phi^2 \, \overline{r}_2$.
  
  \item If one of the Cases \#2 to \#5 just occurred,
  then $r_1 \leqslant r_3$ and Lemma~\ref{lm:invariant III} proves that
  $r_2 < \phi \, r_3$, which proves that
  $\overline{r}_1 = r_1 + r_2 < (\phi+1) r_3 = \phi^2 \, \overline{r}_2$.
\end{disjunction}
\end{proof}

\begin{lemma}\label{lm:invariant V}
After every merge triggered by Case $\#2$,
we have~$r_2 < \phi^2 r_1$.
\end{lemma}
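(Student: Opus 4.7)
The plan is a one-step argument that draws on Lemma~\ref{lm:invariant III} and needs no further induction. Let $\S = (R_1,\ldots,R_h)$ be the stack just before a Case~\#2 merge fires, and let $\overline{\S} = (\overline{R}_1,\ldots,\overline{R}_{h-1})$ be the stack immediately after. Since Case~\#2 merges $R_2$ and $R_3$, we have $\overline{r}_1 = r_1$ and $\overline{r}_2 = r_2 + r_3$, so the desired inequality $\overline{r}_2 < \phi^2 \, \overline{r}_1$ is equivalent to $r_2 + r_3 < \phi^2 \, r_1$.

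Two ingredients suffice. First, inspection of Algorithm~\ref{alg:TS translated} shows that Case~\#2 is entered only when $h \geqslant 3$ and $r_1 > r_3$, giving $r_3 < r_1$. Second, Lemma~\ref{lm:invariant III} applied to $\S$ with $i = 3$ yields $r_2 < \phi \, r_3$. Using the defining identity $\phi^2 = \phi + 1$ to combine them gives
\[\overline{r}_2 \;=\; r_2 + r_3 \;<\; (\phi+1)\, r_3 \;<\; (\phi+1)\, r_1 \;=\; \phi^2 \, r_1 \;=\; \phi^2 \, \overline{r}_1,\]
as required.

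I do not expect any real obstacle here: the invariant maintained in Lemma~\ref{lm:invariant III} already encodes the decisive bound $r_2 < \phi\, r_3$, and the triggering condition of Case~\#2 supplies the missing comparison $r_3 < r_1$ for free. In contrast to the earlier invariants, no case analysis on which of Cases~\#1--\#5 occurred previously is required, because we inspect the stack at the precise moment Case~\#2 is about to fire, which pins down both inequalities simultaneously. The role of this lemma is to complement Lemma~\ref{lm:invariant IV}: together they control the ratio of the two topmost runs after \emph{every} merge---those inside ending sequences (Lemma~\ref{lm:invariant IV}) and the Case~\#2 merges occurring inside starting sequences (this lemma)---which will be essential for the refined amortized accounting leading to the optimal constant $\kappa = 3/2$ in Theorem~\ref{thm:complexity 1.5 n + n H}.
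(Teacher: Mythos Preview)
Your proof is correct and essentially identical to the paper's: both use the triggering condition $r_1 > r_3$ of Case~\#2 together with Lemma~\ref{lm:invariant III} (with $i=3$) to get $r_2 < \phi\, r_3$, then combine via $\phi+1=\phi^2$ to obtain $\overline{r}_2 = r_2+r_3 < \phi^2 r_1 = \phi^2 \overline{r}_1$. One small inaccuracy in your closing commentary: Lemma~\ref{lm:invariant V} is not restricted to Case~\#2 merges in starting sequences, and its actual role in the paper is as a contrapositive in Lemma~\ref{lm:delta-cost III} (if $r_1 < \phi^{-2} r_2$ then the preceding update cannot have been a Case~\#2 merge), rather than as a direct complement to Lemma~\ref{lm:invariant IV}.
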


\begin{proof}
We denote by $\S=(R_1,\ldots, R_h)$ the stack just before an update triggered by Case \#2 occurs, and by
$\overline{\S}=(\overline{R}_1,\ldots, \overline{R}_{\overline{h}})$
the new state of the stack after after the update.
It must hold that $r_1 > r_3$ and
Lemma~\ref{lm:invariant III} proves that $r_2 < \phi \, r_3$. It follows that
$\overline{r}_2 = r_2 + r_3 < (\phi+1) r_3 = \phi^2 \, r_3 < \phi^2 \, r_1 = \phi^2 \, \overline{r}_1$.
\end{proof}

Our second step towards proving the first statement of
Theorem~\ref{thm:complexity 1.5 n + n H}
consists in identifying which sequences of merges
an ending sequence may be made of.
More precisely, in the proof of Lemma~\ref{lm:balance},
we proved that every merge triggered by a case
$\#4$ or $\#5$ must be followed
by another merge, i.e.,
it cannot be the final merge of an ending sequence.

We present now a variant of this result,
which involves distinguishing between merges triggered
by a case $\#2$ and those triggered by a case $\#3$, $\#4$ or $\#5$.
Hence, we denote by $\caseX$ every $\#3$, $\#4$ or $\#5$.

\begin{lemma}\label{lm:(X2)*X*}
No ending sequence contains two conscutive $\#2$'s,
nor does it contain a subsequence of the form \caseX \caseX $\#2$.
\end{lemma}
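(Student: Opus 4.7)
The plan is to prove the two statements by assuming for contradiction that the forbidden pattern appears, and then exploiting the Fibonacci-like invariants of Lemmas~\ref{lm:invariant I},~\ref{lm:invariant II} and~\ref{lm:invariant III} on the stack state $(R_1,\ldots,R_h)$ just before the pattern begins. In both cases the pattern forces inequalities on $r_1,\ldots,r_5$ that are too strong to be compatible with those invariants.

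I would start with the easier half: the pattern \caseX \caseX $\#2$. Since every \caseX merges the top two runs, tracking the two consecutive \caseX merges shows that the top two sizes right before the $\#2$ are $r_1+r_2+r_3$ and $r_4$. For the second merge to be a \caseX and not a $\#2$, the $\#2$ condition must fail right after the first merge, which gives $r_1+r_2 \leq r_4$; and for the third merge to fire as a $\#2$, we need $r_1+r_2+r_3 > r_5$. Combining these two inequalities yields $r_5 < r_1+r_2+r_3 \leq r_3+r_4$, contradicting Lemma~\ref{lm:invariant I} with $i = 3$ (which applies because the three successive merges require $h \geq 5$).

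For the harder half, I would consider the leftmost pair $M_1, M_2$ of consecutive $\#2$ merges occurring in any ending sequence throughout the execution. The ending sequence starts with a \caseX, so $M_1$ has a predecessor $M_0$ in the same ending sequence; by the leftmost choice $M_0$ cannot itself be a $\#2$, so $M_0$ is of type $\#3$, $\#4$ or $\#5$. Writing $r_1,\ldots,r_5$ for the sizes in the stack right before $M_0$, the fact that $\#2$ does not fire at $M_0$ gives $r_1 \leq r_3$, while tracking the successive $\#2$ firings of $M_1$ and $M_2$ (whose topmost run $R_1\cup R_2$ of size $r_1+r_2$ is not touched by these merges) yields $r_1+r_2 > r_4$ and $r_1+r_2 > r_5$.

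Two of the three subcases are routine: if $M_0$ is a $\#3$, then $r_1 \geq r_2$ together with $r_1 \leq r_3$ gives $r_1+r_2 \leq 2 r_3 < r_3+r_4 < r_5$ by Lemmas~\ref{lm:invariant II} and~\ref{lm:invariant I}; if $M_0$ is a $\#5$, the failure of the $\#4$ condition already gives $r_1+r_2 < r_3 < r_5$. The main obstacle is the subcase $M_0 = \#4$, where the only direct information on $r_1+r_2$ is $r_1+r_2 \geq r_3$, which is not strong enough by itself. To handle it I would invoke Lemma~\ref{lm:invariant III} twice: using $i = 5$ together with $r_5 < r_1+r_2 \leq r_2+r_3$ gives $r_2+r_3+r_4 < \phi r_5 < \phi(r_2+r_3)$, hence $\phi r_4 < r_2+r_3$ (using $\phi-1 = 1/\phi$); but using $i = 4$ gives the reverse inequality $r_2+r_3 < \phi r_4$, which is the desired contradiction.
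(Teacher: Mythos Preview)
Your argument is correct. Both halves reduce the claim to showing that the patterns $\caseX\,\caseX\,\#2$ and $\caseX\,\#2\,\#2$ are impossible, exactly as the paper does, and your treatment of $\caseX\,\caseX\,\#2$ coincides with the paper's.

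The difference lies in how you handle $\caseX\,\#2\,\#2$. You split on the type of $M_0$ and, in the $\#4$ subcase, appeal twice to Lemma~\ref{lm:invariant III} to derive the contradiction $\phi r_4 < r_2+r_3 < \phi r_4$. The paper avoids this case split entirely: from the stack $(r_1,\ldots,r_h)$ just before $M_0$, Lemma~\ref{lm:invariant II} gives $r_2 < r_4$ uniformly (regardless of whether $M_0$ is $\#3$, $\#4$ or $\#5$), and combined with $r_1 \leqslant r_3$ and Lemma~\ref{lm:invariant I} this yields $r_1+r_2 < r_3+r_4 < r_5$, contradicting the firing condition $r_1+r_2 > r_5$ of the second $\#2$. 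You actually invoke Lemma~\ref{lm:invariant II} in your $\#3$ subcase (for $r_3 < r_4$), but overlook that its other conclusion $r_2 < r_4$ dispatches all three subcases at once. Your route through Lemma~\ref{lm:invariant III} is valid and self-contained, but the paper's uniform use of Lemma~\ref{lm:invariant II} is shorter and makes the role of that lemma clearer.
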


\begin{proof}
Every ending sequence starts with an update \caseX,
where \caseX is equal to \#3, \#4 or \#5. Hence,
it suffices to prove that no ending sequence
contains a subsequence $\mathbf{t}$ of the form 
\caseX\caseX\#2 or \caseX\#2\;\#2.

Indeed, for the sake of contradiction, assume that it does,
and let $\S = (R_1,\ldots,R_h)$ be the stack just before
$\mathbf{t}$ starts.
We distinguish two cases, depending on the value of $\mathbf{t}$:
\begin{disjunction}
\item If $\mathbf{t}$ is the sequence $\caseX\;\caseX\;\#2$,
it must hold that $r_1+r_2 < r_4$ and that
$r_1+r_2+r_3 \geqslant r_5$, as illustrated in Figure~\ref{fig:XX2}~(top).
Since Lemma~\ref{lm:invariant I} proves that $r_3 + r_4 < r_5$,
it follows that $r_1+r_2+r_3 \geqslant r_5 > r_3+r_4 > r_1+r_2+r_3$,
which is impossible.
 
\item If $\mathbf{t}$ is the sequence $\caseX\;\#2\;\#2$, it must hold that
$r_1 < r_3$ and that
$r_1+r_2 \geqslant r_5$, as illustrated in Figure~\ref{fig:XX2}~(bottom).
Since Lemmas~\ref{lm:invariant I} and~\ref{lm:invariant II} prove that $r_3+r_4 < r_5$ and that $r_2 < r_4$,
it comes that
$r_1+r_2 \geqslant r_5 > r_3+r_4 > r_1+r_2$,
which is also impossible.\vspace{-0.2mm}
\end{disjunction}
\end{proof}

\begin{figure}[ht]
\begin{center}
\vspace{-2mm}
\begin{small}
\begin{tikzpicture}[scale=0.45]
\foreach \i in {0,...,3}{
 \draw[thick] (9*\i,0) -- (9*\i+2,0) -- (9*\i+2,5) -- (9*\i,5) -- cycle;
 \foreach \j in {1,...,5}{
   \node[anchor=south] at (9*\i+1,5-\j-0.05) {$r_\j$};
 }
}
\foreach \i in {0,...,2}{
 \FPeval{\k}{clip(4-\i)}
 \foreach \j in {1,...,\k}{
   \draw[thick] (9*\i,\j) -- (9*\i+2,\j);
 }
}
\draw[thick] (9*3,2) -- (9*3+2,2);

\draw[very thick,->,>=stealth] (2.1,2.5) -- (8.9,2.5);
\draw[very thick,->,>=stealth] (11.1,2.5) -- (17.9,2.5);
\draw[very thick,->,>=stealth] (20.1,2.5) -- (26.9,2.5);
% \node[anchor=north] at (5.5,2.5+0.05) {$r_1<r_3$};
\node[anchor=south] at (5.5,2.5-0.05) {merge \caseX};
\node[anchor=north] at (14.5,2.5+0.05) {$r_1+r_2<r_4$};
\node[anchor=south] at (14.5,2.5-0.05) {merge \caseX};
\node[anchor=north] at (23.5,2.5+0.05) {$r_1+r_2+r_3 \geqslant r_5$};
\node[anchor=south] at (23.5,2.5-0.05) {merge $\#2$};

\node[anchor=south] at (10,3.5) {\tiny+};
\node[anchor=south] at (19,3.5) {\tiny+};
\node[anchor=south] at (19,2.5) {\tiny+};
\node[anchor=south] at (28,3.5) {\tiny+};
\node[anchor=south] at (28,2.5) {\tiny+};
\node[anchor=south] at (28,0.5) {\tiny+};
\end{tikzpicture}

\bigskip\bigskip

\begin{tikzpicture}[scale=0.45]
\foreach \i in {0,...,3}{
 \draw[thick] (9*\i,0) -- (9*\i+2,0) -- (9*\i+2,5) -- (9*\i,5) -- cycle;
 \foreach \j in {1,...,5}{
   \node[anchor=south] at (9*\i+1,5-\j-0.05) {$r_\j$};
 }
}
\foreach \i in {0,...,1}{
 \FPeval{\k}{clip(4-\i)}
 \foreach \j in {1,...,\k}{
   \draw[thick] (9*\i,\j) -- (9*\i+2,\j);
 }
}
\draw[thick] (9*2,1) -- (9*2+2,1);
\draw[thick] (9*2,3) -- (9*2+2,3);
\draw[thick] (9*3,3) -- (9*3+2,3);

\draw[very thick,->,>=stealth] (2.1,2.5) -- (9*1-0.1,2.5);
\draw[very thick,->,>=stealth] (9*1+2.1,2.5) -- (9*2-0.1,2.5);
\draw[very thick,->,>=stealth] (9*2+2.1,2.5) -- (9*3-0.1,2.5);
\node[anchor=north] at (9*0.5+1,2.5+0.05) {$r_1 \leqslant r_3$};
\node[anchor=south] at (9*0.5+1,2.5-0.05) {merge \caseX};
\node[anchor=south] at (9*1+9*0.5+1,2.5-0.05) {merge $\#2$};
\node[anchor=north] at (9*2+9*0.5+1,2.5+0.05) {$r_1+r_2 \geqslant r_5$};
\node[anchor=south] at (9*2+9*0.5+1,2.5-0.05) {merge $\#2$};

\node[anchor=south] at (9*1+1,3.5) {\tiny+};
\node[anchor=south] at (9*2+1,3.5) {\tiny+};
\node[anchor=south] at (9*2+1,1.5) {\tiny+};
\node[anchor=south] at (9*3+1,3.5) {\tiny+};
\node[anchor=south] at (9*3+1,1.5) {\tiny+};
\node[anchor=south] at (9*3+1,0.5) {\tiny+};
\end{tikzpicture}
\end{small}
\end{center}
\vspace{-2mm}
\caption{Applying successively merges
\caseX\#2\;\#2
or \caseX\caseX\#2
to a stack is impossible.}
\label{fig:XX2}
\end{figure}

Our third step consists in modifying the cost allocation
we had chosen in Section~\ref{sec:analysis1}, which
is not sufficient to prove Theorem~\ref{thm:complexity 1.5 n + n H}.
Instead, we associate to
every run $R$ its \emph{potential}, which depends only on the length
$r$ of the run, and is defined as
$\pot(r) = 3 r \log_2(r) / 2$. We also call \emph{potential} of a
set of runs the sum of the potentials of the runs it is formed of,
and \emph{potential variation} of a (sequence of) merges
the increase in potential caused by these merge(s).

We shall prove that the potential variation of every ending
sequence dominates its merge cost, up to a small error term.
In order to do this, let us study more precisely individual merges.
Below, we respectively denote by $\Delta_{\pot}(\mathbf{m})$
and $\cost(\mathbf{m})$ the potential variation and the
merge cost of a merge $\mathbf{m}$. Then, we say that
$\mathbf{m}$ is a \emph{balanced} merge if
$\cost(\mathbf{m}) \leqslant \Delta_{\pot}(\mathbf{m})$.

In the next Lemmas, we prove that most merges are balanced
or can be grouped into sequences of merges that are balanced
overall.

\begin{lemma}\label{lm:delta-cost I}
Let $\mathbf{m}$ be a merge between two runs $R$ and $R'$.
If $\phi^{-2} \, r \leqslant r' \leqslant \phi^2 \, r$,
then $\mathbf{m}$ is balanced.
\end{lemma}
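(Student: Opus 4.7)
The plan is to compute $\cost(\mathbf{m})$ and $\Delta_\pot(\mathbf{m})$ in closed form, then reduce the balancedness inequality to a sharp estimate on the binary entropy. By definition $\cost(\mathbf{m}) = r + r'$. Setting $s = r + r'$ and $p = r/s$, the expansion $r \log_2 r = ps(\log_2 s + \log_2 p)$ together with its analog for $r'$ yields
\[
\Delta_\pot(\mathbf{m}) = \pot(s) - \pot(r) - \pot(r') = \tfrac{3}{2}\bigl(s\log_2 s - r \log_2 r - r'\log_2 r'\bigr) = \tfrac{3}{2}\, s \, H(p, 1-p).
\]
So $\mathbf{m}$ is balanced if and only if $H(p, 1-p) \geqslant 2/3$, and this is the single inequality the proof must establish.

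The next step is to translate the hypothesis $\phi^{-2} r \leqslant r' \leqslant \phi^2 r$ into a constraint on $p$. Since $r'/r = (1-p)/p$, the hypothesis is equivalent to $p \in [1/(1+\phi^2),\, \phi^2/(1+\phi^2)]$, an interval that is symmetric about $1/2$ (its endpoints sum to $1$).

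I would then invoke the concavity and symmetry of $p \mapsto H(p, 1-p)$ on $(0,1)$: on any interval symmetric about $1/2$ the minimum of this function is attained at the endpoints. It therefore suffices to verify $H(p_0, 1-p_0) \geqslant 2/3$ at the single point $p_0 = 1/(1+\phi^2)$. Using $\phi^2 = \phi + 1$, so that $1 + \phi^2 = \phi + 2$, a direct computation gives
\[
H(p_0, 1-p_0) = \log_2(\phi+2) - \frac{2(\phi+1)}{\phi+2}\,\log_2 \phi \approx 0.85,
\]
which comfortably exceeds $2/3$.

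The only substantive step is this final endpoint computation; everything else is bookkeeping. The constant $3/2$ in the definition of $\pot$ is exactly the calibration that makes this worst-case endpoint estimate go through, which is precisely why this potential is the right one for eventually pinning down $\kappa = 3/2$ in Theorem~\ref{thm:complexity 1.5 n + n H}.
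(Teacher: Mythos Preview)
Your proof is correct and follows essentially the same approach as the paper: both reduce balancedness to the inequality $H(p,1-p)\geqslant 2/3$ via the identity $\Delta_{\pot}(\mathbf{m})=\tfrac{3}{2}(r+r')H(p,1-p)$, confine $p$ to the interval $[1/(1+\phi^2),\,\phi^2/(1+\phi^2)]$, and check the endpoint value $H\approx 0.85>2/3$. The only cosmetic difference is that the paper invokes monotonicity of $H$ on $[0,1/2]$ where you invoke concavity plus symmetry, which amounts to the same thing.
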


\begin{proof}
Let $x = r / (r+r')$: we have~$\Phi < x < 1-\Phi$, where $\Phi = 1/(1+\phi^2)$.
Then, observe that $\Delta(\mathbf{m}) = 3 (r+r') H(x) / 2$,
where $H(x) = - x \log_2(x) - (1-x) \log_2(x)$
is the binary Shannon entropy of a Bernoulli law of parameter $x$.
Moreover, the function $z \mapsto H(z) = H(1-z)$
is increasing on $[0,1/2]$. It follows that $H(x) \geqslant
H(\Phi) \approx 0.85 > 2/3$, and therefore that $\Delta(\mathbf{m}) > r+r' = \cost(\mathbf{m})$. 
\end{proof}

\begin{lemma}\label{lm:delta-cost II}
Let $\mathbf{m}$ be a merge that belongs to some ending sequence.
If $\mathbf{m}$ is a merge $\#2$, then $\mathbf{m}$ is balanced and,
if $\mathbf{m}$ is followed by another merge $\mathbf{m}'$,
then $\mathbf{m}'$ is also balanced.
\end{lemma}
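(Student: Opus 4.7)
The plan is to reduce both statements to Lemma~\ref{lm:delta-cost I}, whose balanced-merge criterion is that the two merged runs have sizes within a factor of $\phi^2$ of each other. I would first observe that since every ending sequence starts with a \caseX merge rather than with $\#2$, the merge $\mathbf{m}$ is necessarily preceded by at least one merge of the same ending sequence; hence Lemma~\ref{lm:invariant IV} applies and provides the inequality $r_1 < \phi^2 r_2$ in the stack $(R_1,R_2,R_3,\ldots)$ present just before $\mathbf{m}$ fires.

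For $\mathbf{m}$ itself, the two merged runs are $R_2$ and $R_3$. Combining the triggering condition $r_1 > r_3$ of Case \#2 with the invariant $r_1 < \phi^2 r_2$ gives $r_3 < \phi^2 r_2$, while Lemma~\ref{lm:invariant III} applied at $i = 3$ gives $r_2 < \phi \, r_3$. These two inequalities place the ratio $r_3 / r_2$ in the interval $[\phi^{-2},\phi^2]$ required by Lemma~\ref{lm:delta-cost I}, and $\mathbf{m}$ is therefore balanced.

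For the follow-up merge $\mathbf{m}'$, Lemma~\ref{lm:(X2)*X*} forbids two consecutive $\#2$'s, so $\mathbf{m}'$ is a \caseX merge and therefore merges the two topmost runs of the stack produced by $\mathbf{m}$, whose sizes are $r_1$ and $r_2 + r_3$. The very same three ingredients — $r_1 > r_3$, $r_1 < \phi^2 r_2$, and $r_2 < \phi \, r_3$ — yield on the one hand $r_2 + r_3 < (\phi + 1) r_3 = \phi^2 r_3 < \phi^2 r_1$, and on the other hand $r_1 < \phi^2 r_2 \leqslant \phi^2 (r_2 + r_3)$. Hence $\phi^{-2} r_1 \leqslant r_2 + r_3 \leqslant \phi^2 r_1$, and a second application of Lemma~\ref{lm:delta-cost I} concludes.

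I do not anticipate any serious obstacle: the argument is essentially an interlocking invocation of three previously established invariants. The only point that deserves care is making sure that each invariant is used in a context where it holds — in particular that $\mathbf{m}$ lies strictly inside its ending sequence, so that Lemma~\ref{lm:invariant IV} may be applied at the stack configuration immediately preceding $\mathbf{m}$, and that the hypothetical merge $\mathbf{m}'$ following $\mathbf{m}$ belongs to the same ending sequence (which, by construction, any merge performed while the inner \texttt{while} loop of Algorithm~\ref{alg:TS translated} has not yet broken does).
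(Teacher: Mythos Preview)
Your proof is correct and follows the same overall strategy as the paper --- reduce both claims to Lemma~\ref{lm:delta-cost I} by bounding the ratio of the two merged runs --- but you take a slightly different route. The paper indexes the runs from the stack \emph{before the preceding merge} $\mathbf{m}^\star$ (so that $\mathbf{m}$ merges $R_3$ and $R_4$, and $\mathbf{m}'$ merges runs of sizes $r_1+r_2$ and $r_3+r_4$), and uses Lemma~\ref{lm:invariant II} together with the explicit \caseX condition $r_1 \leqslant r_3$ to obtain, for instance, $r_3 < r_4 < r_1+r_2 < (1+\phi)r_3 = \phi^2 r_3$. You instead index from the stack just before $\mathbf{m}$ and invoke Lemma~\ref{lm:invariant IV}, which already encapsulates the effect of the preceding ending-sequence merge in the single inequality $r_1 < \phi^2 r_2$. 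Your version is marginally more economical (Lemma~\ref{lm:invariant IV} does the bookkeeping for you), while the paper's version keeps the argument self-contained with respect to the earlier structural lemmas~\ref{lm:invariant II} and~\ref{lm:invariant III}; neither approach is substantially shorter or more illuminating than the other.
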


\begin{proof}
Lemma~\ref{lm:(X2)*X*} ensures that
$\mathbf{m}$ was preceded by another merge $\mathbf{m}^\star$,
which must be a merge \caseX.
Denoting by $\S = (R_1,\ldots,R_h)$ the stack just before the merge
$\mathbf{m}^\star$ occurs,
the update $\mathbf{m}$ consists in merging
the runs $R_3$ and $R_4$.
Then, it comes that $r_1 \leqslant r_3$ and that $r_1+r_2 > r_4$,
while Lemma~\ref{lm:invariant II} and~\ref{lm:invariant III} respectively
prove that $r_3 < r_4$ and that $r_2 < \phi \, r_3$.
Hence, we both have $r_3 < r_4$ and
$r_4 < r_1 + r_2 < (1 + \phi) r_3 = \phi^2 \, r_3$, and Lemma~\ref{lm:delta-cost II} proves that $\mathbf{m}$ is balanced.

Then, if $\mathbf{m}$ is followed by another merge $\mathbf{m}'$,
Lemma~\ref{lm:(X2)*X*} proves that $\mathbf{m}'$ is also a merge \caseX, between runs of respective lengths $r_1+r_2$ and $r_3+r_4$.
Note that $r_1 \leqslant r_3$ and that $r_1+r_2 > r_4$. Since
Lemma~\ref{lm:invariant II} proves that $r_2 < r_4$ and that $r_3 < r_4$, it follows that
$2(r_1+r_2) > 2 r_4 > r_3 + r_4 > r_1+r_2$ and,
using the fact that $2 < 1 + \phi = \phi^2$,
Lemma~\ref{lm:delta-cost II} therefore proves that $\mathbf{m}$ is balanced.
\end{proof}

\begin{lemma}\label{lm:delta-cost III}
Let $\mathbf{m}$ be a merge \caseX
between two runs $R_1$ and $R_2$ such that $r_1 < \phi^{-2} \, r_2$.
Then, $\mathbf{m}$ is followed by another merge $\mathbf{m}'$, and
$\cost(\mathbf{m}) + \cost(\mathbf{m}') \leqslant \Delta_{\pot}(\mathbf{m}) + \Delta_{\pot}(\mathbf{m}')$.
\end{lemma}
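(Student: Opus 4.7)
The plan is to reduce to the case where $\mathbf{m}$ is a \#4 merge, then identify $\mathbf{m}'$ as either a \#2 or \#3 merge, and verify the combined cost--potential inequality in each sub-case, leveraging the invariants established in Lemmas~\ref{lm:invariant I}--\ref{lm:invariant V}.

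First, the hypothesis $r_1 < \phi^{-2} r_2 < r_2$ immediately precludes Case \#3 (which requires $r_1 \geqslant r_2$). It also rules out Case \#5: a short predecessor analysis shows that \#5 cannot follow a push (the \#5 trigger would coincide with the \#4 condition on the pre-push sub-stack, contradicting its stability) nor a \#3/\#4/\#5 merge (by Lemma~\ref{lm:invariant I} applied to the pre-merge stack, which forces $r_3^{\text{old}} + r_4^{\text{old}} < r_5^{\text{old}}$), so \#5 could only follow a \#2 merge; but then combining $r_1^{\text{old}} > r_3^{\text{old}}$ (the \#2 trigger) with $r_2^{\text{old}} < \phi \, r_3^{\text{old}}$ (Lemma~\ref{lm:invariant III}) gives $r_1 > r_3^{\text{old}} > \phi^{-2}(r_2^{\text{old}} + r_3^{\text{old}}) = \phi^{-2} r_2$, contradicting the hypothesis. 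Thus $\mathbf{m}$ is \#4, and the very same predecessor analysis (using Lemma~\ref{lm:invariant II} after a \#3/\#4/\#5 merge, or pre-push stability after a push) additionally yields $r_3 > r_2$ in the current stack. By the argument used in the proof of Lemma~\ref{lm:balance}, this \#4 merge is necessarily followed by another merge $\mathbf{m}'$, which is either a \#3 merge (if $r_1+r_2 \leqslant r_4$) or a \#2 merge (if $r_1+r_2 > r_4$).

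If $\mathbf{m}'$ is a \#3 merge (of $R_1 \cup R_2$ with $R_3$), the potentials telescope: with $s = r_1+r_2+r_3$ and normalized sizes $(p,q,r_c) = (r_1,r_2,r_3)/s$, the combined potential increase equals $\tfrac{3s}{2} H(p,q,r_c)$, while the combined merge cost equals exactly $(2-r_c)s$. The target inequality reduces to $\tfrac{3}{2} H(p,q,r_c) \geqslant 2 - r_c$, which I would verify using the constraints $p < \phi^{-2} q$ (hypothesis), $q < r_c$ (from $r_3 > r_2$), and $r_c \leqslant 1/2$ (from the \#4 trigger $r_1+r_2 \geqslant r_3$); the extremal configuration is the degenerate triple $(0, 1/2, 1/2)$, where equality is attained. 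If instead $\mathbf{m}'$ is a \#2 merge (of $R_3$ with $R_4$), combining $r_4 < r_1+r_2 < (1+\phi^{-2}) r_2 < (\phi+\phi^{-1}) r_3$ with $r_4 > r_3$ (Lemma~\ref{lm:invariant II}) gives $r_4/r_3 \in (1, \phi+\phi^{-1}) \subset [\phi^{-2}, \phi^2]$, so Lemma~\ref{lm:delta-cost I} makes $\mathbf{m}'$ balanced; the parametrization $r_3 = (1+\alpha) r_2$, $r_1 = \beta r_2$, $r_4 = (1+\gamma) r_2$ with $0 \leqslant \alpha < \gamma < \beta < \phi^{-2}$ then reduces the task to a direct algebraic check that the surplus of $\mathbf{m}'$ dominates the deficit of $\mathbf{m}$, with boundary equality as $\alpha, \gamma, \beta \to 0$.

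The main obstacle is the algebraic verification of the cost--potential inequality in each sub-case near the tight boundary: in both cases the extremal configurations have the merged runs of nearly equal sizes, and a naive minimization of the entropy functional over the simplex fails without using the full set of constraints. It is the auxiliary inequality $r_3 > r_2$ extracted from the predecessor analysis, rather than any single invariant in isolation, that restricts the feasible configurations to the region where the inequality can be established.
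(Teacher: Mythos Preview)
Your overall strategy is right, but you both overcomplicate and undercomplete the argument relative to the paper.

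\textbf{A missed simplification.} Your predecessor analysis already excludes a $\#2$ predecessor (since, as you note, that would give $r_1 > \phi^{-2} r_2$; this is exactly Lemma~\ref{lm:invariant V}). But from either remaining possibility you can extract not just $r_2 < r_3$ but the stronger inequality $r_2 + r_3 < r_4$: if the predecessor is a push, the pre-push stack was stable, so $r_2 + r_3 = r_1^{\mathrm{old}} + r_2^{\mathrm{old}} < r_3^{\mathrm{old}} = r_4$; if the predecessor is a $\caseX$ merge, Lemma~\ref{lm:invariant I} on the pre-merge stack gives $r_2 + r_3 = r_3^{\mathrm{old}} + r_4^{\mathrm{old}} < r_5^{\mathrm{old}} = r_4$. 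Combined with $r_1 \leqslant r_3$ (since $\mathbf{m}$ is not $\#2$), this yields $r_1 + r_2 \leqslant r_2 + r_3 < r_4$, so after $\mathbf{m}$ the new top run satisfies $r'_1 < r'_3$ and $r'_1 \geqslant r'_2$: the merge $\mathbf{m}'$ is \emph{always} $\#3$. The paper does exactly this, and your entire $\#2$ sub-case for $\mathbf{m}'$ is vacuous.

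\textbf{A genuine gap.} In the $\#3$ sub-case you correctly reduce to $\tfrac{3}{2} H(p,q,r_c) \geqslant 2 - r_c$ under the constraints $p < \phi^{-2} q$, $q < r_c$, $r_c \leqslant 1/2$, but merely asserting that the boundary point $(0,\tfrac12,\tfrac12)$ is extremal is not a proof. This is the technical heart of the lemma, and the paper does not leave it as an exercise. It reparametrises by $x = r_1/(r_1+r_2)$ and $y = (r_1+r_2)/(r_1+r_2+r_3)$, so that $H(p,q,r_c) = y H(x) + H(y)$; the constraint $q \leqslant r_c$ becomes $x \geqslant 2 - 1/y$, and monotonicity of $H$ on $[0,1/2]$ gives $H(x) \geqslant H(2-1/y)$. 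One is then reduced to the one-variable inequality $F(y) := \tfrac{3}{2}\big(y H(2-1/y) + H(y)\big) - (1+y) \geqslant 0$ on $[1/2,\,(1+\phi^2)/(1+2\phi^2)]$. The paper verifies this via $F''(y) = 3 / \big((1-y)(1-2y)\ln 2\big) < 0$ on $(1/2,1)$, $F(1/2) = 0$, and $F(3/4) = 1/2 > 0$, so concavity finishes the job on $[1/2,3/4] \supset [1/2,(1+\phi^2)/(1+2\phi^2)]$. Without this (or an equivalent rigorous optimisation), your proposal stops short of a proof.
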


\begin{proof}
Let $\mathbf{m}^\star$ be the update the immediately precedes
$\mathbf{m}$. Let also $\S^\star = (R^\star_1,\ldots,R^\star_{h^\star})$, $\S = (R_1,\ldots,R_h)$ and
$\S' = (R'_1,\ldots,R'_{h'})$ be the respective states
of the stack just before $\mathbf{m}^\star$ occurs,
just before $\mathbf{m}$ occurs and just after $\mathbf{m}$ occurs.

Since $r_1 < \phi^{-2} \, r_2$,
Lemma~\ref{lm:invariant V} proves that
$\mathbf{m}^\star$ is either an update \#1 or a merge \caseX.
In both cases, it follows that $r_2 < r_3$ and that $r_2+r_3 < r_4$.
Indeed, if $\mathbf{m}^\star$ is an update \#1, then we must have
$r_2 = r^\star_1 < r^\star_2 = r_3$ and
$r_2+r_3 = r^\star_1+r^\star_2 < r^\star_3 = r_4$, and if
$\mathbf{m}'$ is a merge \caseX, then Lemmas~\ref{lm:invariant I} and~\ref{lm:invariant II}
respectively prove that $r_2+r_3 = r^\star_3+r^\star_4 < r^\star_5 = r_4$ and that $r_2 = r^\star_3 < r^\star_4 = r_3$.

Then, since $\mathbf{m}$ is a merge \caseX, we also know that
$r_1 \leqslant r_3$. Since $r_1 < \phi^{-2} \, r_2$ and $r_2+r_3 < r_4$, this means that $r_1 + r_2 \geqslant r_3$.
It follows that
$r'_2 = r_3 \leqslant r_1+ r_2 = r'_1$ and that
$r'_1 = r_1 + r_2 \leqslant r_2 + r_3 < r_4 = r'_3$.
Consequently, the merge $\mathbf{m}$ must be followed by a merge
$\mathbf{m}'$, which is triggered by case \#3.

Finally, let $x = r_1 / (r_1+r_2)$ and $y = (r_1+r_2) / (r_1+r_2+r_3)$.
It comes that $\cost(\mathbf{m})+\cost(\mathbf{m}') = (r_1+r_2+r_3)(1+y)$ and that
$\Delta_{\pot}(\mathbf{m}) + \Delta_{\pot}(\mathbf{m}') = 3 (r_1+r_2+r_3) \left(y H(x) + H(y)\right)\!/ 2$, where
we recall that $H$ is the binary Shannon entropy function, with
$H(t) = - t \log_2(t) - (1-t) \log_2(t)$.
The above inequalities about $r_1$, $r_2$ and $r_3$ prove that $0 \leqslant 2 - 1/y \leqslant x \leqslant 1/(1+\phi^2)$.
Since $H$ is increasing on the interval $[0,1/2]$, and since
$1+\phi^2 \geqslant 2$, it follows that
$\Delta_{\pot}(\mathbf{m}) + \Delta_{\pot}(\mathbf{m}') \geqslant 3 (r_1+r_2+r_3) \left(y H(2 - 1/y) + H(y)\right)\!/ 2$.

Hence, let $F(y) = 3 \left(y H(2 - 1/y) + H(y)\right)\!/2 - (1+y)$.
We shall prove that $F(y) \geqslant 0$ for all $y \geqslant 0$
such that
$0 \leqslant 2 - 1/y \leqslant 1/(1+\phi^2)$, i.e., such that
$1/2 \leqslant y \leqslant 
(1+\phi^2)/(1+2\phi^2)$.
To that mean, observe that
$F''(y) = 3 /\!\left((1-y)(1-2y) \ln(2)\right) < 0$
for all $y \in (1/2,1)$. Thus, $F$ is concave on
$(1/2,1)$.
Since $F(1/2) = 0$ and $F(3/4) = 1/2$, it follows that
$F(y) \geqslant 0$ for all $y \in [1/2,3/4]$.
Checking that $(1+\phi^2)/(1+2\phi^2) < 3/4$ completes the proof.
\end{proof}

\begin{lemma}\label{lm:delta-cost IV}
Let $\mathbf{m}$ be the first merge of the ending sequence
associated with a run $R$. Let $R_1$ and $R_2$ be the runs
that $\mathbf{m}$ merges together.
If $r_1 > \phi^2 \, r_2$, it holds that
$\cost(\mathbf{m}) \leqslant \Delta_{\pot}(\mathbf{m}) + r$.
\end{lemma}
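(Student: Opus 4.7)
The plan is to reduce the claim to a single-variable inequality about the binary Shannon entropy. First, I would observe that $R$ is still the topmost run of the stack when the ending sequence associated with $R$ begins. Indeed, the starting sequence of $R$ consists of the Case~\#1 push (which places $R$ on top) followed by Case~\#2 merges, and a Case~\#2 merge only touches runs $R_2$ and $R_3$, leaving $R_1$ unchanged. Consequently, when the first merge $\mathbf{m}$ of the ending sequence occurs---necessarily triggered by Case~\#3, \#4, or~\#5---it combines $R_1 = R$ with $R_2$. Hence $r_1 = r$, and the claim $\cost(\mathbf{m}) \leqslant \Delta_{\pot}(\mathbf{m}) + r$ is equivalent to $r_2 \leqslant \Delta_{\pot}(\mathbf{m})$.

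Next, I would set $x = r_2/(r_1+r_2)$ and rewrite the potential variation as $\Delta_{\pot}(\mathbf{m}) = \tfrac{3}{2}(r_1+r_2)\, H(x)$, where $H$ is the binary Shannon entropy used in Lemma~\ref{lm:delta-cost I}. The hypothesis $r_1 > \phi^2 \, r_2$ translates to $x < \Phi$, where $\Phi = 1/(1+\phi^2)$. Dividing through by $r_1+r_2 > 0$, the inequality to prove becomes $x \leqslant \tfrac{3}{2}\, H(x)$, and it only remains to verify this bound on the interval $[0,\Phi]$.

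For the last step, I would introduce $g(x) = \tfrac{3}{2}\, H(x) - x$ and observe that $g(0) = 0$. A direct computation gives $g'(x) = \tfrac{3}{2}\log_2((1-x)/x) - 1$, which is non-negative precisely when $x \leqslant 1/(1+2^{2/3})$. A short numerical check then confirms that $\Phi \approx 0.276 < 0.387 \approx 1/(1+2^{2/3})$, so $g$ is non-decreasing on $[0,\Phi]$ and thus non-negative there, as desired.

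The argument has no real combinatorial subtlety: everything beyond the structural observation that Case~\#2 preserves the identity of $R_1$ follows from the invariants already proved and a short calculus check. The only place that requires genuine care is the monotonicity of $g$ on $[0, \Phi]$, namely the numerical verification that $\Phi$ lies inside the interval where $g$ is increasing; this is the main (very mild) obstacle, and handling it carefully ensures that the additional $r$ term on the right-hand side is exactly what is needed to absorb the deficit caused by the very unbalanced first merge.
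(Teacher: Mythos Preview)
Your proof is correct. The structural observation that $R_1 = R$ when the ending sequence begins is right, and the calculus reduction to $g(x) = \tfrac{3}{2}H(x) - x \geqslant 0$ on $[0,\Phi]$ goes through cleanly.

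The paper's own argument is shorter and more elementary, however, and does not pass through the entropy function or any differentiation. Once $r = r_1$ and (from the hypothesis) $r_1 \geqslant r_2$ are known, the paper simply expands
\[
\Delta_{\pot}(\mathbf{m}) = \tfrac{3}{2}\bigl(r \log_2\tfrac{r+r_2}{r} + r_2 \log_2\tfrac{r+r_2}{r_2}\bigr) \geqslant r_2 \log_2\tfrac{r+r_2}{r_2} \geqslant r_2 = \cost(\mathbf{m}) - r,
\]
dropping the non-negative first term and using $(r+r_2)/r_2 \geqslant 2$. In effect, the paper only needs $r_1 \geqslant r_2$, not the full $r_1 > \phi^2 r_2$. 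Your route, by contrast, mirrors the style of Lemma~\ref{lm:delta-cost I} and makes the $\Phi$ threshold explicit; it is perfectly valid but works harder than necessary here, since the crude bound $\log_2 2 = 1$ already suffices.
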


\begin{proof}
By definition of $\mathbf{m}$,
we have~$R = R_1$, and thus $r = r_1 \geqslant r_2$.
Hence, it follows that
$\Delta_{\pot}(\mathbf{m}) = r \log((r+r_2)/r) +
r_2 \log((r+r_2)/r_2) \geqslant r_2 \log((r+r_2)/r_2) \geqslant
r_2 = \cost(\mathbf{m}) - r$.
\end{proof}

\begin{proposition}\label{pro:delta-cost}
Let $\mathbf{s}$ be the ending sequence associated with a run $R$,
and let $\Delta_{\pot}(\mathbf{s})$ and $\cost(\mathbf{s})$
be its potential variation and its merge cost.
It holds that~$\cost(\mathbf{s}) \leqslant \Delta_{\pot}(\mathbf{s}) + r$.
\end{proposition}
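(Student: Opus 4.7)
The plan is to walk through the merges $\mathbf{m}_1,\ldots,\mathbf{m}_k$ of the ending sequence $\mathbf{s}$ in order and to partition them into blocks of either one or two consecutive merges. I will show that each block's merge cost is bounded by its potential variation, with the one possible exception being a block reduced to the single merge $\mathbf{m}_1$, whose cost I allow to exceed the potential variation by at most $r$. Summing the resulting inequalities over all blocks yields $\cost(\mathbf{s}) \leqslant \Delta_{\pot}(\mathbf{s}) + r$.

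The blocking rule is greedy. For each merge $\mathbf{m}_i$ not already absorbed into a previous block, let $r_1, r_2$ denote the lengths of the topmost two runs of the stack just before $\mathbf{m}_i$ fires. If $\mathbf{m}_i$ is a $\#2$ merge (which by definition of an ending sequence forces $i \geqslant 2$), I place it alone in its block; Lemma~\ref{lm:delta-cost II} guarantees it is balanced. If $\mathbf{m}_i$ is a \caseX merge, I split on the ratio $r_1/r_2$: when $\phi^{-2} r_2 \leqslant r_1 \leqslant \phi^2 r_2$, Lemma~\ref{lm:delta-cost I} shows $\mathbf{m}_i$ is balanced alone; when $r_1 < \phi^{-2} r_2$, Lemma~\ref{lm:delta-cost III} shows that $\mathbf{m}_{i+1}$ exists and that the pair $\{\mathbf{m}_i, \mathbf{m}_{i+1}\}$ is balanced, so I take both as one block; and when $r_1 > \phi^2 r_2$, I place $\mathbf{m}_i$ alone and invoke Lemma~\ref{lm:delta-cost IV} to obtain $\cost(\mathbf{m}_i) \leqslant \Delta_{\pot}(\mathbf{m}_i) + r$.

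The key structural observation is that the last situation, $r_1 > \phi^2 r_2$ for a \caseX merge, can only arise when $i = 1$. Indeed, Lemma~\ref{lm:invariant IV} asserts that immediately after every merge of an ending sequence the invariant $r_1 < \phi^2 r_2$ holds, and for $i \geqslant 2$ the stack state just before $\mathbf{m}_i$ coincides with the stack state just after $\mathbf{m}_{i-1}$, so the invariant forbids the offending ratio. Thus the extra slack $r$ is charged at most once, and only to the initial block.

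Checking that the partition is well-defined is routine: absorbed merges are skipped, so each merge belongs to exactly one block. I will also note that Lemma~\ref{lm:delta-cost III} remains applicable when invoked at $i = 1$, since its proof uses Lemma~\ref{lm:invariant V} and allows the preceding update to be a push. The main obstacle I anticipate is simply tracking the case split carefully — in particular, confirming that the three \caseX subcases together with the $\#2$ case cover every possibility and that the pairing in the Lemma~\ref{lm:delta-cost III} case never attempts to reach past $\mathbf{m}_k$ (which it cannot, since Lemma~\ref{lm:delta-cost III} guarantees the existence of $\mathbf{m}_{i+1}$). Once this bookkeeping is in place, summation over all blocks immediately yields the desired inequality.
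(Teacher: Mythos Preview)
Your proposal is correct and follows essentially the same approach as the paper: both partition the ending sequence into singleton or two-merge blocks, pairing an unbalanced \caseX merge with $r_1 < \phi^{-2} r_2$ together with its successor via Lemma~\ref{lm:delta-cost III}, and invoking Lemma~\ref{lm:invariant IV} to confine the $r_1 > \phi^2 r_2$ case (and hence the $+r$ slack from Lemma~\ref{lm:delta-cost IV}) to the first merge only. Your case split is a touch more explicit than the paper's phrasing, but the argument is the same.
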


\begin{proof}
Let us group the merges of $\mathbf{s}$ as follows:
\begin{enumerate}[(i)]
 \item if $\mathbf{m}$ is an unbalanced merge \caseX between two runs $R_1$ and $R_2$ such that $r_1 < r_2$, then $\mathbf{m}$ is followed by another merge $\mathbf{m}'$, and we group $\mathbf{m}$
 and $\mathbf{m}'$ together;\label{delta-cost:case:1}
 \item otherwise, and if $\mathbf{m}$ has not been grouped
 with its predecessor, it forms its own group.
\end{enumerate}
In case~\eqref{delta-cost:case:1}, Lemma~\ref{lm:delta-cost IV} ensures that $\mathbf{m}'$ itself cannot be grouped with another merge.
This means that our grouping is unambiguous.

Then, let $\mathbf{g}$ be such a group,
with potential variation $\Delta_{\pot}(\mathbf{g})$ and
merge cost $\cost(\mathbf{g})$.
Lemmas~\ref{lm:delta-cost I} to~\ref{lm:delta-cost IV}
prove that $\cost(\mathbf{g}) \leqslant \Delta_{\pot}(\mathbf{g}) + r$ if $\mathbf{g}$ is formed of the first merge of $\mathbf{s}$ only,
and that $\cost(\mathbf{g}) \leqslant \Delta_{\pot}(\mathbf{g})$
in all other cases. Proposition~\ref{pro:delta-cost} follows.
\end{proof}

Collecting all the above results is enough to prove Theorem~\ref{thm:complexity 1.5 n + n H}.
First, like in Section~\ref{sec:analysis1},
computing the run decomposition and merging runs in
starting sequences has a cost $\O(n)$,
and the final merges of line~11 may be taken care of by Remark~\ref{rem:main-loop}.
Second, by Proposition~\ref{pro:delta-cost},
ending sequences have a merge cost dominated by
$\Delta_{\pot} + n$, where $\Delta_{\pot}$
is the total variation of potential during the algorithm.
Observing that
$\Delta_{\pot} = -3/2 \sum_{i=1}^\rho r_i \log_2(r_i/n) = - 3 n \H / 2$
concludes the proof of the theorem.

%%%%%%%%%%%%%%%%%%%%%%%%%%%%%%%%%%%%%%%%%%%%%%%%%%%%%%%%%%%%%%%%%%%%%%%%%%%%%%%
\section{About the Java version of TimSort}\label{sec:java}

\begin{figure}[ht]
\begin{center}
\begin{small}
\setlength{\tabcolsep}{1pt}
\begin{tabular}{cccccccccccc}
\begin{tikzpicture}[every node/.style={align=center,text width=1em,inner sep=2.5pt}]
\matrix (A) [matrix of nodes,nodes={draw}] { {\bf 24}\\ };
\node [above, yshift=-1mm] at (A.north) {$\#1$};
\end{tikzpicture}
&
\begin{tikzpicture}[every node/.style={align=center,text width=1em,inner sep=2.5pt}]
\matrix (A) [matrix of nodes,nodes={draw}] { {\bf 18}\\24\\ };
\node [above, yshift=-1mm] at (A.north) {$\#1$};
\end{tikzpicture}
&
\begin{tikzpicture}[every node/.style={align=center,text width=1em,inner sep=2.5pt}]
\matrix (A) [matrix of nodes,nodes={draw}] { {\bf 50}\\18\\24\\ };
\node [above, yshift=-1mm] at (A.north) {$\#1$};
\end{tikzpicture}
&
\begin{tikzpicture}[every node/.style={align=center,text width=1em,inner sep=2.5pt}]
\matrix (A) [matrix of nodes,nodes={draw}] { 50\\42\\ };
\node [above, yshift=-1mm] at (A.north) {$\#2$};
\end{tikzpicture}
&
\begin{tikzpicture}[every node/.style={align=center,text width=1em,inner sep=2.5pt}]
\matrix (A) [matrix of nodes,nodes={draw}] { 92\\ };
\node [above, yshift=-1mm] at (A.north) {$\#3$};
\end{tikzpicture}
&
\begin{tikzpicture}[every node/.style={align=center,text width=1em,inner sep=2.5pt}]
\matrix (A) [matrix of nodes,nodes={draw}] { {\bf 28}\\92\\ };
\node [above, yshift=-1mm] at (A.north) {$\#1$};
\end{tikzpicture}
&
\begin{tikzpicture}[every node/.style={align=center,text width=1em,inner sep=2.5pt}]
\matrix (A) [matrix of nodes,nodes={draw}] { {\bf 20}\\28\\92\\ };
\node [above, yshift=-1mm] at (A.north) {$\#1$};
\end{tikzpicture}
&
\begin{tikzpicture}[every node/.style={align=center,text width=1em,inner sep=2.5pt}]
\matrix (A) [matrix of nodes,nodes={draw}] { {\bf 6}\\20\\28\\92\\ };
\node [above, yshift=-1mm] at (A.north) {$\#1$};
\end{tikzpicture}
&
\begin{tikzpicture}[every node/.style={align=center,text width=1em,inner sep=2.5pt}]
\matrix (A) [matrix of nodes,nodes={draw}] { {\bf 4}\\6\\20\\28\\92\\ };
\node [above, yshift=-1mm] at (A.north) {$\#1$};
\end{tikzpicture}
&
\begin{tikzpicture}[every node/.style={align=center,text width=1em,inner sep=2.5pt}]
\matrix (A) [matrix of nodes,nodes={draw}] { {\bf 8}\\4\\6\\20\\28\\92\\ };
\node [above, yshift=-1mm] at (A.north) {$\#1$};
\end{tikzpicture}
&
\begin{tikzpicture}[every node/.style={align=center,text width=1em,inner sep=2.5pt},row 2 column 1/.style={text=red},row 3 column 1/.style={text=red},row 4 column 1/.style={text=red}]
\matrix (A) [matrix of nodes,nodes={draw}] { 8\\10\\20\\28\\92\\ };
\node [above, yshift=-1mm] at (A.north) {$\#2$};
\end{tikzpicture}
&
\begin{tikzpicture}[every node/.style={align=center,text width=1em,inner sep=2.5pt}]
\matrix (A) [matrix of nodes,nodes={draw}] { {\bf 1}\\8\\10\\20\\28\\92\\ };
\node [above, yshift=-1mm] at (A.north) {$\#1$};
\end{tikzpicture}
\end{tabular}
\end{small}
\end{center}
\vspace{-6mm}
\caption{Execution of the main loop of Java's \TS (Algorithm~\ref{alg:TS translated}, without merge case \#5, at line~\ref*{algline:new_cond_translated}), 
with the lengths of the runs in $\rundecomp$ being $(24, 18, 50, 28, 20, 6, 4, 8, 1)$. 
When the second to last run (of length~8) is pushed onto the stack, the while loop of line~\ref*{algline:inner_while_translated} stops after 
only one merge, breaking the invariant (in red), unlike what we see in~Figure~\ref{fig:ts-python-exec} using the Python version of \TS.}
\label{fig:invariant_bug}
\end{figure}
  
Algorithm~\ref{alg:merge_collapse} (and therefore Algorithm~\ref{alg:TS translated}) does not correspond to the original \TS. 
Before release 3.4.4 of Python, the second part of the condition (in blue) in the test at line~\ref*{algline:new_cond} of {\tt merge\_collapse} (and therefore merge case \#5 of Algorithm~\ref{alg:TS translated}) was missing. This version 
of the algorithm worked fine, meaning that it did actually sort arrays, but the 
invariant given by Equation~\eqref{eq:inv1} did not hold.
Figure~\ref{fig:invariant_bug} illustrates the difference caused by the missing condition when running Algorithm~\ref{alg:TS translated} on the same input as in~Figure~\ref{fig:ts-python-exec}. 

This was discovered by de Gouw {\em et al.}~\cite{GoRoBoBuHa15} when trying to prove the correctness of the Java implementation of \TS (which is the same as in the earlier versions of Python). And since the Java version of the algorithm uses the (wrong) invariant to compute the maximum size of the stack used to store the runs, the authors were able to build a sequence of runs that causes the Java implementation of \TS to crash.
They proposed two solutions to fix \TS: reestablish the invariant, which led to the current Python version,
or keep the original algorithm and compute correct bounds for the stack size, which is the solution that was chosen in Java~9 (note that this is the second time these values had to be changed). 
To do the latter, the developers used the claim in~\cite{GoRoBoBuHa15} that the invariant cannot be violated for two consecutive runs on the stack, which turns out to be false,\footnote{This is the consequence of a small error in the proof of their Lemma~1. The constraint $C_1>C_2$ has no reason to be. Indeed, in our example, we have $C_1=25$ and $C_2=31$.} as illustrated in Figure~\ref{fig:inv_still_broken}.
Thus, it is still possible to cause the Java implementation to fail:
it uses a stack of runs of size at most~49
% the actual maximum for the size of the stack of runs is~49
and we were able to compute an example
% with at least one more run onto the stack
requiring a stack of size~50 (see~\url{http://igm.univ-mlv.fr/~pivoteau/Timsort/Test.java}), causing an error at runtime in Java's sorting method.

\begin{figure}[h]
\begin{center}
\begin{small}
\setlength{\tabcolsep}{1pt}
\begin{tabular}{ccccccccccccc}
\begin{tikzpicture}[every node/.style={align=center,text width=15pt,inner sep=2.5pt}]
\matrix (A) [matrix of nodes,nodes={draw}] { {\bf 109}\\ };
\node [above, yshift=-1mm] at (A.north) {$\#1$};
\end{tikzpicture}
&
\begin{tikzpicture}[every node/.style={align=center,text width=15pt,inner sep=2.5pt}]
\matrix (A) [matrix of nodes,nodes={draw}] { {\bf 83}\\109\\ };
\node [above, yshift=-1mm] at (A.north) {$\#1$};
\end{tikzpicture}
&
\begin{tikzpicture}[every node/.style={align=center,text width=15pt,inner sep=2.5pt}]
\matrix (A) [matrix of nodes,nodes={draw}] { {\bf 25}\\83\\109\\ };
\node [above, yshift=-1mm] at (A.north) {$\#1$};
\end{tikzpicture}
&
\begin{tikzpicture}[every node/.style={align=center,text width=15pt,inner sep=2.5pt}]
\matrix (A) [matrix of nodes,nodes={draw}] { {\bf 16}\\25\\83\\109\\ };
\node [above, yshift=-1mm] at (A.north) {$\#1$};
\end{tikzpicture}
&
\begin{tikzpicture}[every node/.style={align=center,text width=15pt,inner sep=2.5pt}]
\matrix (A) [matrix of nodes,nodes={draw}] { {\bf 8}\\16\\25\\83\\109\\ };
\node [above, yshift=-1mm] at (A.north) {$\#1$};
\end{tikzpicture}
&
\begin{tikzpicture}[every node/.style={align=center,text width=15pt,inner sep=2.5pt}]
\matrix (A) [matrix of nodes,nodes={draw}] { {\bf 7}\\8\\16\\25\\83\\109\\ };
\node [above, yshift=-1mm] at (A.north) {$\#1$};
\end{tikzpicture}
&
\begin{tikzpicture}[every node/.style={align=center,text width=15pt,inner sep=2.5pt}]
\matrix (A) [matrix of nodes,nodes={draw}] { {\bf 26}\\7\\8\\16\\25\\83\\109\\ };
\node [above, yshift=-1mm] at (A.north) {$\#1$};
\end{tikzpicture}
&
\begin{tikzpicture}[every node/.style={align=center,text width=15pt,inner sep=2.5pt}]
\matrix (A) [matrix of nodes,nodes={draw}] { 26\\15\\16\\25\\83\\109\\ };
\node [above, yshift=-1mm] at (A.north) {$\#2$};
\end{tikzpicture}
&
\begin{tikzpicture}[every node/.style={align=center,text width=15pt,inner sep=2.5pt}]
\matrix (A) [matrix of nodes,nodes={draw}] { 26\\31\\25\\83\\109\\ };
\node [above, yshift=-1mm] at (A.north) {$\#2$};
\end{tikzpicture}
&
\begin{tikzpicture}[every node/.style={align=center,text width=15pt,inner sep=2.5pt}]
\matrix (A) [matrix of nodes,nodes={draw}] { 26\\56\\83\\109\\ };
\node [above, yshift=-1mm] at (A.north) {$\#2$};
\end{tikzpicture}
&
\begin{tikzpicture}[every node/.style={align=center,text width=15pt,inner sep=2.5pt}]
\matrix (A) [matrix of nodes,nodes={draw}] { {\bf 2}\\26\\56\\83\\109\\ };
\node [above, yshift=-1mm] at (A.north) {$\#1$};
\end{tikzpicture}
&
\begin{tikzpicture}[every node/.style={align=center,text width=15pt,inner sep=2.5pt}]
\matrix (A) [matrix of nodes,nodes={draw}] { {\bf 27}\\2\\26\\56\\83\\109\\ };
\node [above, yshift=-1mm] at (A.north) {$\#1$};
\end{tikzpicture}
&
\begin{tikzpicture}[every node/.style={align=center,text width=15pt,inner sep=2.5pt},row 2 column 1/.style={text=red},row 3 column 1/.style={text=red},row 4 column 1/.style={text=red},row 5 column 1/.style={text=red}]
\matrix (A) [matrix of nodes,nodes={draw}] { 27\\28\\56\\83\\109\\ };
\node [above, yshift=-1mm] at (A.north) {$\#2$};
\end{tikzpicture}
\end{tabular}
\end{small}
\end{center}
\vspace{-6mm}
\caption{
Execution of the main loop of the Java version of \TS (without merge case \#5, at line~\ref*{algline:new_cond_translated} of Algorithm~\ref{alg:TS translated}), with the lengths of the runs in $\rundecomp$ 
being $(109, 83, 25, 16, 8, 7, 26, 2, 27)$. When the algorithm stops, the invariant is violated twice, for consecutive runs (in red).}
\label{fig:inv_still_broken}
\end{figure}

Even if the bug we highlighted in Java's \TS is very unlikely to happen, this should be corrected.
And, as advocated by de Gouw {\em et al.} and Tim Peters himself,\footnote{Here is the discussion about the correction in Python: \url{https://bugs.python.org/issue23515}.} we strongly believe that the best solution would be to correct the algorithm as in the current version of Python, in order to keep it clean and simple.
However, since this is the implementation of Java's sort for the moment, there are two questions we would like to tackle: Does the complexity analysis holds without the missing condition? And, can we compute an actual bound for the stack size?
We first address the complexity question. It turns out that the missing invariant was a key ingredient for having a simple and elegant proof.

\begin{proposition}\label{pro:domination}
\reversemarginpar
\marginnote{\rm Full proof in Section~\ref{sec:domination}.}
At any time during the main loop of Java's \TS, if the stack of runs is
$(R_1,\ldots, R_h)$ then we have
$r_3<r_4<\ldots<r_h$ and, for all $i \geqslant 3$, we have $(2 + \sqrt{7}) r_i \geqslant r_2 + \ldots + r_{i-1}$.
\end{proposition}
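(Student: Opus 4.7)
The plan is to prove both assertions jointly by induction on the sequence of stack updates executed by Java's main loop, taking the empty stack as the vacuous base case. In the Java version there are only four kinds of updates to analyze: a push (case \#1) and the three merges \#2, \#3, \#4, so each inductive step consists of a single case analysis on which of these four updates just occurred. I will denote the stack before the update by $\mathcal{S}=(R_1,\ldots,R_h)$ and the stack after by $\overline{\mathcal{S}}=(\overline{R}_1,\ldots,\overline{R}_{\overline{h}})$, as in Section~\ref{sec:analysis1}.

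The merge cases are routine. After a merge of type \#3 or \#4 one has $\overline{r}_j = r_{j+1}$ for all $j \geqslant 2$, while after \#2 one has $\overline{r}_2 = r_2 + r_3$ and $\overline{r}_j = r_{j+1}$ for $j \geqslant 3$. In either flavor the strict chain $r_3 < r_4 < \ldots$ simply re-indexes, and the sum invariant at the new index $i$ reduces to the old one at index $i+1$: the new partial sum $\overline{r}_2 + \ldots + \overline{r}_{i-1}$ telescopes to $r_2 + \ldots + r_i$ (after \#2) or to $r_3 + \ldots + r_i$ (after \#3 or \#4), both dominated by $(2+\sqrt 7)\, r_{i+1} = (2+\sqrt 7)\, \overline{r}_i$ by the inductive hypothesis.

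The push is the genuine obstacle. At the moment a push occurs, the inner while loop has just exited, so the pre-push stack satisfies the breakout conditions $r_1 \leqslant r_3$, $r_1 < r_2$, and $r_1 + r_2 < r_3$. The new monotonicity invariant $r_2 < r_3 < \ldots < r_h$ (old indexing) is immediate, since $r_1 + r_2 < r_3$ forces $r_2 < r_3$, which concatenates with the old chain. What must then be established is $(2+\sqrt 7)\, r_{i-1} \geqslant r_1 + r_2 + \ldots + r_{i-2}$ for every $i \geqslant 3$. This is strictly stronger than the inductive hypothesis at index $i-1$, whose bound on $r_2 + \ldots + r_{i-2}$ alone overshoots when one naively adds $r_1$ on the right-hand side.

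The hardest step will therefore be to obtain this strengthened bound at breakout. My plan is to carry a strengthened auxiliary invariant through the induction, stating that in every breakout state $(2+\sqrt 7)\, r_j \geqslant r_1 + r_2 + \ldots + r_{j-1}$ already holds for $j \geqslant 3$; once true, it makes the push step automatic. To re-establish it at a fresh breakout, I would perform a secondary induction on $j$, splitting according to whether $r_{j-1}$ is close to $r_j$ or much smaller: in the first regime, the strict chain $r_3 < r_4 < \ldots < r_{j-1}$ together with the breakout inequality $r_1 + r_2 < r_3$ lets one push the contribution of the bottom of the stack down geometrically; in the second regime, the non-strengthened invariant at index $j$ alone leaves enough slack to absorb $r_1$. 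The constant $2 + \sqrt 7$, the positive root of $x^2 - 4 x - 3 = 0$, appears precisely as the smallest value at which these two regimes meet, so that the algebra closes on both sides of the case split.
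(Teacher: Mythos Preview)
Your treatment of the merge cases is fine, and you correctly identify that the push step is the heart of the matter. The gap is in your plan for handling the push.

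You intend, in the ``first regime'' ($r_{j-1}$ close to $r_j$), to use ``the strict chain $r_3 < r_4 < \ldots < r_{j-1}$ together with $r_1+r_2<r_3$'' to push contributions down \emph{geometrically}. But in Java's \TS that chain is merely strictly increasing, not geometrically increasing: the invariant $r_{i+2} > r_i + r_{i+1}$ can fail at many consecutive indices (this is precisely the defect of the Java version), so locally one may have $r_{i-1}, r_i, r_{i+1}, \ldots$ growing by roughly constant increments, i.e., linearly. A purely static case split on the single ratio $r_{j-1}/r_j$ cannot see this; the monotonicity hypothesis and the breakout inequality at the very bottom give you nothing exploitable in between. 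So the step ``push down geometrically'' simply does not go through, and with it the secondary induction stalls exactly where $r_{j-1}/r_j > (2+\sqrt 7)/(3+\sqrt 7)$.

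The paper's proof addresses this obstacle with genuinely different machinery. It introduces \emph{obstruction indices} (those $i$ with $r_i \leqslant r_{i-1}+r_{i-2}$) and proves, by replaying the execution history that created each run on the stack, that along any block $i-k,\ldots,i$ of consecutive obstruction indices the sizes satisfy a quantitative linear-decrement bound, namely $r_{i-k-2} \leqslant \frac{\alpha_\infty+1-k}{\alpha_\infty+2}\, r_{i-1}$ (Lemma~\ref{lemma:hard-b}). This in particular caps the number of consecutive obstruction indices and supplies the missing control that strict monotonicity alone cannot provide. The main bound is then established not by induction on the number of stack updates but by induction on the run size $r_i$, through an explicit piecewise-linear \emph{expansion function} $f$ with $r_1+\ldots+r_{i-1} \leqslant r_i\, f(r_{i-1}/r_i)$ and $\sup f = 2+\sqrt 7$ (Lemma~\ref{lem:4b}). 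The constant $2+\sqrt 7$ arises from maximizing a quadratic in the length $k$ of an obstruction block, not from the meeting point of your two regimes. Without an analogue of the obstruction-index lemma, your outline does not close.
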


\begin{proof}[Proof ideas]
The proof of Proposition~\ref{pro:domination} is much more technical and difficult than insightful, and therefore
we just summarize its main steps.
As in previous sections, this proof relies on several inductive arguments, using both inductions
on the number of merges performed, on the stack size and on the run sizes.
The inequalities $r_3<r_4<\ldots<r_h$ come at once, hence we focus on the second part of Proposition~\ref{pro:domination}.

Since separating starting and ending sequences was useful in Section~\ref{sec:analysis2},
we first introduce the notion of \emph{stable} stacks: a stack $\S$ is stable if,
when operating on the stack $\S = (R_1,\ldots,R_h)$,
Case \#1 is triggered (i.e. Java's \TS is about to perform a \emph{run push} operation).

We also call \emph{obstruction indices} the integers $i \geqslant 3$
such that $r_i \leqslant r_{i-1} + r_{i-2}$: although they do not exist in Python's \TS,
they may exist, and even be consecutive, in Java's \TS.
We prove that, if $i-k,i-k+1,\ldots,i$ are obstruction indices, then the stack sizes $r_{i-k-2},\ldots,r_i$
grow ``at linear speed''. For instance, in the last stack of Figure~\ref{fig:inv_still_broken}, obstruction indices are $4$ and $5$,
and we have $r_2 = 28$, $r_3 = r_2 + 28$, $r_4 = r_3 + 27$ and $r_5 = r_4 + 26$.

Finally, we study so-called \emph{expansion functions}, i.e.
functions $f : [0,1] \mapsto \mathbb{R}$ such that, for every
stable stack $\S = (R_1,\ldots, R_h)$, we have $r_2+\ldots+r_{h-1} \leqslant r_h f(r_{h-1} / r_h)$.
We exhibit an explicit function $f$ such that $f(x) \leqslant 2+\sqrt{7}$ for all $x \in [0,1]$,
and we prove by induction on $r_h$ that $f$ is an expansion function,
from which we deduce Proposition~\ref{pro:domination}.
\end{proof}

Once Proposition~\ref{pro:domination} is proved, we easily recover the following variant
of Lemmas~\ref{lm:invariant I} and~\ref{lm:h-is-small}.

\begin{lemma}\label{lem:height-J-Java}
At any time during the main loop of Java's \TS, if the stack is
$(R_1,\ldots, R_h)$ then we have 
$r_2 / (2 + \sqrt{7}) \leqslant r_3 < r_4 < \ldots < r_h$ and, for all $i \geqslant j \geqslant 3$, we have $r_i \geqslant \delta^{i-j-4}r_j$,
where $\delta = \left( 5/(2+\sqrt{7}) \right)^{1/5} > 1$. 
Furthermore, at any time during an ending sequence, including just before it starts and just after it ends,
we have $r_1 \leqslant (2 + \sqrt{7}) r_3$. 
\end{lemma}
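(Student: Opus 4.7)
The lemma bundles four assertions: (i) $r_2 \leqslant (2+\sqrt{7}) r_3$, (ii) the strict monotonicity $r_3 < r_4 < \ldots < r_h$, (iii) the exponential gap $r_i \geqslant \delta^{i-j-4} r_j$ for all $i \geqslant j \geqslant 3$, and (iv) the bound $r_1 \leqslant (2+\sqrt{7}) r_3$ throughout an ending sequence. Assertions (i) and (ii) follow immediately from Proposition~\ref{pro:domination}: (i) is the case $i = 3$ of the bound $(2+\sqrt{7}) r_i \geqslant r_2+\ldots+r_{i-1}$, while (ii) is the first half of that proposition.

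For (iii), my plan is a five-at-a-time amortization. If $i - j \leqslant 4$, then $\delta^{i-j-4} \leqslant 1$ and the claim reduces to $r_i \geqslant r_j$, which is given by (ii). If $i - j \geqslant 5$, I would apply Proposition~\ref{pro:domination} at index $i$ and, using (ii), underestimate the right-hand side by the five consecutive terms $r_{i-5}+r_{i-4}+\ldots+r_{i-1} \geqslant 5 r_{i-5}$, obtaining $r_i \geqslant (5/(2+\sqrt{7})) r_{i-5} = \delta^5 \, r_{i-5}$. Iterating this $\lfloor (i-j)/5 \rfloor$ times and invoking monotonicity once more to compare $r_{i-5\lfloor(i-j)/5\rfloor}$ with $r_j$, together with the observation $5\lfloor(i-j)/5\rfloor \geqslant i-j-4$, yields the desired bound.

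For (iv), I would argue by induction on the number of merges performed inside the ending sequence, maintaining the invariant $r_1 \leqslant (2+\sqrt{7}) r_3$. The base case (just before the ending sequence starts) is immediate: the starting sequence cannot terminate while case \#2 is still enabled, so $r_1 \leqslant r_3$. For the inductive step, a merge \#2 leaves $r_1$ unchanged while replacing $r_3$ by the strictly larger $r_4$, so the invariant only improves. A \caseX merge (\#3 or \#4 in the Java version) fires only when \#2 fails, so $r_1 \leqslant r_3$ at that instant; the new top-of-stack run has length $r_1 + r_2 \leqslant r_2 + r_3$, which Proposition~\ref{pro:domination} applied at index $4$ bounds by $(2+\sqrt{7}) r_4$, and $r_4$ becomes the new $r_3$. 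Cases where the merge leaves $h < 3$ make the invariant vacuous and can be dismissed.

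The delicate step is (iv): $r_1$ can genuinely grow during an ending sequence, so a naive bookkeeping might fail. The resolution hinges on the algorithmic subtlety that case \#2 is always tested first in the main loop, so every \caseX merge begins with $r_1 \leqslant r_3$, which is precisely the input needed to feed Proposition~\ref{pro:domination} at index $4$ and dominate $r_1+r_2$ by $(2+\sqrt{7}) r_4$. By contrast, (i)--(iii) are essentially a matter of repackaging Proposition~\ref{pro:domination}.
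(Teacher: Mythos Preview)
Your proof is correct and follows essentially the same route as the paper's: parts (i) and (ii) are read off directly from Proposition~\ref{pro:domination}, part (iii) is obtained by the same five-step telescoping $(2+\sqrt{7})r_{j+5} \geqslant r_j+\ldots+r_{j+4} \geqslant 5r_j$ combined with monotonicity, and part (iv) is the same induction over merges, distinguishing Case~\#2 (where the invariant is carried forward via $r_3 < r_4$) from Cases~\#3/\#4 (where the priority of \#2 forces $r_1 \leqslant r_3$, and then Proposition~\ref{pro:domination} at index~$4$ bounds $r_1+r_2 \leqslant r_2+r_3 \leqslant (2+\sqrt{7})r_4$). Your write-up is in fact slightly more explicit than the paper's about the floor arithmetic in (iii) and the vacuous $h<3$ case in (iv).
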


\begin{proof}
The inequalities $r_2 / (2 + \sqrt{7}) \leqslant r_3 < r_4 < \ldots < r_h$ are just a (weaker) restatement of Proposition~\ref{pro:domination}.
Then, for $j \geqslant 3$, we have $(2+\sqrt{7}) r_{j+5} \geqslant r_j+\ldots+r_{j+4} \geqslant 5 r_j$, i.e. $r_{j+5} \geqslant \delta^5 r_j$,
from which one gets that $r_i \geqslant \delta^{i-j-4}r_j$.

Finally, we prove by induction
that $r_1 \leqslant (2 + \sqrt{7}) r_3$ during ending sequences.
First, when the ending sequence starts, $r_1 < r_3 \leqslant (2+\sqrt{7}) r_3$.
Before any merge during this sequence, if the stack is $\S=(R_1,\ldots R_h)$, then we denote by $\overline{\S} = (\overline{R}_1,\ldots,\overline{R}_{h-1})$
the stack after the merge.
If the invariant holds before the merge, in Case \#2, 
we have $\overline{r}_1 = r_1 \leqslant (2+\sqrt{7}) r_3 \leqslant (2+\sqrt{7}) r_4 = (2+\sqrt{7}) \overline{r}_3$; 
and using Proposition~\ref{pro:domination} in Cases \#3 and~\#4, 
we have $\overline{r}_1 = r_1 + r_2$ and $r_1 \leqslant r_3$, 
hence $\overline{r}_1 = r_1 + r_2 \leqslant r_2 + r_3 \leqslant (2+\sqrt{7}) r_4 = (2+\sqrt{7}) \overline{r}_3$, concluding the proof.
\end{proof}

We can then recover a proof of complexity for the Java version of \TS,
by following the same proof as in Sections~\ref{sec:analysis1} and~\ref{sec:analysis2},
but using Lemma~\ref{lem:height-J-Java} instead of Lemmas~\ref{lm:invariant I} and~\ref{lm:h-is-small}.

\begin{theorem}\label{thm:complexity n log rho - java}
The complexity of Java's \TS on inputs of size $n$ with $\rho$ runs is $\O(n + n \log \rho )$.
\end{theorem}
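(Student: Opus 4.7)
The plan is to mimic the amortized analysis of Section~\ref{sec:analysis1}, substituting Lemma~\ref{lem:height-J-Java} for Corollary~\ref{cor:invariant I'} and Lemma~\ref{lm:h-is-small} wherever they were used. Concretely, I would work with Algorithm~\ref{alg:TS translated} \emph{without} the Case \#5 of line~\ref{algline:new_cond_translated}, and decompose every execution of its main loop into a concatenation of starting sequences (one \#1 followed by a maximal block of \#2's) and ending sequences (the subsequent \#3/\#4 segment, if any, before the next \#1). Computing the run decomposition and the final clean-up merges at line~11 together contribute only $\O(n)$, via the padding trick of Remark~\ref{rem:main-loop}.

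For the total cost of starting sequences, I would redo the calculation of Lemma~\ref{lm:starting} almost verbatim. When a run $R$ of size $r$ is pushed onto a stable stack $\S=(R_1,\ldots,R_h)$ and then $k-1$ Case \#2 merges are triggered, the total merge cost is at most $\sum_{i=1}^{k}(k+1-i)r_i$. The last \#2 merge ensures $r>r_k$, and Lemma~\ref{lem:height-J-Java} gives $r_i\leqslant \delta^{i-k+4}r$ for $3\leqslant i\leqslant k$; the first two terms $r_1,r_2$ are bounded by $r_3$ since the stack was stable before the push. A routine geometric-series calculation then yields a bound $C\leqslant \gamma' r$ for some absolute constant $\gamma'$ depending only on $\delta$. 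Summing over all $\rho$ pushes gives $\O(n)$ in total.

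For ending sequences I would reuse the \ctok/\stok token scheme of Section~\ref{sec:analysis1} verbatim. The balance argument of Lemma~\ref{lm:balance} still goes through: after a Case \#4 merge we have $\overline{r}_1\geqslant \overline{r}_2$, so either Case \#2 or Case \#3 fires next within the same ending sequence; in either event the stack shrinks by one slot, which decrements the height of every element sitting in the topmost run $\overline{R}_1$ (including those of the $R_2$ that just paid), thereby crediting back every \stok just spent. Each element of a newly pushed run $R$ is therefore credited $\O(h)$ tokens in total, where $h$ is the stack height right after $R$'s starting sequence. The bound $h\leqslant 7+\log_\delta(n/r)$ from Lemma~\ref{lem:height-J-Java} replaces Lemma~\ref{lm:h-is-small}, so every element of $R$ contributes $\O(\log(n/r))$ tokens and the total ending-sequence cost is $\O\bigl(n+\sum_i r_i\log(n/r_i)\bigr) = \O(n+n\H)\subseteq \O(n+n\log\rho)$.

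The step I expect to be the main obstacle is the token-balance verification without Case \#5. In Python's analysis, the post-\#4 stack automatically triggered Case \#3 or Case \#5, so the next merge folded $\overline{R}_1$ into a larger run and directly reduced the height of every element that had just paid. In Java's rule set the next merge can instead be a Case \#2, which merges $\overline{R}_2$ and $\overline{R}_3$ rather than touching $\overline{R}_1$; the argument is salvaged by noticing that even a Case \#2 merge shortens the stack by one slot, so the height of the topmost elements (and in particular of those of the former $R_2$ that just paid) still decreases by one. Carefully checking this height-decrement, for each of the possible successors of a Case \#4 merge, is the only non-routine step of the adaptation.
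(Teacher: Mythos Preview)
Your proposal is correct and follows exactly the approach the paper sketches: rerun the amortized analysis of Section~\ref{sec:analysis1} with Lemma~\ref{lem:height-J-Java} substituted for Corollary~\ref{cor:invariant I'} and Lemma~\ref{lm:h-is-small}.

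One small correction to your ``main obstacle'' paragraph: even in Python's analysis, the merge following a Case~\#4 can already be a Case~\#2 (the cases are tested in order, so $\overline{r}_1>\overline{r}_3$ triggers~\#2 before~\#3 is ever reached), and the proof of Lemma~\ref{lm:balance} handles that situation by precisely the height-decrement observation you rediscovered---any merge shrinks the stack by one, so elements in the topmost run regain a \stok without spending one. There is therefore no new difficulty specific to the Java variant at this step.
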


Another question is that of the stack size requirements of Java's \TS,
i.e. computing~$h_{\max}$.
A first result is the following immediate corollary of Lemma~\ref{lem:height-J-Java}.

\begin{corollary}\label{cor:bad-size}
  On an input of size $n$, Java's \TS will create a stack of runs of maximal size $h_{\max} \leqslant 7 + \log_\delta(n)$, 
  where $\delta = \left( 5/(2+\sqrt{7}) \right)^{1/5}$.
\end{corollary}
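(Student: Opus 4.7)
The plan is to derive the bound as an immediate consequence of the exponential growth statement in Lemma~\ref{lem:height-J-Java}, so essentially no new ideas are needed; the heavy lifting was already done in Proposition~\ref{pro:domination} and Lemma~\ref{lem:height-J-Java}.

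First I would handle the trivial cases. If at no point during the execution the stack reaches height $h \geqslant 3$, then $h_{\max} \leqslant 2 \leqslant 7 + \log_\delta(n)$ holds automatically (as soon as $n \geqslant 1$), so we may restrict attention to the moments at which the stack has height $h \geqslant 3$.

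Next I would apply Lemma~\ref{lem:height-J-Java} with $j = 3$ and $i = h$. The lemma gives $r_h \geqslant \delta^{h - 3 - 4} r_3 = \delta^{h-7} r_3$. Every run in the decomposition has positive integer length, so $r_3 \geqslant 1$, and consequently $r_h \geqslant \delta^{h-7}$. On the other hand, the run $R_h$ is a contiguous subsequence of the input array, hence $r_h \leqslant n$. Putting these together yields $\delta^{h-7} \leqslant n$, and taking logarithms in base $\delta > 1$ gives $h \leqslant 7 + \log_\delta(n)$. Since this bound holds at every moment of the execution, it holds for the maximum stack height $h_{\max}$.

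The main obstacle here is not in the corollary itself but upstream: the exponential growth ratio $\delta$ had to be extracted from the much weaker, merely linear control $(2+\sqrt{7}) r_i \geqslant r_2 + \cdots + r_{i-1}$ provided by Proposition~\ref{pro:domination}. The trick encapsulated in Lemma~\ref{lem:height-J-Java} is to observe that over a window of five consecutive indices $j, j+1, \ldots, j+4$, the sum $r_j + \cdots + r_{j+4}$ is at least $5 r_j$ (using monotonicity $r_3 < r_4 < \cdots < r_h$), while by Proposition~\ref{pro:domination} applied at index $j+5$ the same sum is at most $(2+\sqrt{7}) r_{j+5}$. This yields the factor $\delta^5 = 5/(2+\sqrt{7})$ over five indices, and the corollary then follows from a single logarithm.
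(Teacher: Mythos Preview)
Your proof is correct and essentially identical to the paper's own argument: both invoke Lemma~\ref{lem:height-J-Java} with $i=h$ and $j=3$ to obtain $n \geqslant r_h \geqslant \delta^{h-7} r_3 \geqslant \delta^{h-7}$, and then take logarithms. The closing paragraph recapitulating how $\delta$ arises from Proposition~\ref{pro:domination} is commentary rather than part of the proof, but it is accurate.
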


\begin{proof}
At any time during the main loop of Java's \TS on an input of size $n$, if the stack is
$(R_1,\ldots, R_h)$ and $h \geqslant 3$, it follows from Lemma~\ref{lem:height-J-Java} that
$n \geqslant r_h \geqslant \delta^{h-7} r_3 \geqslant \delta^{h-7}$.
\end{proof}

Unfortunately, for integers smaller than $2^{31}$, Corollary~\ref{cor:bad-size} only proves that the stack size will never exceed~$347$. 
However, in the comments of Java's implementation of \TS,\footnote{Comment at line~168: \url{http://igm.univ-mlv.fr/~pivoteau/Timsort/TimSort.java}.} there is a remark that keeping a short stack is of some importance, for practical reasons, and that the value chosen in Python --~$85$~-- is ``too expensive''. 
Thus, in the following, we go to the extent of computing the optimal bound. 
It turns out that this bound cannot exceed~$86$ for such integers.
This bound could possibly be refined slightly, but definitely not to the point of competing
with the bound that would be obtained if the invariant of Equation~\eqref{eq:inv1} were correct.
Once more, this suggests that implementing the new version of \TS in Java would be a good idea,
as the maximum stack height is smaller in this case.

\begin{theorem}\label{thm:good-size}
\marginnote{\rm Full proof in\phantom{xx} Sections~\ref{sec:good1}\phantom{xx}  and~\ref{sec:good2}.\phantom{xx} }
On an input of size $n$, Java's \TS will create a stack of runs of maximal size $h_{\max} \leqslant 3 + \log_\Delta(n)$, 
where $\Delta = (1+\sqrt{7})^{1/5}$. Furthermore, if we replace $\Delta$ by any real number $\Delta' > \Delta$,
the inequality fails for all large enough~$n$.
\end{theorem}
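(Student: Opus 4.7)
The plan is to sharpen the crude growth estimate of Corollary~\ref{cor:bad-size} to obtain the tight constant $\Delta$, and then to construct a family of inputs showing that no smaller constant works. For the upper bound, the key idea is to replace the per-level monotonicity estimate with a \emph{five-step growth lemma}: whenever a run of size $r$ sits at height $i$ in a reachable stable stack, any run that later occupies height $i+5$ in a stable stack must have size at least $(1+\sqrt{7})\,r$. The exponent $5$ and the constant $1+\sqrt{7}$ are not arbitrary: $1+\sqrt{7}$ is the positive root of $x^2-2x-6=0$, which is precisely the extremal relation one obtains by combining the domination inequality $(2+\sqrt{7})\,r_h \geqslant r_2+\ldots+r_{h-1}$ of Proposition~\ref{pro:domination} with a two-stage analysis of how pushes and collapses chain in Java's \TS. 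Concretely, I would introduce a potential $\Phi$ on stable stacks, defined so that $r_h \geqslant \Delta^{h-3}\,\Phi(\S)$ and $\Phi(\S)\geqslant 1$ for the initial stack, and show that $\Phi$ is non-decreasing along the execution. From this one obtains $n \geqslant r_h \geqslant \Delta^{h-3}$, hence $h \leqslant 3 + \log_\Delta(n)$.

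The main obstacle is proving the five-step growth lemma itself. Unlike Proposition~\ref{pro:domination}, which is a pointwise invariant, this lemma is a dynamical statement, and its proof requires a case analysis over the possible sequences of merges \#2, \#3, \#4 that can fire during five consecutive push phases. I would first classify the stable configurations with extremal behavior, i.e.\ those for which $r_h$ is minimal given the constraints on $r_2,\ldots,r_{h-1}$, and then check by induction on the number of pushes that every reachable stable stack dominates componentwise one of these extremal skeletons. The technical heart of the argument is that worst-case configurations exploit the absence of merge case \#5 from Java's \TS (they do not arise in Python's version), so the inductive step has to track carefully which obstruction indices can survive and for how long, in the spirit of the stability analysis sketched around Proposition~\ref{pro:domination}.

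For the tightness of $\Delta$, the plan is to construct explicitly, for each integer $k$, a run sequence $W_k$ whose processing by Java's \TS produces a stable stack of height exactly $3+5k$ and total size at most $C\,(1+\sqrt{7})^k$, for some absolute constant $C$. The construction is by induction on $k$: $W_{k+1}$ is obtained from $W_k$ by appending a carefully tuned block of runs that matches the saturation case of the five-step growth lemma, so that the block adds five levels to the stack while multiplying its total size by a factor arbitrarily close to $1+\sqrt{7}$. Verifying that each intermediate push triggers precisely the intended merges is delicate but mechanical; I expect to handle it by maintaining, as an invariant after each block, an explicit closed-form description of the run lengths on the stack, together with the strict inequalities needed to enable or block each of merge cases \#2 to \#4. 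Once $W_k$ is built, setting $n = |W_k|$ yields $h_{\max} \geqslant 3+5k = 3 + \log_\Delta(n) - O(1)$, so that replacing $\Delta$ by any $\Delta' > \Delta$ violates the inequality for large enough $k$, completing the proof.
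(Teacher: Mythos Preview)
Your high-level plan matches the paper's two-part structure, and for the upper bound you are in the right neighborhood, though organized differently. The paper does not prove a uniform five-step lemma of the form $r_{i+5} \geqslant (1+\sqrt{7})\,r_i$; in fact such a statement is false at positions where there are no obstruction indices (there the growth is Fibonacci-type, with a \emph{different} per-level rate). Instead, the paper introduces \emph{growth indices} (positions $i$ such that $i+2$ is not an obstruction index) and \emph{strong growth indices} (growth indices $i$ such that $i+1$ is an obstruction index), and proves that $r_j \geqslant \Delta^{j-i} r_i$ whenever $i$ is a growth index and $j$ is a strong growth index. The constant $\Delta^5 = 1+\sqrt{7}$ emerges from the extremal case of three consecutive obstruction indices followed by a non-obstruction index, via Lemma~\ref{lemma:hard-b}, not from a uniform five-step pattern. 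Your potential-function idea is then unnecessary: one just observes that $1$ is always a growth index and chains the growth lemmas to get $r_h \geqslant \Delta^{h-3}$. The argument is entirely static --- a structural analysis of a single stable stack --- and does not require the dynamic case analysis over merge sequences that you anticipate.

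The real gap is in your optimality argument. You plan to build $W_{k+1}$ from $W_k$ by appending an explicit block that ``matches the saturation case of the five-step growth lemma'', and you describe verifying the merge sequence as ``delicate but mechanical''. This substantially underestimates the difficulty. The saturating configuration is one in which the top of the stack nearly achieves equality in Proposition~\ref{pro:domination}, i.e., $r_2+\ldots+r_{h-1} \approx (2+\sqrt{7})\,r_h$. But Proposition~\ref{pro:domination} only gives the inequality; nothing so far tells you that such near-extremal stacks are \emph{reachable} by Java's \TS. The paper devotes the bulk of the appendix to proving exactly this: that $\alpha_n \to 2+\sqrt{7}$, where $\alpha_n$ is the supremum of $(r_2+\ldots+r_{h-1})/r_h$ over reachable stable stacks with $r_h = n$. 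This is done by a bootstrap/fixed-point argument (Lemmas~\ref{lem:6.5}--\ref{lem:8.4} and Proposition~\ref{pro:8.5}) that recursively constructs stacks nearly saturating the bound, using as input the existence of smaller near-saturating stacks; the recursion closes only because $2+\sqrt{7}$ is a fixed point of the relevant map. Only once $\alpha_n \to 2+\sqrt{7}$ is established can the recursive block construction (your $W_k \to W_{k+1}$, the paper's Lemma~\ref{lem:9.3}) go through, because each block must itself embed a near-saturating sub-stack of the previous scale. A fully explicit closed-form construction bypassing this limit would have to encode the fixed-point structure directly into the run lengths, and your plan gives no indication of how to do that.
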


\begin{proof}[Proof ideas]
The first part of Theorem~\ref{thm:good-size} is proved as follows.
Ideally, we would like to show that
$r_{i+j} \geqslant \Delta^j r_i$
for all $i \geqslant 3$ and some fixed integer $j$. However, these inequalities do not hold for all $i$.
Yet, we prove that they hold if $i+2$ and $i+j+2$ are not obstruction indices, and $i+j+1$ is an obstruction index,
and it follows quickly that $r_h \geqslant \Delta^{h-3}$.

The optimality of $\Delta$ is much more difficult to prove.
It turns out that the constants $2+\sqrt{7}$, $(1+\sqrt{7})^{1/5}$, and the expansion function
referred to in the proof of Proposition~\ref{pro:domination} were constructed as least fixed points of non-decreasing operators,
although this construction needed not be explicit for using these constants and function.
Hence, we prove that $\Delta$ is optimal by inductively constructing sequences of run sizes that show that
$\limsup\{\log(r_h) / h\} \geqslant \Delta$; much care is required for proving that our constructions
are indeed feasible.
\end{proof}

%%%%%%%%%%%%%%%%%%%%%%%%%%%%%%%%%%%%%%%%%%%%%%%%%%%%%%%%%%%%%%%%%%%%%%%%%%%%%%%
%%%%%%%%%%%%%%%%%%%%%%%%%%%%%%%%%%%%%%%%%%%%%%%%%%%%%%%%%%%%%%%%%%%%%%%%%%%%%%%

\section{Conclusion}\label{sec:conclu}

At first, when we learned that Java's QuickSort had been replaced by a variant of \MS, we thought that this new algorithm --~\TS~-- should be really fast and efficient in practice, and that we should look into its average complexity to confirm this from a theoretical point of view. 
Then, we realized that its worst-case complexity had not been formally established yet and we first focused on giving a proof that it runs in $\O(n\log n)$, which we wrote in a preprint~\cite{AuNiPi15}. 
In the present article, we simplify this preliminary work and provide a short, simple and self-contained proof of \TS's complexity, which sheds some light on the behavior of the algorithm. 
Based on this description, we were also able to answer positively a natural question, which was left open so far: does \TS runs in $\O(n+n\log\rho)$, where $\rho$ is the number of runs? 
We hope our theoretical work highlights that \TS is actually a very good sorting algorithm. 
Even if all its fine-tuned heuristics are removed, the dynamics of its merges, induced by a small number of local rules, results in a very efficient global behavior, particularly well suited for \emph{almost sorted} inputs.

Besides, we want to stress the need for a thorough algorithm analysis, in order to prevent errors and misunderstandings. As obvious as it may sound, the three consecutive mistakes on the stack height in Java illustrate perfectly how the best ideas can be spoiled by the lack of a proper complexity analysis. 

Finally, following~\cite{GoRoBoBuHa15}, we would like to emphasize that there seems to be no reason not to use the recent version of \TS, which is efficient in practice, formally certified and whose optimal complexity is easy to understand.

\bibliography{timsort_biblio}

%%%%%%%%%%%%%%%%%%%%%%%%%%%%%%%%%%%%%%%%%%%%%%%%%%%%%%%%%%%%%%%%%%%%%%%%%%%%%%%
%%%%%%%%%%%%%%%%%%%%%%%%%%%%%%%%%%%%%%%%%%%%%%%%%%%%%%%%%%%%%%%%%%%%%%%%%%%%%%%
%%%%%%%%%%%%%%%%%%%%%%%%%%%%%%%%%%%%%%%%%%%%%%%%%%%%%%%%%%%%%%%%%%%%%%%%%%%%%%%
%%%%%%%%%%%%%%%%%%%%%%%%%%%%%%%%%%%%%%%%%%%%%%%%%%%%%%%%%%%%%%%%%%%%%%%%%%%%%%%
%%%%%%%%%%%%%%%%%%%%%%%%%%%%%%%%%%%%%%%%%%%%%%%%%%%%%%%%%%%%%%%%%%%%%%%%%%%%%%%
%%%%%%%%%%%%%%%%%%%%%%%%%%%%%%%%%%%%%%%%%%%%%%%%%%%%%%%%%%%%%%%%%%%%%%%%%%%%%%%
%%%%%%%%%%%%%%%%%%%%%%%%%%%%%%%%%%%%%%%%%%%%%%%%%%%%%%%%%%%%%%%%%%%%%%%%%%%%%%%
%%%%%%%%%%%%%%%%%%%%%%%%%%%%%%%%%%%%%%%%%%%%%%%%%%%%%%%%%%%%%%%%%%%%%%%%%%%%%%%

\newpage
\appendix

\section{Appendix}

\subsection{Proofs}

We provide below complete proofs of the results mentioned
in Section~\ref{sec:java}.

In what follows, we will often refer to so-called \emph{stable} stacks:
we say that a stack $\S = (R_1,\ldots,R_h)$ is \emph{stable}
if $r_1 + r_2 < r_3$ and $r_1 < r_2$, i.e. if the next operation
that will be performed by \TS is a push operation (Case \#1).

\subsubsection{Proving Proposition~\ref{pro:domination}}\label{sec:domination}

Aiming to prove Proposition~\ref{pro:domination}, and keeping in mind that studying
stable stacks may be easier than studying all stacks,
a first step is to introduce the following quantities.

\begin{definition}\label{def:alphabeta}
Let $n$ be a positive integer.
We denote by $\alpha_n$ (resp., $\beta_n$),
the smallest real number $m$ such that, in every stack (resp., stable stack) $\S = (R_1,\ldots,R_h)$
obtained during an execution of \TS,
and for every integer $i \in \{1,\ldots,h\}$ such that $r_i = n$, we have $r_2+\ldots+r_{i-1} \leqslant m \times r_i$;
if no such real number exists, we simply set $\alpha_n = +\infty$ (resp.,~$\beta_n = +\infty$).
\end{definition}

By construction, $\alpha_n \geqslant \beta_n$ for all $n \geqslant 1$. The following lemma proves that $\alpha_n \leqslant \beta_n$.

\begin{lemma}\label{lem:inv-K}
At any time during the main loop of \TS, if the stack is
$(R_1,\ldots, R_h)$, then we have
(a) $r_i < r_{i+1}$ for all $i \in \{3,4,\ldots,h-1\}$ and
(b) $r_2+\ldots+r_{i-1} \leqslant \beta_n r_i$ for all $n \geqslant 1$ and $i \leqslant h$ such that $r_i = n$.
\end{lemma}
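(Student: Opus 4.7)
I prove (a) and (b) simultaneously by induction on the number of operations (a push~\#1 or a merge~\#2,~\#3,~\#4) performed since the start of the main loop; the empty-stack base case is vacuous. In the inductive step I do a case analysis on the operation, writing $\S=(R_1,\ldots,R_h)$ for the stack just before and $\overline{\S}=(\overline{R}_1,\ldots,\overline{R}_{\overline{h}})$ for the stack just after.

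Part~(a) is routine. For a merge of any type, the positions at depth $\geqslant 3$ of $\overline{\S}$ are shifted from positions at depth $\geqslant 3$ (or~$\geqslant 4$) of $\S$, so $\overline{r}_i<\overline{r}_{i+1}$ follows at once from the inductive hypothesis on~$\S$. For a push, $\overline{r}_i=r_{i-1}$ for $i\geqslant 2$, and the required inequality $\overline{r}_i<\overline{r}_{i+1}$ reduces to $r_{i-1}<r_i$: for $i\geqslant 4$ this is the inductive hypothesis, and for $i=3$ it is the consequence $r_2<r_1+r_2<r_3$ of the stability of~$\S$.

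Part~(b) is easy in the three merge cases. For example, after a \#2 merge and for each $i\geqslant 3$, we have $\overline{r}_i=r_{i+1}$ and $\overline{r}_2+\cdots+\overline{r}_{i-1}=(r_2+r_3)+r_4+\cdots+r_i=r_2+\cdots+r_i$, so applying the inductive hypothesis at position $i+1$ of $\S$ gives $\overline{r}_2+\cdots+\overline{r}_{i-1}\leqslant\beta_{r_{i+1}}r_{i+1}=\beta_{\overline{r}_i}\overline{r}_i$. Merges~\#3 and~\#4 work the same way, except that the new sum is even smaller (it drops $r_2$ altogether).

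The main obstacle is the push case of part~(b). After pushing $N$ onto a stable $\S$, the sum at position $i\geqslant 3$ of $\overline{\S}=(N,R_1,\ldots,R_h)$ equals $r_1+r_2+\cdots+r_{i-2}$, which has to be bounded by $\beta_{r_{i-1}}r_{i-1}$; the inductive hypothesis applied to the stable $\S$ only yields $r_2+\cdots+r_{i-2}\leqslant\beta_{r_{i-1}}r_{i-1}$, so the extra term $r_1$ must be absorbed into the bound. I will handle this by distinguishing whether $\overline{\S}$ is itself stable (in which case $\overline{\S}$ is automatically a stable-stack witness of the ratio $(r_1+\cdots+r_{i-2})/r_{i-1}$ in the definition of $\beta_{r_{i-1}}$, and the bound is immediate), or whether a merge is imminent (in which case one instead appeals to a witness stable stack produced by a separate execution of~\TS, obtained by prepending a sufficiently small auxiliary run to a carefully chosen prefix of~$\S$ so that the resulting stack is stable and exhibits an equal or larger ratio at the target position). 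Verifying that such a witness can always be constructed is the delicate step, and it will require combining the stability inequalities $r_1<r_2$ and $r_1+r_2<r_3$ with the strict monotonicity $r_3<r_4<\cdots<r_h$ from part~(a).
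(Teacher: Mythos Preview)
Your approach is essentially the paper's: induction on operations, with the push case of (b) handled by exhibiting a stable witness stack so that the desired inequality follows tautologically from the definition of $\beta_n$. Part (a) and the merge cases of (b) are identical to the paper's treatment.

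Where your plan diverges is in the push case of (b), and here the paper is considerably cleaner. You propose a case split on whether $\overline{\S}$ is already stable, and in the non-stable case you speak of ``prepending a sufficiently small auxiliary run to a carefully chosen prefix of $\S$''. The paper instead makes a single observation that dissolves the case split entirely: since the sum in (b) begins at $\overline{r}_2$, the value of $\overline{r}_1$ has \emph{no effect whatsoever} on whether $\overline{\S}$ satisfies (b). One may therefore simply replace the just-pushed run by any sufficiently small run so that the resulting stack is stable, and (b) then holds by definition of $\beta_n$. No case analysis is needed. To guarantee that a small enough \emph{positive integer} run exists (the only obstruction is $r_2=r_1+1$), the paper doubles every run in the execution and pushes a run of size $1$; this is the ``delicate step'' you allude to but do not resolve.

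Your phrase ``carefully chosen prefix of $\S$'' is also worrying taken literally: a proper top-segment $(R_1,\ldots,R_j)$ with $j<h$ of a reachable stack is not in general reachable, because the merge conditions (in particular Case~\#2) consult runs further down the stack, so truncating the bottom alters the dynamics. The correct witness is simply $\S$ itself with a small run pushed on top, which is exactly what the paper does.
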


\begin{proof}
Assume that (a) and (b) do not always hold, and consider the first moment where some of them do not hold.
When the main loop starts, both (a) and (b) are true.
Hence, from a stack $\S = (R_1,\ldots,R_h)$, on which (a) and (b) hold,
we carried either a push step (Case~\#1)
or a merging step (Cases~\#2 to \#4),
thereby obtaining the new stack $\overline{\S} = (\overline{R}_1,\ldots,\overline{R}_{\overline{h}})$.
We consider separately these two cases:

\begin{itemize}
\item After a push step, we have $\overline{h} = h+1$ , $r_1 + r_2 < r_3$
(otherwise, we would have performed a merging step instead of a push step)
and $\overline{r}_i = r_{i-1}$ for all $i \geqslant 2$. It follows that
$\overline{r}_3 = r_2 < r_1 + r_2 < r_3 = \overline{r}_4$, and that
$\overline{r}_i = r_{i-1} < r_i = \overline{r}_{i+1}$ for all $i \geqslant 4$.
This proves that $\overline{\S}$ satisfies (a).

In addition, the value of $\overline{r}_1$ has no impact on whether $\overline{\S}$ satisfies (b).
Hence, we may assume without loss of generality that $\overline{r}_1 < \min\{\overline{r}_2,\overline{r}_3-\overline{r}_2\}$
(up to doubling the size of every run ever pushed onto the stack so far and setting $\overline{r}_1 = 1$),
thereby making $\overline{\S}$ stable. This proves that $\overline{\S}$ satisfies (b).

\item After a merging step, we have $\overline{h} = h-1$, $\overline{r}_2 \leqslant r_2 + r_3$
and $\overline{r}_i = r_{i+1}$ for all $i \geqslant 3$.
Hence, $\overline{r}_i = r_{i+1} < r_{i+2} = \overline{r}_{i+1}$ for all $i \geqslant 3$, and $\overline{\S}$ satisfies (a).
Furthermore, we have $0 \leqslant \beta_{\overline{r}_2} \overline{r}_2$, and
$\overline{r}_2 + \overline{r}_3 + \ldots + \overline{r}_i \leqslant r_2 + r_3 + \ldots + r_{i+1} \leqslant \beta_n r_{i+2} = \beta_n \overline{r}_{i+1}$
whenever $i \geqslant 1$ and $\overline{r}_{i+1} = r_{i+2} = n$.
This proves that $\overline{\S}$ also satisfies (b).
\end{itemize}

Hence, in both cases, (a) and (b) also hold in $\overline{\S}$, which contradicts our assumption and completes the proof.
\end{proof}

\begin{corollary}
For all integers $n \geqslant 1$, we have $\alpha_n = \beta_n$.
\end{corollary}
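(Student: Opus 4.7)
The plan is to derive this corollary directly from Lemma~\ref{lem:inv-K}(b), so essentially no new work is required beyond invoking the definitions.

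First I would record the trivial direction: since every stable stack obtained during an execution is in particular a stack obtained during an execution, the set of stacks witnessed in the definition of $\beta_n$ is a subset of the set witnessed in the definition of $\alpha_n$. Taking the infimum over a larger family can only be larger, so $\alpha_n \geqslant \beta_n$, which was already noted before Lemma~\ref{lem:inv-K}.

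For the reverse inequality, Lemma~\ref{lem:inv-K}(b) guarantees that the real number $m = \beta_n$ satisfies the inequality $r_2+\ldots+r_{i-1} \leqslant m \, r_i$ in every stack $(R_1,\ldots,R_h)$ encountered during an execution of \TS and every index $i$ with $r_i = n$. This is exactly the defining property of $\alpha_n$, so by minimality $\alpha_n \leqslant \beta_n$. Combining both inequalities yields $\alpha_n = \beta_n$.

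There is no real obstacle here; all of the substance lives in the inductive proof of Lemma~\ref{lem:inv-K}(b), whose push-case trick of forcing $\overline{r}_1$ small enough to make $\overline{\S}$ stable is precisely what lets the bound $\beta_n$ control the unrestricted quantity $\alpha_n$. The corollary merely reads off this fact.
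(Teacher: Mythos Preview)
Your proof is correct and matches the paper's approach exactly: the paper states $\alpha_n \geqslant \beta_n$ ``by construction'' just before Lemma~\ref{lem:inv-K}, announces that the lemma yields $\alpha_n \leqslant \beta_n$, and then records the corollary without further argument. Your only slip is the phrase ``taking the infimum over a larger family can only be larger,'' which is backwards as stated; what you mean is that $\alpha_n$ is the supremum of $(r_2+\cdots+r_{i-1})/r_i$ over a larger collection of stacks (or equivalently the least $m$ satisfying more constraints), hence $\alpha_n \geqslant \beta_n$---the conclusion is right, but tighten the wording.
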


It remains to prove that $\alpha_n \leqslant \alpha_\infty$ for all $n \geqslant 1$,
where $\alpha_\infty = 2 + \sqrt{7}$.
This is the object of the next results.

What makes Java's \TS much harder to study than Python's \TS is the fact that,
during the execution of Java's \TS algorithm, we may have stacks
$\S = (R_1,\ldots,R_h)$ on which the invariant~\eqref{eq:inv1} : $r_i > r_{i-1} + r_{i-2}$
fails for many integers $i \geqslant 3$, possibly consecutive.
In Section~\ref{sec:java}, such integers were called \emph{obstruction indices} of the stack $\S$.
Hence, we focus on sequences of consecutive obstruction indices.

\begin{lemma}\label{lemma:hard-b}
Let $\S = (R_1,\ldots,R_h)$ be a stable stack obtained during the main loop of Java's \TS.
Assume that $i-k,i+1-k,\ldots,i$ are consecutive obstruction indices of $\S$,
and that $\alpha_n \leqslant \alpha_\infty$ for all $n \leqslant r_i - 1$. Then,
\[r_{i-k-2} \leqslant \frac{\alpha_\infty + 1 - k}{\alpha_\infty + 2} r_{i-1}.\]
\end{lemma}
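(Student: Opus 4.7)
The plan is to combine three ingredients: (i)~the $k+1$ obstruction inequalities $r_{i-k+\ell} \leqslant r_{i-k+\ell-1} + r_{i-k+\ell-2}$ for $\ell \in \{0,\ldots,k\}$, (ii)~the strict monotonicity $r_3 < r_4 < \ldots < r_h$ from Lemma~\ref{lem:inv-K}(a), and (iii)~the inductive hypothesis $\alpha_n \leqslant \alpha_\infty$ for all $n \leqslant r_i - 1$, applied both to the current stack $\S$ and to the stack as it appeared at a suitably chosen earlier time.

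First I would telescope the obstruction inequalities. Rewriting each as $r_j - r_{j-1} \leqslant r_{j-2}$ and summing over $j = i-k, \ldots, i$ gives
\[
r_i - r_{i-k-1} \;\leqslant\; \sum_{j=i-k-2}^{i-2} r_j ,
\]
which, combined with $r_{i-1} < r_i$, already relates $r_{i-1}$ to $r_{i-k-2}$ and the intermediate lengths.

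Next I would apply the inductive hypothesis in a non-trivial way. The current stack itself gives $\sum_{j=i-k-2}^{i-2} r_j \leqslant \alpha_\infty r_{i-1}$, but this alone is too weak. To sharpen it, I would consider the stack at the moment $t^\star$ when the current run $R_{i-k-1}$ first reached the top (either by a fresh push, or by the merge that created it). At $t^\star$, the stack is stable with $R_{i-k-1}$ at position~1, and the runs below are obtained from the current $R_{i-k},\ldots,R_h$ by reversing any Case~\#2 merges that have occurred since. Applying the inductive hypothesis to each substack at time $t^\star$ (whose runs all have sizes at most $r_{i-1} < r_i$) yields additional linear constraints on $r_{i-k-1}, r_{i-k}, \ldots, r_{i-2}$. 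The specific value $\alpha_\infty = 2+\sqrt{7}$, being the positive root of $x^2 - 4x - 3 = 0$, makes all these constraints close up exactly to $r_{i-k-2} \leqslant \tfrac{\alpha_\infty + 1 - k}{\alpha_\infty + 2} r_{i-1}$; this fixed-point identity is exactly why the numerator depends linearly on~$k$ and the denominator takes the form $\alpha_\infty + 2$.

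The hard part will be tracking the Case~\#2 merges that can mutate the substack below $R_{i-k-1}$ between time $t^\star$ and the present: for $k \geqslant 1$, such mutations must have occurred, since otherwise the stability conditions at $t^\star$ (namely $r_{i-k-1} + r_{i-k}^{(t^\star)} < r_{i-k+1}^{(t^\star)}$) would preclude $i-k+1$ from being an obstruction index of $\S$. I would handle this by a secondary induction on the number of intervening Case~\#2 merges, showing that each such merge groups two consecutive time-$t^\star$ runs into a single current $R_j$ while preserving (or sharpening) the combined constraint coming from obstructions and the $\alpha$-bound. This bookkeeping is where the bulk of the technical work lies, but the high-level scheme and the final algebraic reduction to the fixed-point identity satisfied by $\alpha_\infty$ proceed as outlined.
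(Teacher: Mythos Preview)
Your sketch shares the high-level idea of looking backwards in time at Case~\#2 merges, but it misses the concrete mechanism that actually delivers the bound, and two of your proposed tools point in the wrong direction.

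First, the telescoping step is a dead end. Summing $r_j - r_{j-1} \leqslant r_{j-2}$ gives $r_i - r_{i-k-1} \leqslant \sum_{j=i-k-2}^{i-2} r_j$, which is a \emph{lower} bound on a sum containing $r_{i-k-2}$, not an upper bound on $r_{i-k-2}$ itself; combining it with $r_{i-1} < r_i$ or with $\sum r_j \leqslant \alpha_\infty r_{i-1}$ does not reverse that. Second, the quadratic identity $\alpha_\infty^2 = 4\alpha_\infty + 3$ plays no role in this lemma: in the paper's argument $\alpha_\infty$ is an inert parameter, and the proof would go through verbatim with any constant in its place. The fixed-point character of $\alpha_\infty$ only enters later, in Lemma~\ref{lem:4b}.

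What the paper actually does is look not at the creation of $R_{i-k-1}$ but at the creation of \emph{each} of $R_{i-2}, R_{i-3}, \ldots, R_{i-k-2}$ separately. For every obstruction index $j \in \{i-k,\ldots,i\}$, the run $R_{j-2}$ must have been produced by a Case~\#2 merge (otherwise the stability at its creation time would give $r_{j-2}+r_{j-1}<r_j$, contradicting obstruction). At that merge, the run sitting on top --- call its size $\kappa_j$ --- satisfies $r_{j-2} \leqslant r_{j-1} - \kappa_j$ (this comes from the structure of Case~\#2, not from the raw obstruction inequality, and is strictly sharper). A short monotonicity argument shows all the $\kappa_j$ are bounded below by a common $\kappa$, namely the trigger-run size at the merge that created $R_{i-2}$. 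This yields an arithmetic-progression decrease $r_{i-k-2} \leqslant r_{i-2} - k\kappa$. The inductive hypothesis $\alpha_n \leqslant \alpha_\infty$ is invoked exactly once, at that first merge, to get $r_{i-2} \leqslant (\alpha_\infty+1)\kappa$; together with $r_{i-2} \leqslant r_{i-1} - \kappa$ and a convex combination one obtains $(\alpha_\infty+2)\,r_{i-k-2} \leqslant (\alpha_\infty+1-k)\,r_{i-1}$. Your ``secondary induction on intervening merges'' does not capture this uniform-$\kappa$ idea, and without it there is no route from your ingredients to the stated bound.
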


\begin{proof}
Let $T$ be the number of merge or push operations performed between the start of the main loop and the creation of the stack $\S$.
For all $k \in \{0,\ldots,T\}$ and all $j \geqslant 1$, we denote by~$\S_k$ the stack after
$k$ operations have been performed. We also denote by $P_{j,k}$ the $j$\textsuperscript{th} bottom-most
run of the stack $S_k$, and by $p_{j,k}$ the size of $P_{j,k}$; we set $P_{j,k} = \emptyset$ and
$p_{j,k} = 0$ if $S_k$ has fewer than $j$ runs.
Finally, for all $j \leqslant h$, we set
$t_j = \min\{k \geqslant 0 \mid \forall \ell \in \{k,\ldots,T\}, p_{j,\ell} = p_{j,T}\}$.

First, observe that $t_j < t_{j+2}$ for all $j \leqslant h-2$, 
because a run can be pushed or merged only in top or $2$\textsuperscript{nd}-to-top position.
Second, if $t_j \geqslant t_{j+1}$ for some $j \leqslant h-1$,
then the runs $P_{j,t_j}$, $P_{j+1,t_j}$ are the two top runs of $\S_{t_j}$.
Since none of the runs $P_1,\ldots,P_{j+1}$ is modified afterwards,
it follows, if $j \geqslant 2$, that $p_{j+1} + p_{j} = p_{j+1,t_j} + p_{j,t_j} < p_{j-1,t_j} = p_{j-1}$,
and therefore that $h + 2 - j$ is not an obstruction index.

Conversely, let $m_0 = h + 3 - i$.
We just proved that $t_{m_0-2} < t_{m_0}$ and also that $t_{m_0-1} < t_{m_0} < \ldots < t_{m_0 + k}$.
Besides, for all $m \in \{m_0,\ldots,m_0 + k\}$, we prove that the~$t_m$\textsuperscript{th} operation
was a merge operation of type \#2. Indeed, if not, then the run $P_{m,t_m}$ would be the topmost run
of $\S_{t_m}$; since the runs $P_{m-1}$ and $P_{m-2}$ were not modified after that, we would have
$p_m + p_{m-1} < p_{m-2}$, contradicting the fact that $h + 3 - m$ is an obstruction index.
In particular, it follows that $p_{m+1,t_m} = p_{m+2,t_m-1} \geqslant p_{m,t_m-1}$ and that
$p_m = p_{m,t_m} \leqslant p_{m-1,t_m} - p_{m+1,t_m} = p_{m-1} - p_{m+1,t_m}$.

Moreover, for $m = m_0$, observe that $p_m = p_{m,t_m} = p_{m,t_m-1} + p_{m+1,t_m-1}$.
Applying Lemma~\ref{lem:inv-K} on the stacks $\S_T$ and $\S_{t_m-1}$,
we know that $p_{m,t_m-1} \leqslant p_m \leqslant p_{m-2} - 1 = r_i - 1$ and that
$p_{p+1,t_m-1} \leqslant a_{p_{m,t_m-1}} p_{m,t_m-1} \leqslant \alpha_\infty p_{m,t_m-1}$,
which proves that $p_m \leqslant (\alpha_\infty+1) p_{m,t_m-1} \leqslant (\alpha_\infty+1) p_{m+1,t_m}$, i.e.,
$p_{m_0} \leqslant (\alpha_\infty+1) p_{m_0+1,t_{m_0}}$. Henceforth, we set $\kappa = p_{m_0+1,t_{m_0}}$.  

In addition, for all $m \in \{m_0+1,\ldots,m_0 + k\}$, observe that the sequence $(p_{m+1,k})_{t_m \leqslant k \leqslant T}$ is non-decreasing.
Indeed, when $t_m \leqslant k$, and therefore $t_i \leqslant k$ for all $i \leqslant m$, the run $p_{m+1,k}$ can only be modified
by being merged with another run, thereby increasing in size.
This proves that $p_{m+2,t_{m+1}} \geqslant p_{m+1,t_{m+1}-1} \geqslant p_{m+1,t_m}$.
Hence, an immediate induction shows that $p_{m+1,t_m} \geqslant p_{m_0+1,t_{m_0}} = \kappa$ for all $m \in \{m_0,\ldots,m_0+k\}$,
and it follows that $p_m \leqslant p_{m-1} - \kappa$.

Overall, this implies that $r_{i-k-2} = p_{m_0+k} \leqslant p_{m_0} - k \kappa$.
Note that $p_{m_0} \leqslant \min\{(\alpha_\infty+1) \kappa, p_{m_0-1} - p_{m_0+1,t_{m_0}}\} = \min\{(\alpha_\infty+1) \kappa, p_{m_0-1} - \kappa\}$.
It follows that
\[r_{i-k-2} \leqslant \min\{(\alpha_\infty+1) \kappa, p_{m_0-1} - \kappa\} - k \kappa \leqslant \min\{(\alpha_\infty+1-k) \kappa, r_{i-1} - (k+1)\kappa\},\]
whence $(\alpha_\infty+2) r_{i-k-2} \leqslant (k+1) (\alpha_\infty+1-k) \kappa + (\alpha_\infty+1-k) (r_{i-1} - (k+1)\kappa) = (\alpha_\infty+1-k) r_{i-1}$.
\end{proof}

Lemma~\ref{lemma:hard-b} paves the way towards a proof by induction that $\alpha_n \leqslant \alpha_\infty$.
Indeed, a first, immediate consequence of Lemma~\ref{lemma:hard-b}, is that, provided that $\alpha_n \leqslant \alpha_\infty$ for all $n \leqslant r_i-1$,
then the top-most part $(R_1,\ldots,R_i)$ may not contain more than $\alpha_\infty + 2$ (and therefore no more than $6$) consecutive obstruction indices.
This suggests that the sequence $r_1,\ldots,r_i$ should grow ``fast enough'', which might then be used to prove that $\alpha_{r_i} \leqslant \alpha_\infty$.
We present below this inductive proof, which relies on the following objects.

\begin{definition}\label{def:expansion}
We call \emph{expansion function} the function $f : [0,1] \to \mathbb{R}_{\geqslant 0}$ defined by
\[f : x \to \begin{cases} (1+\alpha_\infty) x & \text{if $0 \leqslant x \leqslant 1/2$} \\
x + \alpha_\infty(1-x) & \text{if $1/2 \leqslant x \leqslant \alpha_\infty/(2\alpha_\infty-1)$} \\
\alpha_\infty x & \text{if $\alpha_\infty/(2\alpha_\infty-1) \leqslant x \leqslant 1$.}\end{cases}\]
\end{definition}

In the following, we denote by $\theta$ the real number $\alpha_\infty/(2\alpha_\infty-1)$.
Let us first prove two technical results about the expansion function.

\begin{lemma}\label{lem:f}
We have $\alpha_\infty x \leqslant f(x)$ for all $x \in [0,1]$,
$f(x) \leqslant f(1/2)$ for all $x \in [0,\theta]$,
$f(x) \leqslant f(1)$ for all $x \in [0,1]$,
$x + \alpha_\infty(1-x) \leqslant f(x)$ for all $x \in [1/2,1]$ and
$x + \alpha_\infty(1-x) \leqslant f(1/2)$ for all $x \in [1/2,1]$.
\end{lemma}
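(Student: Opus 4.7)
The lemma packages five pointwise inequalities about the piecewise linear function $f$ defined on $[0,1]$. Writing $\theta = \alpha_\infty/(2\alpha_\infty-1)$ for the second breakpoint, the function $f$ consists of three linear pieces meeting at $x=1/2$ and $x=\theta$, so the plan is to verify each inequality by pure piecewise analysis: on each subinterval it reduces either to a comparison of two linear functions or to a boundary check at $x=0$, $x=1/2$, $x=\theta$ or $x=1$.

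I would start with a preliminary step of checking continuity at the two breakpoints, since this is used implicitly when passing between pieces. At $x=1/2$, both of the first two formulas evaluate to $(1+\alpha_\infty)/2$; at $x=\theta$, both $x+\alpha_\infty(1-x)$ and $\alpha_\infty x$ evaluate to $\alpha_\infty^2/(2\alpha_\infty-1)$. As a by-product, the slopes show that $f$ is increasing on $[0,1/2]$ (slope $1+\alpha_\infty$), decreasing on $[1/2,\theta]$ (slope $1-\alpha_\infty<0$, since $\alpha_\infty=2+\sqrt 7>1$), and increasing again on $[\theta,1]$ (slope $\alpha_\infty$). Most of the five statements then follow directly from this monotonicity picture together with a single boundary evaluation.

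I would then dispatch the five inequalities in turn. For $\alpha_\infty x \leqslant f(x)$: on $[0,1/2]$ it is $\alpha_\infty x \leqslant (1+\alpha_\infty)x$; on $[1/2,\theta]$ it rearranges, using $\alpha_\infty>1$, to exactly $x\leqslant \theta$; on $[\theta,1]$ it is an equality. For $f(x)\leqslant f(1/2)$ on $[0,\theta]$: $f$ is non-decreasing on $[0,1/2]$ and non-increasing on $[1/2,\theta]$, so the maximum on $[0,\theta]$ is attained at $x=1/2$. For $f(x)\leqslant f(1)$ on $[0,1]$: combine the previous item with the inequality $f(1/2)=(1+\alpha_\infty)/2\leqslant \alpha_\infty = f(1)$, which amounts to $\alpha_\infty\geqslant 1$; on $[\theta,1]$ use monotonicity of $\alpha_\infty x$. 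For $x+\alpha_\infty(1-x)\leqslant f(x)$ on $[1/2,1]$: it is an equality on $[1/2,\theta]$, and on $[\theta,1]$ the inequality $x+\alpha_\infty(1-x)\leqslant \alpha_\infty x$ rearranges again to $x\geqslant\theta$. Finally, for $x+\alpha_\infty(1-x)\leqslant f(1/2)$ on $[1/2,1]$: the map $x\mapsto x+\alpha_\infty(1-x)$ is decreasing (slope $1-\alpha_\infty<0$), and at $x=1/2$ it equals $(1+\alpha_\infty)/2=f(1/2)$.

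There is no genuine obstacle: the whole proof is a short bookkeeping exercise in piecewise linear algebra, so the only care needed is to get the breakpoint arithmetic right and to keep track of the sign of $1-\alpha_\infty$. The value of the lemma lies not in its difficulty but in isolating the technical facts about $f$ that will be fed into the next step, where the inductive proof of $\alpha_n\leqslant\alpha_\infty$ is carried out via the expansion function.
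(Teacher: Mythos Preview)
Your proof is correct and follows the same approach as the paper: since every inequality compares piecewise linear functions whose breakpoints lie in $\{0,1/2,\theta,1\}$, everything reduces to checking endpoints. The paper compresses this into a single sentence (``it is enough to check the above inequalities when $x$ is equal to $0$, $1/2$, $\theta$ or $1$, which is immediate''), whereas you spell out the monotonicity on each piece and the endpoint evaluations explicitly; the content is the same.
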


\begin{proof}
Since $f$ is piecewise linear, it is enough to check the above inequalities
when $x$ is equal to $0$, $1/2$, $\theta$ or $1$, which is immediate.
\end{proof}

\begin{lemma}\label{lem:f2}
For all real numbers $x, y \in [0,1]$ such that $x (y+1) \leqslant 1$, we have $x (1 + f(y)) \leqslant \min\{f(1/2),f(x)\}$.
\end{lemma}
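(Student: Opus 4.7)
My plan is to prove the two inequalities $x(1+f(y)) \leqslant f(1/2)$ and $x(1+f(y)) \leqslant f(x)$ separately, exploiting the piecewise-linear structure of $f$ and the constraint $x(y+1) \leqslant 1$.

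For the bound by $f(1/2)$, I would first observe that $x \leqslant 1/(y+1)$ reduces the problem to showing $g(y) := (1+f(y))/(y+1) \leqslant f(1/2)$ for all $y \in [0,1]$. Since $f$ is piecewise linear with breakpoints at $y = 1/2$ and $y = \theta$, so is $g$ (as a piecewise rational function with the same pieces), and a direct derivative computation on each of the three intervals gives that $g'$ has sign $+,-,+$ on $[0,1/2]$, $[1/2,\theta]$, $[\theta,1]$ respectively. Hence the maximum of $g$ on $[0,1]$ is reached at $y=1/2$ or $y=1$; explicit evaluation yields $g(1)=(1+\alpha_\infty)/2 = f(1/2)$ and $g(1/2)=(3+\alpha_\infty)/3$, and $g(1/2) \leqslant g(1)$ reduces to $\alpha_\infty \geqslant 3$, which holds since $\alpha_\infty = 2+\sqrt 7$.

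For the bound by $f(x)$, I would split on $x$. When $x \leqslant 1/2$, the inequality reduces to $f(y) \leqslant \alpha_\infty = f(1)$, which is Lemma~\ref{lem:f}. When $x \geqslant 1/2$, the constraint gives $y \leqslant z := 1/x - 1$, and the inequality becomes $f(y) \leqslant (f(x)-x)/x$, which simplifies to $f(y) \leqslant \alpha_\infty z$ for $x \in [1/2,\theta]$ and to $f(y) \leqslant \alpha_\infty - 1$ for $x \in [\theta,1]$. In each of these two situations I would further split on whether $y$ lies in $[0,1/2]$, $[1/2,\theta]$, or $[\theta,z]$: on the first two intervals Lemma~\ref{lem:f} yields $f(y) \leqslant f(1/2) = (1+\alpha_\infty)/2$, and on the third one $f(y) = \alpha_\infty y \leqslant \alpha_\infty z$ directly. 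The required numerical facts are $\alpha_\infty \geqslant 3$ and $1/\theta - 1 = (\alpha_\infty - 1)/\alpha_\infty$ (giving the lower bound on $z$ when $x \leqslant \theta$, needed to absorb the case $y \leqslant 1/2$).

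The main obstacle, though not deep, is organizing the six or so subcases without repetition, and in particular verifying at each step that the constraint $x(y+1) \leqslant 1$ is genuinely used: without it the lemma fails (take $x=y=1$, where $x(1+f(y)) = 1+\alpha_\infty > \alpha_\infty = f(x)$). The piecewise definition of $f$ forces this bookkeeping, but once the cases are correctly stated each individual inequality collapses to an elementary computation involving $\alpha_\infty = 2+\sqrt{7}$.
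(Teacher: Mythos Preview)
Your proposal is correct, and the organization differs from the paper's in a way worth noting. You prove the two bounds $x(1+f(y)) \leqslant f(1/2)$ and $x(1+f(y)) \leqslant f(x)$ separately, whereas the paper treats them simultaneously via a three-way split on whether $x \leqslant 1/2$, or $x > 1/2$ with $f(y) > f(1/2)$, or $f(y) \leqslant f(1/2)$. Your handling of the $f(1/2)$ bound via the auxiliary function $g(y) = (1+f(y))/(1+y)$ and the sign pattern $+,-,+$ of $g'$ on the three pieces of $f$ is cleaner than the paper's approach, which scatters this bound across all three of its cases. For the $f(x)$ bound your casework is essentially the same as the paper's, just sliced along $x$ first rather than along $f(y)$.

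One small point to tighten: the identity $1/\theta - 1 = (\alpha_\infty-1)/\alpha_\infty$ is needed in \emph{both} directions. You invoke it as a lower bound on $z$ when $x \in [1/2,\theta]$, to obtain $\alpha_\infty z \geqslant \alpha_\infty - 1 \geqslant f(1/2)$ and absorb the subcase $y \leqslant \theta$ (not just $y \leqslant 1/2$). But it is equally needed as an \emph{upper} bound on $z$ when $x \in [\theta,1]$: in the subcase $y \in [\theta,z]$ you wrote ``$f(y) = \alpha_\infty y \leqslant \alpha_\infty z$ directly'', yet the target there is $\alpha_\infty - 1$, so you still need $\alpha_\infty z \leqslant \alpha_\infty(1/\theta - 1) = \alpha_\infty - 1$. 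This is immediate from the same identity, but your summary only mentions the lower-bound use.
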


\vfill

\begin{proof}
We treat three cases separately, relying in each case on Lemma~\ref{lem:f}:
\begin{itemize}
\item if $0 \leqslant x \leqslant 1/2$, then $x (1+f(y)) \leqslant x(1+f(1)) = (1+\alpha_\infty) x = f(x) \leqslant f(1/2)$;
\item if $1/2 < x \leqslant 1$ and $f(1/2) < f(y)$, then $\theta \leqslant y \leqslant 1$, hence
$x (1+f(y)) = x + \alpha_\infty x y \leqslant x + \alpha_\infty (1-x) \leqslant \min\{f(x),f(1/2)\}$;
\item if $0 \leqslant f(y) \leqslant f(1/2)$, and since $\alpha_\infty \geqslant 1$,
we have $x (1 + f(y)) \leqslant x (1+f(1/2)) = x(3 + \alpha_\infty)/2 \leqslant x(1+\alpha_\infty) \leqslant f(x)$;
if, furthermore, $y \leqslant 1/2$, then
\begin{align*}
  x (1+f(y)) & \leqslant (1+(1+\alpha_\infty) y) / (1+y) = (1+\alpha_\infty) - \alpha_\infty / (1+y) \\
  & \leqslant (1+\alpha_\infty) - 2\alpha_\infty/3 = (3+\alpha_\infty)/3,
\end{align*}
and if $1/2 \leqslant y$, then $x (1+f(y)) \leqslant (1+f(1/2)) / (1+y) \leqslant 2(1+f(1/2))/3 = (3+\alpha_\infty)/3$; since $\alpha_\infty \geqslant 3$, it follows that
$x (1+f(y)) \leqslant (3+\alpha_\infty)/3 \leqslant (1+\alpha_\infty) / 2 = f(1/2)$ in both cases.
\end{itemize}
\end{proof}

Using Lemma~\ref{lemma:hard-b} and the above results about the expansion function,
we finally get the following result, of which Proposition~\ref{pro:domination} is an immediate consequence.

\begin{lemma}\label{lem:4b}
Let $\S = (R_1,\ldots,R_h)$ be a stable stack obtained during the main loop of Java's \TS.
For all integers $i \geqslant 2$, we have $r_1+r_2+\ldots+r_{i-1} \leqslant r_i f(r_{i-1} / r_i)$,
where $f$ is the expansion function.
In particular, we have $\alpha_n = \beta_n \leqslant \alpha_\infty$ for all integers $n \geqslant 1$.
\end{lemma}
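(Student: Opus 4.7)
The plan is to establish the displayed inequality by a strong induction on $r_i$, with an auxiliary induction on $i$ for each fixed value of $r_i$; once this is done, the ``in particular'' clause is immediate by taking $i=h$, since Lemma~\ref{lem:f} gives $f(r_{h-1}/r_h)\leqslant f(1)=\alpha_\infty$, hence $\beta_{r_h}\leqslant\alpha_\infty$, and the previously established identity $\alpha_n=\beta_n$ finishes the job. The outer induction is precisely what makes Lemma~\ref{lemma:hard-b} available: when we tackle a stack whose top value is $r_i$, the hypothesis ``$\alpha_n\leqslant\alpha_\infty$ for all $n\leqslant r_i-1$'' will already be known.

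The inner induction starts at $i=2$, which reduces to $r_1\leqslant r_2 f(r_1/r_2)$. Stability gives $r_1<r_2$, so $r_1/r_2\in[0,1)$, and the estimate $\alpha_\infty x\leqslant f(x)$ from Lemma~\ref{lem:f} together with $\alpha_\infty>1$ closes this case. For the inductive step $i\geqslant 3$, I split on whether $i$ is an obstruction index of $\S$. In the \emph{non-obstruction case} $r_{i-2}+r_{i-1}\leqslant r_i$, the inner induction hypothesis applied at $i-1$ yields
\[ r_1+r_2+\ldots+r_{i-1}\leqslant r_{i-1}\bigl(1+f(r_{i-2}/r_{i-1})\bigr). \]
Setting $x=r_{i-2}/r_{i-1}$ and $y=r_{i-1}/r_i$, the non-obstruction condition reads exactly $y(1+x)\leqslant 1$, so Lemma~\ref{lem:f2} (applied with its two variables swapped) gives $y(1+f(x))\leqslant f(y)$, which is precisely what we need.

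In the \emph{obstruction case}, I locate the largest block of consecutive obstruction indices $i-k,\ldots,i$ containing $i$, and set $L=i-k-1$ (so $L$ is either $2$ or a non-obstruction index). The outer hypothesis unlocks Lemma~\ref{lemma:hard-b}, which both forces $k\leqslant\alpha_\infty+1$ (i.e.\ $k\leqslant 5$) and bounds $r_{L-1}=r_{i-k-2}$ by an explicit linear multiple of $r_{i-1}$. From there, the inner induction hypothesis applied at $L$ controls the prefix $r_1+\ldots+r_{L-1}$, while Lemma~\ref{lem:inv-K} gives the monotonicity $r_L<r_{L+1}<\ldots<r_{i-1}$ and the obstruction relations $r_{L+j}\leqslant r_{L+j-1}+r_{L+j-2}$ expand the middle block into a controlled linear combination of $r_{L}$ and $r_{L-1}$, hence of $r_{i-1}$ and $r_{i-2}$. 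Plugging all of this into the target inequality reduces it to a finite family of numerical statements, one per value of $k\in\{0,1,\ldots,5\}$ and per subinterval of $[0,1]$ in which $y=r_{i-1}/r_i$ lies, that can be checked using the three pieces of $f$ and the envelope bounds of Lemma~\ref{lem:f}.

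The hard part is precisely this obstruction case: the estimates coming from Lemma~\ref{lemma:hard-b} and from the obstruction relations are linear in the $r_j$'s, and must be absorbed by the piecewise-linear function $f$ whose three pieces — joined at $1/2$ and $\theta=\alpha_\infty/(2\alpha_\infty-1)$ — were precisely engineered so that the induction closes. I expect the case analysis on the position of $y$ relative to the breakpoints, combined with the length $k+1\leqslant 6$ of the obstruction block, to be elementary but delicate, and to be the step where the specific numerical value $\alpha_\infty=2+\sqrt 7$ finally comes into play as the least fixed point that makes every subcase valid.
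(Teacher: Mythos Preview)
Your inductive skeleton and your treatment of the non-obstruction case are correct and match the paper's argument exactly. The gap is in the obstruction case.

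You propose to control the middle block $r_L,r_{L+1},\ldots,r_{i-1}$ by the obstruction inequalities $r_{L+j}\leqslant r_{L+j-1}+r_{L+j-2}$, expressing each term as a Fibonacci combination of $r_L$ and $r_{L-1}$, and then passing ``hence'' to $r_{i-1}$ and $r_{i-2}$. But those inequalities point the wrong way: they bound high-index runs by low-index ones, whereas you need every term bounded above by a \emph{small} multiple of $r_{i-1}$. You invoke Lemma~\ref{lemma:hard-b} only once, for $r_{L-1}$; for $r_L$ you offer nothing sharper than the monotonicity bound $r_L<r_{i-1}$, and there is no stated mechanism converting the pair $(r_L,r_{L-1})$ into $(r_{i-1},r_{i-2})$. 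Even with the best available prefix estimate $r_1+\ldots+r_{L-1}\leqslant r_L f(1/2)$ and crude monotonicity on the rest, one gets only $(1+f(1/2)+k)\,r_{i-1}=\tfrac{3+\alpha_\infty+2k}{2}\,r_{i-1}$, which exceeds $\alpha_\infty r_{i-1}$ already for $k\geqslant 1$; the Fibonacci route is looser still. No finite check on $k$ and $y$ will close this.

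What the paper does instead is apply Lemma~\ref{lemma:hard-b} to \emph{every} suffix sub-block $\{i-j,\ldots,i\}$ of the obstruction block, obtaining $r_{i-j-2}\leqslant\tfrac{\alpha_\infty+1-j}{\alpha_\infty+2}\,r_{i-1}$ for each relevant $j$. This linear decay of the run sizes as one descends the block is exactly what Lemma~\ref{lemma:hard-b} was engineered to deliver, and it cannot be recovered from the Fibonacci-type obstruction inequalities alone. The prefix is first compressed to $r_L f(1/2)$ via induction at $L-1$ together with Lemma~\ref{lem:f2} (using that $L$ is a non-obstruction index), after which every remaining term is an explicit multiple of $r_{i-1}$. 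Summing yields a single quadratic in $k$ whose maximum over integers equals $\alpha_\infty r_{i-1}$ precisely when $\alpha_\infty=2+\sqrt 7$. No case split on $y=r_{i-1}/r_i$ is needed: one gets $r_1+\ldots+r_{i-1}\leqslant \alpha_\infty r_{i-1}=\alpha_\infty x\,r_i\leqslant f(x)\,r_i$ directly via Lemma~\ref{lem:f}.
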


\begin{proof}
Lemma~\ref{lem:f} proves that $2 x \leqslant \alpha_\infty x \leqslant y f(x/y)$ whenever $0 < x \leqslant y$,
and therefore the statement of Lemma~\ref{lem:4b} is immediate when $i \leqslant 3$.
Hence, we prove Lemma~\ref{lem:4b} for $i \geqslant 4$, and proceed by induction on $r_i = n$, thereby assuming that $\alpha_{n-1}$ exists.

Let $x = r_{i-1} / r_i$ and $y = r_{i-2} / r_{i-1}$. By Lemma~\ref{lem:inv-K}, and since the stack $\S$ is stable,
we know that $r_{i-2} < r_{i-1} < r_i$, and therefore that $x < 1$ and $y < 1$.
If $i$ is not an obstruction index, then we have $r_{i-2} + r_{i-1} \leqslant r_i$, i.e., $x (1+y) \leqslant 1$ and, using Lemma~\ref{lem:f2}, it follows that
$r_1+\ldots+r_{i-1} = (r_1 + \ldots + r_{i-2}) + r_{i-1} \leqslant f(y) r_{i-1} + r_{i-1} = x (1 + f(y)) r_i \leqslant f(x) r_i$.

On the contrary, if $i$ is an obstruction index,
let $k$ be the smallest positive integer such that $i-k$ is not an obstruction index.
Since the stack $\S$ is stable, we have $r_1 + r_2 < r_3$, which means that $3$ is not an obstruction index,
and therefore $i-k \geqslant 3$. Let $u = r_{i-k-1} / r_{i-k}$ and $v = r_{i-k-2} / r_{i-k-1}$. By construction,
we have $r_{i-k-2} + r_{i-k-1} \leqslant r_{i-k}$, i.e., $u(1+v) \leqslant 1$.
Using Lemma~\ref{lemma:hard-b}, and since $r_{i-k-1} < r_i$ and $\alpha_{r_i-1} \leqslant f(1) = \alpha_\infty$ by induction hypothesis, we have
\begin{align*}
r_1+\ldots+r_{i-1} & = (r_1 + \ldots + r_{i-k-2}) + r_{i-k-1} + \ldots + r_{i-1} \leqslant r_{i-k-1} f(v) + r_{i-k-1} + \ldots + r_{i-1} \\
& \leqslant r_{i-k} u (1 + f(v)) + r_{i-k} + \ldots + r_{i-1} \leqslant r_{i-k} f(1/2) + r_{i-k} + \ldots + r_{i-1} \\
& \leqslant \frac{1}{\alpha_\infty+2} \left((\alpha_\infty + 3-k) f(1/2) + \sum_{j=1}^k (\alpha_\infty+3-j) \right) r_{i-1} \\
& \leqslant \frac{1}{2(\alpha_\infty+2)} \left(\alpha_\infty^2 + (4+k)\alpha_\infty - k^2 + 4k + 3\right)r_{i-1}.
\end{align*}

The function $g : t \to \alpha_\infty^2 + (4+t)\alpha_\infty - t^2 + 4t + 3$ takes its maximal value, on the real line,
at $t = (\alpha_\infty+4)/2 \in (4,5)$. Consequently, for all integers $k$, and since $\alpha_\infty \leqslant 5$, we have
\[g(k) \leqslant \max\{g(4),g(5)\} = \alpha_\infty^2 + \max\{8\alpha_\infty+3,9\alpha_\infty-2\} = \alpha_\infty^2 + 8 \alpha_\infty + 3.\]
Then, observe that $2(\alpha_\infty+2)\alpha_\infty = 30 + 12 \sqrt{7} = \alpha_\infty^2 + 8 \alpha_\infty + 3$.
It follows that
\[r_1+\ldots+r_{i-1} \leqslant \frac{\alpha_\infty^2 + 8 \alpha_\infty + 3}{2(\alpha_\infty+2)} r_{i-1} = \alpha_\infty x r_i \leqslant f(x) r_i.\]

Hence, regardless of whether $i$ is an obstruction index or not,
we have $r_1+\ldots+r_{i-1} \leqslant f(x) r_i \leqslant f(1) r_i = \alpha_\infty r_i$, which completes the proof.
\end{proof}

\subsubsection{Proving the first part of Theorem~\ref{thm:good-size}}\label{sec:good1}

We prove below the inequality of Theorem~\ref{thm:good-size};
proving that that the constant $\Delta$ used in Theorem~\ref{thm:good-size} is optimal will be the done in the next section.

In order to carry out this proof, we need to consider some integers of considerable interest.
Let $\S = (R_1,\ldots,R_h)$ be a \emph{stable} stack of runs.
We say that an integer $i \geqslant 1$ is a \emph{growth index} if $i+2$ is \emph{not} an obstruction index,
and that $i$ is a \emph{strong growth index} if $i$ is a growth index and if, in addition, $i+1$ is an obstruction index.
Note that $h$ an $h-1$ are necessarily growth indices, since $h+1$ and $h+2$ are too large to be obstruction indices.

Our aim is now to prove inequalities of the form $r_{i+j} \geqslant \Delta^j r_i$, where $3 \leqslant i \leqslant i+j \leqslant h$.
However, such inequalities do not hold in general, hence we restrict the scope of the integers~$i$ and $i+j$,
which is the subject of the two following results.

\begin{lemma}\label{lem:non-obstruction-growth}
Let $i$ and $j$ be positive integers such that $i+2 \leqslant j \leqslant h$.
If no obstruction index $k$ exists such that $i+2 \leqslant k \leqslant j$, then
$2 \Delta^{j-i-2} r_i \leqslant r_j$.
\end{lemma}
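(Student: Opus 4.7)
The plan is to induct on $m = j - i \geqslant 2$. The hypothesis that no index $k \in \{i+2,\ldots,j\}$ is an obstruction index means exactly $r_k > r_{k-1} + r_{k-2}$ for every such $k$. Combined with the monotonicity $r_1 < r_2 < r_3 < \cdots < r_h$ — which holds for a stable stack thanks to its defining conditions $r_1 < r_2$, $r_1+r_2 < r_3$ together with the chain $r_3<r_4<\cdots<r_h$ of Proposition~\ref{pro:domination} — this yields the key two-step doubling inequality $r_k > 2 r_{k-2}$ valid throughout the range of interest.

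The base case $m=2$ is immediate: $r_{i+2} > r_{i+1} + r_i > 2 r_i = 2\Delta^{0} r_i$. For $m=3$, chaining the $m=2$ bound with monotonicity gives $r_{i+3} > r_{i+2} + r_{i+1} > 2 r_i + r_i = 3 r_i$, and closing the case reduces to the numerical check $2\Delta \leqslant 3$, i.e.\ $1+\sqrt 7 \leqslant (3/2)^5 = 243/32$, which is obvious since $1+\sqrt 7 < 4$.

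For the inductive step $m \geqslant 4$, I apply the two-step doubling once at the top, $r_{i+m} > 2 r_{i+m-2}$, and then invoke the induction hypothesis on the pair $(i, i+m-2)$. This is legitimate because the set $\{i+2,\ldots,i+m-2\}$ of indices to check is a subset of $\{i+2,\ldots,j\}$, so the no-obstruction assumption is inherited. One obtains $r_{i+m} > 4 \Delta^{m-4} r_i$, and the proof collapses to the single scalar verification $\Delta^2 \leqslant 2$, equivalent to $1+\sqrt 7 \leqslant 2^{5/2} = 4\sqrt 2$, again immediate from $1+\sqrt 7 < 4 < 4\sqrt 2$.

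I do not expect a substantial obstacle here: the constant $\Delta = (1+\sqrt 7)^{1/5}$ is evidently engineered so that elementary doubling over two indices beats the per-step growth rate $\Delta$, and the only bookkeeping to be careful about is that stability legitimately provides $r_{i+1} > r_i$ even when $i \in \{1,2\}$. The genuine difficulty of Theorem~\ref{thm:good-size} will lie not in this lemma but in stitching together stretches of non-obstruction indices across the obstruction indices themselves, so as to turn this "interior" bound into a global estimate on $r_h$; this is what subsequent lemmas will have to address.
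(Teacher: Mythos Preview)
Your proof is correct, but it takes a different route from the paper's. The paper runs a Fibonacci telescoping: from $r_{k+2} \geqslant r_{k+1} + r_k$ it obtains $F_{j-i+1} r_i \leqslant r_j$, then invokes $F_n \geqslant 2\phi^{n-3}$ and the inequality $\Delta < \phi$ to conclude. You instead extract only the two-step doubling $r_k > 2 r_{k-2}$ (from $r_k > r_{k-1} + r_{k-2}$ together with monotonicity) and induct in steps of two, reducing everything to the single numerical check $\Delta^2 \leqslant 2$, plus a separate base case $m=3$ handled by $2\Delta \leqslant 3$. Both arguments are elementary; the paper's gives the tighter growth rate $\phi$ (harmless overkill here), while yours is shorter and avoids any appeal to Fibonacci identities, at the price of the weaker rate $\sqrt{2}$, which is still comfortably above $\Delta \approx 1.295$. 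Your observation that stability is what supplies $r_1 < r_2 < r_3$ for the small-index cases, with Proposition~\ref{pro:domination} covering $r_3 < r_4 < \cdots$, is exactly right and worth making explicit, as the paper leaves this implicit in the surrounding context.
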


\begin{proof}
For all $n \geqslant 0$, let $F_n$ denote the $n$\textsuperscript{th} Fibonacci number,
defined by $F_0 = 0$, $F_1 = 1$ and $F_{n+2} = F_n + F_{n+1}$ or, alternatively,
by $F_n = (\phi^n - (-\phi)^{-n})/\sqrt{5}$, where $\phi = (1+\sqrt{5})/2$ is the Golden ratio.
Observe now that
\[F_{j-i+1} r_i \leqslant F_{j-i-1} r_i + F_{j-i} r_{i+1} \leqslant F_{j-i-2} r_{i+1} + F_{j-i-1} r_{i+2} 
\leqslant \ldots \leqslant F_0 r_{j-1} + F_1 r_j = r_j.\]

Moreover, for all $n \geqslant 3$, we have $F_n = 2 F_n / F_3 = 2 \phi^{n-3} (1-(-1)^n \phi^{-2n})/(1-\phi^{-6}) \geqslant 2 \phi^{n-3}$.
Since $\Delta < \phi$, it follows that $2 \Delta^{j-i-2} r_i \leqslant F_{j-i+1} r_i \leqslant r_j$.
\end{proof}

\begin{lemma}\label{lem:next-growth}
Let $i$ and $j$ be positive integers such that $1 \leqslant i \leqslant j \leqslant h$.
If $i$ is a growth index and $j$ is a strong growth index, then $\Delta^{j-i} r_i \leqslant r_j$.
\end{lemma}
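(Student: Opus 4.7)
My plan is to proceed by strong induction on $d = j - i$, with the base case $d = 0$ being trivial. For the inductive step, I would examine the maximal block $[a, j+1]$ of consecutive obstruction indices ending at $j+1$; such a block exists because $j$ being a strong growth index forces $j+1$ to be obstruction and $j+2$ not to be, and it has size $k+1 \leqslant 6$ where $k = j+1-a$. A first observation is that $a \geqslant i+3$: since $i$ is a growth index, the index $i+2$ is non-obstruction and hence cannot belong to the block.

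The argument then splits on the value of $a$. In the boundary case $a = i + 3$, the block sits immediately above $i$; I would apply Lemma~\ref{lemma:hard-b} to the sub-block $[i+4, j+1]$ (of size $k \leqslant 5$) to obtain $r_{i+2} \leqslant \frac{\alpha_\infty + 4 - d}{\alpha_\infty + 2}\, r_j$, then combine with the elementary inequality $r_{i+2} > r_i + r_{i+1} \geqslant 2 r_i$ forced by $i$ being a growth index, yielding $r_j > \frac{2(\alpha_\infty + 2)}{\alpha_\infty + 4 - d}\, r_i$; a short calculation verifies $r_j \geqslant \Delta^d r_i$ for each $d \in \{3,\ldots,7\}$, while the degenerate case $d = 2$ follows directly from $\Delta^2 \leqslant 2$. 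The critical identity powering the entire proof is $\frac{2(\alpha_\infty + 2)}{\alpha_\infty - 1} = 1 + \sqrt{7} = \Delta^5$, attained at $d = 5$, which is precisely what the choice $\Delta = (1+\sqrt{7})^{1/5}$ is calibrated to satisfy. In the interior case $a \geqslant i+4$, Lemma~\ref{lemma:hard-b} applied to the full block gives $r_j \geqslant \frac{\alpha_\infty + 2}{\alpha_\infty + 1 - k}\, r_{a-2}$, and I would bound $r_{a-2}$ by locating the largest strong growth index $i' \in [i+1, a-3]$ (noting that $a-3$ is itself a growth index, since $a-1$ is non-obstruction) and applying the induction hypothesis to $(i, i')$; the gap from $r_{i'}$ to $r_{a-2}$ is bridged by Lemma~\ref{lem:non-obstruction-growth}, using that the maximality of $i'$ forces $\{i'+2, \ldots, a-2\}$ to be free of obstruction indices. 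The degenerate sub-cases (no such $i'$ exists, or $i' = a-3$) are handled by reducing to Lemma~\ref{lem:non-obstruction-growth} on the whole segment $[i, a-2]$, or by a nested application of the boundary analysis starting from $i'' = a-3$.

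The main obstacle is the bookkeeping in the interior case. Depending on how obstruction blocks nest between $i$ and $a$, one may have to iterate the decomposition, and each step must be set up so that the slack surviving each application of Lemma~\ref{lemma:hard-b} suffices to cover the rest of the chain. Since the boundary case is tight at $d = 5$, there is no room at all for sloppy accounting. This same tightness also foreshadows why $\Delta$ is optimal in Theorem~\ref{thm:good-size}: the extremal sequences referred to in the second part of that theorem are built by forcing the critical configuration $d = 5$ to recur along the stack, so that the inequality $r_h \geqslant \Delta^{h-3}$ is saturated asymptotically.
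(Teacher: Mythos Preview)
Your approach is correct and uses the same ingredients as the paper's proof: strong induction on $j-i$, Lemma~\ref{lemma:hard-b} for the obstruction block ending at $j+1$, Lemma~\ref{lem:non-obstruction-growth} for the obstruction-free segment below it, and the critical identity $2(\alpha_\infty+2)/(\alpha_\infty-1) = 1+\sqrt{7} = \Delta^5$ (which the paper phrases as equality at $j-\ell = 3$).

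The one organizational difference worth noting is that the paper performs your inductive splitting \emph{upfront}: it assumes without loss of generality that there is no strong growth index strictly between $i$ and $j$ (otherwise split there and apply induction to both halves), and this single reduction collapses your boundary/interior distinction and all of your $i'$ sub-cases into one uniform argument. Under that assumption, setting $\ell = a-1$ (the largest non-obstruction index $\leqslant j$), the interval $[i+2,\ell]$ is automatically obstruction-free, so Lemma~\ref{lem:non-obstruction-growth} applies directly from $i$ to $\ell$ (this subsumes your boundary case as $\ell = i+2$), and a single numerical check over $j-\ell \in \{0,\ldots,5\}$ finishes. Your degenerate sub-cases are handled correctly, but they become unnecessary once you observe that any intermediate strong growth index $i'$ immediately gives $\Delta^{j-i} r_i \leqslant \Delta^{j-i'} r_{i'} \leqslant r_j$ by two applications of the induction hypothesis.
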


\begin{proof}
Without loss of generality, let us assume that $i < j$ and that there is no strong growth index $k$ such that $i < k < j$.
Indeed, if such an index $k$ exists, then a simple induction completes the proof of Lemma~\ref{lem:next-growth}.

Let $\ell$ be the largest integer such that $\ell \leqslant j$ and $\ell$ is not an obstruction index.
Lemmas~\ref{lemma:hard-b} and~\ref{lem:4b} prove that $(\alpha_\infty+2)r_\ell \leqslant (\alpha_\infty+2+\ell-j) r_j$
and that $(\alpha_\infty+2)r_{\ell-1} \leqslant (\alpha_\infty+1+\ell-j) r_j$.
The latter inequality proves that $j-\ell \leqslant \lfloor \alpha_\infty+1 \rfloor = 5$.

By construction, we have $i+2 \leqslant \ell$, and no integer $k$ such that $i+2 \leqslant k \leqslant \ell$ is an obstruction index.
Hence, Lemma~\ref{lem:non-obstruction-growth} proves that $2 (\alpha_\infty+2) \Delta^{\ell-i-2} r_i \leqslant(\alpha_\infty+2)r_\ell \leqslant (\alpha_\infty+2+\ell-j) r_j$.
Moreover, simple numerical computations, for $j-\ell \in \{0,\ldots,5\}$, prove that $\Delta^{j-\ell+2} \leqslant 2 (\alpha_\infty+2)/(\alpha_\infty+2+\ell-j)$,
with equality when $j-\ell = 3$.
It follows that $\Delta^{j-i} r_i = \Delta^{j-\ell+2} \Delta^{\ell-i-2} r_i \leqslant r_j$.
\end{proof}

Finally, the inequality of Theorem~\ref{thm:good-size} is an immediate consequence of the following result.

\begin{lemma}
Let $\S = (R_1,\ldots,R_h)$ be a stack obtained during the main loop of Java's \TS.
We have $r_h \geqslant \Delta^{h-3}$.
\end{lemma}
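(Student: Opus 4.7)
The plan is to prove $r_h \geqslant \Delta^{h-3}$ by combining Lemmas~\ref{lem:non-obstruction-growth} and~\ref{lem:next-growth} through a short case analysis on the obstruction indices, after first reducing the problem to stable stacks. For a general stack $\S$ of height $h$ arising during the main loop, I observe that $\S$ lies inside a merge cascade that was triggered by some push: just before that push, the stack $\S''$ was stable with some height $h'' \geqslant h-1$, and neither the push nor the subsequent merges altered the bottom run, so $r_h^{\S} = r_{h''}^{\S''}$. It therefore suffices to establish the slightly stronger bound $r_{h''}^{\S''} \geqslant \Delta^{h''-2}$ for stable stacks, since then $r_h^{\S} \geqslant \Delta^{h''-2} \geqslant \Delta^{h-3}$.

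For a stable stack $\S$ of height $h$ (recycling notation), the small cases $h \leqslant 2$ are immediate from $r_h \geqslant 1$. For $h \geqslant 3$, note first that $1$ is a growth index of $\S$, since stability forces $r_3 > r_1 + r_2$. Let $m$ be the largest obstruction index in $\{3, \ldots, h\}$, or set $m = 2$ if none exists. If $m = 2$, the interval $[3,h]$ contains no obstruction, and Lemma~\ref{lem:non-obstruction-growth} with $i = 1$ and $j = h$ yields $r_h \geqslant 2\Delta^{h-3}r_1 \geqslant \Delta^{h-2}$, using $\Delta < 2$. If $3 \leqslant m \leqslant h-1$, the maximality of $m$ forces $m-1$ to be a strong growth index and leaves $[m+1,h]$ free of obstructions; Lemma~\ref{lem:next-growth} applied from $1$ to $m-1$ gives $r_{m-1} \geqslant \Delta^{m-2}r_1$, and Lemma~\ref{lem:non-obstruction-growth} applied from $m-1$ to $h$ gives $r_h \geqslant 2\Delta^{h-m-1}r_{m-1}$, whose product is $r_h \geqslant 2\Delta^{h-3}r_1 \geqslant \Delta^{h-2}$. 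If $m = h$, then $h-1$ is a strong growth index (since $h+1$ is vacuously non-obstruction and $h$ is obstruction), Lemma~\ref{lem:next-growth} yields $r_{h-1} \geqslant \Delta^{h-2}r_1$, and the strict monotonicity $r_h > r_{h-1}$ from Lemma~\ref{lem:inv-K}(a) finishes the case.

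The main technical obstacle I expect is confirming that the largest obstruction index $m$ behaves as claimed: namely, that $m-1$ really is a strong growth index and that the window $[m+1,h]$ contains no further obstruction, so that Lemma~\ref{lem:non-obstruction-growth} can be pushed all the way up to $j = h$. Both reduce to chasing the definition of obstruction index and exploiting the maximality of $m$, but deserve a careful check. Everything else is routine, and the factor $2$ available in Lemma~\ref{lem:non-obstruction-growth} comfortably absorbs the extra power of $\Delta$ needed for the reduction from unstable to stable stacks, precisely because $2 > \Delta$.
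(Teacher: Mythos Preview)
Your proposal is correct and follows essentially the same route as the paper's proof: reduce to stable stacks (where you prove the stronger bound $r_h \geqslant \Delta^{h-2}$), note that $1$ is a growth index, and then split into cases according to the largest obstruction index, combining Lemmas~\ref{lem:non-obstruction-growth} and~\ref{lem:next-growth} exactly as the paper does. The only cosmetic difference is that the paper dispatches the non-stable case with $h \leqslant 3$ explicitly before invoking the reduction, whereas your reduction argument ``bottom run unchanged'' technically needs $h \geqslant 3$ (a Case~\#2 merge at height~$3$ does alter the bottom run); but since $r_h \geqslant 1 \geqslant \Delta^{h-3}$ is trivial for $h \leqslant 3$, this is not a real gap.
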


\begin{proof}
Let us first assume that $\S$ is stable.
Then, $r_1 \geqslant 1$, and $1$ is a growth index.
If there is no obstruction index, then Lemma~\ref{lem:non-obstruction-growth} proves that $r_h \geqslant 2 \Delta^{h-3} r_1 \geqslant \Delta^{h-2}$.

Otherwise, let $\ell$ be the largest obstruction index. Then, $\ell-1$ is a strong growth index, and Lemma~\ref{lem:next-growth} proves that
$r_{\ell-1} \geqslant \Delta^{\ell-2} r_1 \geqslant \Delta^{\ell-2}$. If $\ell = h$, then $r_h \geqslant r_{\ell-1} \geqslant \Delta^{h-2}$,
and if $\ell \leqslant h-1$, then Lemma~\ref{lem:non-obstruction-growth} also
proves that $r_h \geqslant 2 \Delta^{h-\ell-1} r_{\ell-1} \geqslant \Delta^{h-\ell} \Delta^{\ell-2} = \Delta^{h-2}$.

Finally, if $\S$ is not stable, the result is immediate for $h \leqslant 3$, hence we assume that $h \geqslant 4$.
The stack $\S$ was obtained by pushing a run onto a stable stack $\S'$ of size at least $h-1$, then merging runs from
$\S'$ into the runs $R_1$ and $R_2$. It follows that $R_h$ was already the largest run of $S'$, and therefore that $R_h \geqslant \Delta^{h-3}$. 
\end{proof}

\subsubsection{Proving the second part of Theorem~\ref{thm:good-size}}\label{sec:good2}

We finally focus on proving that the constant $\Delta$ of Theorem~\ref{thm:good-size} is optimal.
The most important step towards this result consists in proving that $\alpha_\infty = \lim_{n \to \infty} \alpha_n$,
with the real numbers $\alpha_n$ introduced in Definition~\ref{def:alphabeta} and $\alpha_\infty = 2 + \sqrt{7}$.
Since it is already proved, in Lemma~\ref{lem:4b}, that $\alpha_n \leqslant \alpha_\infty$ for all $n \geqslant 1$,
it remains to prove that $\alpha_\infty \leqslant \liminf_{n \to \infty} \alpha_n$.
We obtain this inequality by constructing explicitly, for $k$ large enough,
a stable sequence of runs $(R_1,\ldots,R_h)$ such that $r_h = k$ and $r_2 + \ldots + r_{h-1} \approx \alpha_\infty k$.
Hence, we focus on constructing sequences of runs.

In addition, let us consider the stacks of runs created
by the main loop of Java's \TS on a sequence of runs $P_1,\ldots,P_n$.
We say below that the sequence $P_1,\ldots,P_k$
\emph{produces} a stack of runs $\S = (R_1,\ldots,R_h)$ if the stack $\S$ is obtained
after each of the runs $P_1,\ldots,P_n$ has been pushed;
observe that the sequence $P_1,\ldots,P_n$ produces exactly one stable stack.
We also say that a stack of runs is \emph{producible} if it is produced by some sequence of runs.

Finally, in what follows, we are only concerned with run sizes.
Hence, we abusively identify runs with their sizes.
For instance, in Figure~\ref{fig:inv_still_broken},
the sequence $(109, 83, 25, 16, 8, 7, 26, 2, 27)$ produces the stacks
$(27,2,26,56,83,109)$ and $(27,28,56,83,109)$;
only the latter stack is stable.

We review now several results that will provide us with convenient and powerful ways of constructing producible stacks.

\begin{lemma}\label{lem:6.5}
Let $\S = (r_1,\ldots,r_h)$ be a stable stack
produced by a sequence of runs $p_1,\ldots,p_n$.
Assume that $n$ is minimal among all sequences that produce $\S$.
Then, when producing $\S$, no merging step \#3 or \#4 was performed.

Moreover, for all $k \leqslant n-1$,
after the run $p_{k+1}$ has been pushed onto the stack,
the elements coming from $p_k$ will never belong to the topmost run of the stack.
\end{lemma}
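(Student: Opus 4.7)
The plan is to establish both assertions simultaneously, by contradiction on the minimality of $n$. First, I would observe that the two assertions are essentially equivalent: elements of $p_k$ can return to the topmost run only through a merge step \#3 or \#4, because operation \#1 introduces a brand-new topmost run disjoint from $p_k$, while operation \#2 modifies only positions $2$ and $3$ of the stack and leaves $R_1$ unchanged. Hence it suffices to establish that no merge \#3 or \#4 is ever performed.

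Suppose, toward a contradiction, that some merge \#3 or \#4 occurs during the production of $\S$, and focus on the \emph{earliest} such step $t$. Since only operations \#1 and \#2 have occurred strictly before step $t$, and \#2 never touches $R_1$, the topmost run at step $t$ has remained unchanged since it was last pushed: $R_1 = p_k$ for some index $k$. Moreover, every run on the stack always corresponds to a contiguous block of the input sequence; writing $R_2 = p_a + p_{a+1} + \ldots + p_{k-1}$ for a suitable index $a \leqslant k-1$, the merge at step $t$ produces a new topmost run of size $P := p_a + \ldots + p_k$. The candidate shorter witness is then the input obtained by replacing the block $p_a, p_{a+1}, \ldots, p_k$ by a single run of size $P$, yielding a sequence of length $n-(k-a) < n$.

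The core of the proof is the verification that this modified input still produces $\S$. Both executions coincide during the first $a-1$ pushes, and hence reach the same state $\S_{a-1}$ just before $p_a$ is pushed. In the original execution, the interleaving of the pushes $p_a,\ldots,p_k$ with their intermediate \#2 merges followed by the \#3 or \#4 at step $t$ transforms $\S_{a-1}$ into a state whose topmost run is $P$ and whose lower runs form some specific consolidation of the runs of $\S_{a-1}$. In the modified execution, the single push of $P$ directly onto $\S_{a-1}$ followed by \texttt{merge\_collapse} should reach the same state; one checks this by comparing the \#2 merges that fire in each execution, exploiting the crucial fact that $P$ is at least as large as each individual $p_j$ in the replaced block, so every condition $r_1 > r_3$ that triggered a \#2 in the original still triggers one in the modified execution. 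After this common state, the remaining pushes $p_{k+1},\ldots,p_n$ proceed identically, contradicting the minimality of $n$. The main obstacle is precisely this dynamical alignment---verifying that every \#2 merge performed in the original run corresponds to a \#2 merge at the equivalent stage of the modified run, and that no unintended extra merges of type \#3 or \#4 intervene before reaching the target state; this calls for a careful induction on the number of pushes in the replaced block, tracking how consolidation of the lower runs progresses.
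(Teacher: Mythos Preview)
Your overall strategy---contradiction on minimality via a shorter input sequence---is the same as the paper's, but your implementation differs in a way that leaves a real gap.

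You take the \emph{earliest} merge \#3/\#4, occurring after the push of $p_k$, and propose replacing the block $p_a,\ldots,p_k$ (where $R_2=p_a+\cdots+p_{k-1}$ at that moment) by the single run $P$. The problem is the ``dynamical alignment'' you flag at the end. After pushing $P$ onto $\S_{a-1}$, \texttt{merge\_collapse} runs to completion and yields some stable stack; you must show this coincides with the stable stack $\S_k$ reached in the original run. But between $\S_{a-1}$ and step $t$ the original execution has already consolidated some runs of $\S_{a-1}$ into $R_3,R_4,\ldots$ via \#2 merges driven by the varying sizes $p_a,\ldots,p_k$; after step $t$ further \#2's, \#3's and \#4's may fire with the top run growing beyond $P$. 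Your observation that ``$P$ is at least as large as each $p_j$, so every \#2 in the original still fires in the modified run'' only gives one inclusion: the modified run may trigger \emph{more} \#2's, and you have not argued that the extra consolidation it performs is exactly compensated by the post-$t$ merges of the original. The induction on $k-a$ you allude to would have to control this two-sided comparison, and that is precisely the nontrivial content of the lemma; as written it is a promissory note, not a proof.

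The paper sidesteps this difficulty by working with the \emph{last} \#3/\#4 and arguing by induction on $n$. By the induction hypothesis, $p_1,\ldots,p_{n-1}$ produces a stable stack $\S'$ using only pushes and \#2's; hence every \#3/\#4 lies in the final \texttt{merge\_collapse} after $p_n$ is pushed. A short check then shows that immediately after the last \#3/\#4 the runs in positions $2,3,\ldots$ are an untouched suffix of $\S'$. From this clean structure the paper reads off how $r''_2$ was created (either as a push or as a \#2 merge) and in each case writes down an explicit shorter sequence, without needing any alignment of merge dynamics. If you want to complete your approach, switching from the earliest to the last \#3/\#4 and adding the induction on $n$ is exactly what makes the bookkeeping tractable.
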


\begin{proof}
We begin by proving the first statement of Lemma~\ref{lem:6.5} by induction on $n$,
which is immediate for $n = 1$.
Hence, we assume that $n \geqslant 2$, and we further assume, for the sake of contradiction, that some merging step \#3 or \#4 took place.
Let $\S' = (r'_1,\ldots,r'_\ell)$ be the stable stack produced by the sequence $p_1,\ldots,p_{n-1}$.
By construction, this sequence is as short as possible, and therefore no merging step \#3 or \#4 was used so far.
Hence, consider the last merging step \#3 or \#4, which necessarily appears after $p_n$ was pushed onto the stack.
Just after this step has occurred, we have a stack $\S'' = (r''_1,\ldots,r''_m)$,
with $r'_i = r''_j$ whenever $j \geqslant 2$ and $i+m = j + \ell$,
and the run $r''_1$ was obtained by merging the runs $p_n, r'_1,\ldots,r'_{\ell+1-m}$.

Let $p_1,\ldots,p_k$ be the runs that had been pushed onto the stack when the run $r''_2 = r'_{m+2-\ell}$ was created.
This creation was the result of either a push step or a merging step \#2.
In both cases, and until $\S'$ is created, no merging step \#3 or \#4
ever involves any element placed within or below $r''_2$.
Then, in the case of a push step, we have $p_k = r''_2$, and therefore
the sequence $\mathcal{P}_{\#1} = (p_1,\ldots,p_k,r''_1)$ also produces the stack $\S'$.
In the case of a merging step~\#2, it is the sequence $\mathcal{P}_{\#2} = (p_1,\ldots,p_{k-1},r''')$ that also produces the stack $\S''$,
where $r'''$ is obtained by merging the runs $p_k,\ldots,p_n$.

In both cases, since the sequences $\mathcal{P}_{\#1}$ and $\mathcal{P}_{\#2}$ produce $\S''$,
they also produce $\S$. It follows that $k+1 \geqslant n$ (in the first case) or $k \geqslant n$ (in the second case).
Moreover, the run~$r''_2$ was not modified between pushing the run $p_n$ and before obtaining the stack $\S''$,
hence $k \leqslant n-1$. This proves that $k = n-1$ and that the run $r''_2$ was obtained through a push step,
i.e. $p_{n-1} = r''_2$. But then, the run $r''_1$ may contain elements of $p_n$ only, hence
is not the result of a merging step \#3 or \#4: this disproves our assumption and proves the first statement of Lemma~\ref{lem:6.5}.

Finally, observe that push steps and merging steps \#2 never allow a run in $2$\textsuperscript{nd}-to-top position or below
to go to the top position. This completes the proof of Lemma~\ref{lem:6.5}.
\end{proof}

\begin{lemma}\label{lem:6.6}
Let $\S = (r_1,\ldots,r_h)$ be a stable stack
produced by a sequence of runs $p_1,\ldots,p_n$.
Assume that $n$ is minimal among all sequences that produce $\S$.
There exist integers $i_0,\ldots,i_h$ such that $0 = i_h < i_{h-1} < \ldots < i_0 = n$ and such that, for every integer $k \leqslant h$,
(a) the runs $p_{i_k+1},\ldots,p_{i_{k-1}}$ were merged into the run $r_k$, and
(b) $i_{k-1} = i_k+1$ if and only if $k+2$ is not an obstruction index.
\end{lemma}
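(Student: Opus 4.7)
The plan is to prove the two parts separately, with part (a) following from a direct induction on steps and part (b) hinging on Lemma~\ref{lem:6.5} (only \#1 and \#2 steps occur in the minimal production) together with a minimality argument by construction of shorter productions.

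For part (a), I would induct on the number of operations performed. A push step creates a new singleton interval at the top of the stack (incrementing the topmost index $i_0$ by one); a \#2 merge combines the two adjacent intervals occupying positions $R_2$ and $R_3$ into a single interval, reducing the stack height by one. By Lemma~\ref{lem:6.5}, no \#3 or \#4 step ever occurs, and the base case of the empty stack is trivial.

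For part (b), I would proceed by induction on $n$. Let $\S_{\text{before}} = (s_1,\ldots,s_{h'})$ denote the stable stack produced by $p_1,\ldots,p_{n-1}$; by minimality of the full production, this sub-production is itself minimal for $\S_{\text{before}}$, so the inductive hypothesis applies. After pushing $p_n$, the algorithm performs some number $j-1 \geqslant 0$ of cascading \#2 merges, stopping when $p_n \leqslant s_{j+1}$ or when $j = h'$, yielding $\S = (p_n,\, s_1+\ldots+s_j,\, s_{j+1},\ldots,s_{h'})$. The cases $k = 1$, $k \geqslant 3$, and $k = 2$ with $j = 1$ all follow by a direct matching of obstruction indices between $\S$ and $\S_{\text{before}}$, combined with stability of $\S$ (which gives $r_3 > r_1+r_2$, so $3$ is not an obstruction index).

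The hard case is $k = 2$ with $j \geqslant 2$: here $r_2 = s_1+\ldots+s_j$ is composite and I must show that $4$ is an obstruction index of $\S$. Assuming by contradiction that $r_4 > r_3+r_2$, i.e.\ $s_{j+2} > s_{j+1} + (s_1+\ldots+s_j)$, I would construct a strictly shorter production of $\S$ as follows. By part (a) applied to the minimal production of $\S_{\text{before}}$, there is a moment in that production at which the stack state is exactly $(s_{j+1},\ldots,s_{h'})$ and all remaining pushes (at least $j \geqslant 2$ of them) contribute to the runs $s_1,\ldots,s_j$; replacing those remaining pushes by a single push of $r_2$ yields the stable alternative stack $\S_{\text{alt}} = (r_2, s_{j+1},\ldots,s_{h'})$, since $r_2 < s_{j+1}$ follows from stability of $\S$ and $r_2 + s_{j+1} < s_{j+2}$ is exactly our assumption. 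Finally, pushing $p_n$ onto $\S_{\text{alt}}$ also triggers no merge (using $p_n < r_2$ and $p_n + r_2 < s_{j+1}$ from stability of $\S$, together with $p_n \leqslant s_{j+1}$ from the \#2 stopping condition), producing $\S$ in strictly fewer than $n$ pushes and contradicting minimality. The main obstacle is precisely this minimality argument: identifying the correct block of pushes to compress in the original production of $\S_{\text{before}}$ and checking that every merge condition (\#2, \#3, and \#4) fails at each of the two constructed steps; the clean separation of minimal productions into pushes followed by \#2 cascades (Lemma~\ref{lem:6.5}) is what makes this structural analysis tractable.
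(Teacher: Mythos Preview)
Your overall strategy matches the paper's: reduce to push and \#2 steps via Lemma~\ref{lem:6.5}, then combine a structural description of the intermediate stable stacks with a minimality argument. Part (a) and the easy cases of (b) are fine. But the hard case has a genuine gap.

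The claim ``there is a moment in that production at which the stack state is exactly $(s_{j+1},\ldots,s_{h'})$'' does not follow from part (a). Part (a) only says that each final run $s_m$ is a contiguous block of pushed runs; it does \emph{not} say those blocks are already fully merged at the moment the last contributing push for $s_{j+1}$ occurs. Concretely, the top run of every (stable or intermediate) stack in a minimal production is a single $p_\ell$, so the state $(s_{j+1},\ldots,s_{h'})$ can only occur if $s_{j+1}$ is itself a single pushed run. Under your contradiction hypothesis $s_{j+2} > s_{j+1} + (s_1+\cdots+s_j)$ you can deduce that $s_j$ is simple (since $j+2$ is not an obstruction index in $\S_{\text{before}}$, by induction), but you cannot deduce that $s_{j+1}$ is simple: that would require $s_{j+3} > s_{j+2}+s_{j+1}$, which is independent of your assumption. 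When $s_{j+1}$ is composite, the stable stack after the pushes contributing to $s_{j+1},\ldots,s_{h'}$ still has $s_{j+1}$ broken into several pieces; those pieces only coalesce via \#2 merges triggered by the \emph{next} push.

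The paper avoids this by proving a slightly different intermediate claim: for each $k$, the stable stack after pushing $p_{i_k+1}$ (the \emph{first} run contributing to $r_k$) is exactly $(p_{i_k+1}, r_{k+1},\ldots,r_h)$. The argument uses the second half of Lemma~\ref{lem:6.5}: once $p_{i_k+1}$ is covered by the next push it never returns to the top, and any \#2 merge that would combine it with the run beneath is impossible because those elements land in distinct runs $r_k$ and $r_{k+1}$ of the final stack. Hence everything below $p_{i_k+1}$ is frozen from that point on and must already equal $r_{k+1},\ldots,r_h$. From this claim both directions of (b) follow directly: if $r_k$ is a single run the stack $(r_k,r_{k+1},\ldots,r_h)$ is stable, forcing $r_k+r_{k+1}<r_{k+2}$; conversely, if $r_k+r_{k+1}<r_{k+2}$ one replaces the block $p_{i_k+1},\ldots,p_{i_{k-1}}$ by the single run $r_k$ and checks (using the frozen-suffix description at the next level) that the resulting shorter sequence still produces $\S$. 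Your argument can be repaired along these lines, but as written the key structural claim is unsupported.
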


\vfill

\begin{proof}
The existence (and uniqueness) of integers $i_0,\ldots,i_h$ satisfying (a) is straightforward, hence we focus on proving (b).
That property is immediate if $h = 1$, hence we assume that $2 \leqslant h \leqslant n$.
Checking that the sequence $r_h,r_{h-1},p_{i_{h-2}+1},p_{i_{h-2}+2},\ldots,p_n$ produces the stack~$\S$ is immediate,
and therefore $i_{h-1} = 1$ and $i_{h-2} = 2$, i.e., $r_h = p_1$ and $r_{h-1} = p_2$.

Consider now some integer $k \leqslant h-2$, and let $\S'$ be the stable stack produced by $p_1,\ldots,p_{i_k+1}$.
From that point on, the run $p_{i_{k-1}+1}$ will never be the topmost run, and the runs
$p_j$ with $j \leqslant i_{k-1}$, which can never be merged together with the run $p_{i_{k-1}+1}$,
will never be modified again. This proves that $\S' = (p_{i_{k-1}+1},r_{k+1},\ldots,r_h)$.

Then, assume that $i_{k-1} = i_k+1$, and therefore that $p_{i_{k-1}+1} = r_k$.
Since $\S'$ is stable, we know that $r_k + r_{k+1} < r_{k+2}$, which means that
$k+2$ is not an obstruction index.
Conversely, if~$k+2$ is not an obstruction index, both sequences
$p_1,\ldots,p_{i_{k+1}+1}$ and $p_1,\ldots,p_{i_{k-1}},r_k,p_{i_{k+1}+1}$ produce the stack
$(p_{i_{k+1}+1},r_k,\ldots,r_h)$ and, since $n$ is minimal, $i_{k-1} = i_k+1$.
\end{proof}

\begin{lemma}\label{lem:6.9}
Let $\S = (r_1,\ldots,r_h)$ be a producible stable stack of height $h \geqslant 3$.
There exists an integer $\kappa \in \{1,4\}$ and a producible stable stack $\S' = (r'_1,\ldots,r'_\ell)$
such that $\ell \geqslant 2$, $r_h = r'_\ell$, $r_{h-1} = r'_{\ell-1}$
and $r_1+\ldots+r_{h-2} = r'_1 + \ldots+  r'_{\ell-2} + \kappa$.
\end{lemma}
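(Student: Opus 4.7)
The plan is to pass to a minimal producing sequence of $\S$ and modify only its tail, keeping the first two pushes intact so as to preserve the bottom two runs of the stack.

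By Lemmas~\ref{lem:6.5} and~\ref{lem:6.6}, choose a minimal producing sequence $p_1,\ldots,p_n$ of $\S$: then only merge~\#2 steps occur during the main loop, and one has $r_h = p_1$, $r_{h-1} = p_2$, and $r_1 = p_n$ (this last because $\S$ is stable, hence $3$ is not an obstruction index). Let $\S'' = (t_1,\ldots,t_{h''})$ denote the stable stack produced by $p_1,\ldots,p_{n-1}$, and let $m \geqslant 0$ be the number of merge~\#2 steps triggered when $p_n$ is pushed onto $\S''$. The merge mechanism yields $r_2 = t_1+\ldots+t_{m+1}$ and $r_{j+1} = t_{m+j}$ for $j \geqslant 2$, while stability of $\S$ forces (when $m \geqslant 1$) $p_n > t_{m+1}$, $p_n \leqslant t_{m+2}$, $p_n < r_2$, and $p_n+r_2 < r_3$.

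\emph{First branch ($\kappa = 1$).} If $m = 0$ and $r_1 \geqslant 2$, I replace $p_n$ by $p_n - 1$: each of the three no-merge inequalities satisfied by $r_1$ remains satisfied (in fact strengthened), so the new sequence produces the stable stack $\S' = (r_1{-}1, r_2, \ldots, r_h)$. If $m = 0$ and $r_1 = 1$, I take $\S' = \S''$, which has height $h-1 \geqslant 2$ and the required bottom. If $m \geqslant 1$ and $p_n \geqslant t_{m+1}+2$, substituting $p_n$ by $p_n - 1$ preserves the entire merge~\#2 cascade and again yields $\S' = (r_1{-}1, r_2, \ldots, r_h)$. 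Each sub-case gives $\kappa = 1$.

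\emph{Second branch ($\kappa = 4$).} The only remaining situation is $m \geqslant 1$ and $p_n = t_{m+1}+1$, where decreasing $p_n$ by one destroys the last merge~\#2 and the cascade collapses onto a stack whose bottom is no longer $(r_{h-1}, r_h)$. In this tight boundary case I take $p_n'$ to be the largest integer for which pushing $p_n'$ onto $\S''$ performs no merge~\#2 but exactly one merge~\#3 (combining $p_n'$ with $t_1 = p_{n-1}$) and then leaves the stack stable; equivalently, $p_n'$ is the largest integer satisfying $p_n' \geqslant t_1$ and $p_n' + t_1 < t_2$. Combining $p_n = t_{m+1}+1$, the stability inequality $p_n+r_2 < r_3 = t_{m+2}$, and the producibility and minimality constraints collected in Lemma~\ref{lem:4b} and Lemma~\ref{lem:6.6}, one shows that the tight arithmetic forces $p_n' = p_n - 4$ and that the resulting $\S' = (p_n' + t_1, t_2, \ldots, t_{h''})$ is a producible stable stack with the required bottom two runs, yielding $\kappa = 4$.

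\emph{Main obstacle.} The delicate point is the arithmetic of the second branch: I must show that the tight relations imposed by $p_n = t_{m+1}+1$ and by producibility of $\S$ pin down the maximal feasible $p_n'$ to differ from $p_n$ by exactly $4$, rather than by some other positive integer. This is where the combinatorial consequences of Lemma~\ref{lem:4b} (controlling the top portion of $\S''$) and of Lemma~\ref{lem:6.6} (controlling how the bottom of $\S$ relates to $p_1, p_2$) must be combined to constrain~$t_1$ and its predecessors, and it is precisely the reason the constant $4$ appears alongside the obvious value~$1$ in the statement of the lemma.
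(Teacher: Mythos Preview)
Your first branch is fine and matches the paper's easy cases. The second branch, however, is not a proof: it is a plan followed by the assertion ``one shows that the tight arithmetic forces $p_n' = p_n - 4$'', with no argument given. This is precisely the hard part of the lemma, and the mechanism you describe does not work as stated.

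Concretely, take $\S'' = (1,3,5,100,200)$ (producible by pushing $200,100,5,3,1$) and $p_n = 6$, which triggers $m=2$ merges~\#2 and yields $\S = (6,9,100,200)$, so we are in your tight case $p_n = t_{m+1}+1$. Your recipe asks for the largest $p_n'$ with $p_n' \geqslant t_1$ and $p_n'+t_1 < t_2$, i.e.\ $p_n' = 1$; but $1 \neq p_n - 4 = 2$, and pushing $p_n' = 1$ does \emph{not} leave the stack stable after a single merge~\#3: it triggers a cascade of merges~\#3 and~\#4 down to $(11,100,200)$. So the stack you write as $\S' = (p_n'+t_1, t_2,\ldots,t_{h''})$ is not what you obtain, and the claimed identity $p_n' = p_n - 4$ fails. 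There is no reason the invariant bounds of Lemma~\ref{lem:4b} should pin the gap to exactly~$4$; they only give inequalities.

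The paper's proof takes a genuinely different route for the tight case $p_n = q_i + 1$ (your $p_n = t_{m+1}+1$): it proceeds by induction on the length $n$ of the minimal producing sequence. When $i=2$ there is an explicit base construction (two sub-cases according to whether $q_1 \geqslant 3$); when $i \geqslant 3$ one first observes that $q_1+\ldots+q_{i-1} \geqslant q_i$ (else a shorter producing sequence would exist), then applies the induction hypothesis to the smaller producible stable stack $(q_1,\ldots,q_i)$ to shrink its top while fixing $q_i$ and $q_{i-1}$, and finally splices the result back below $q_{i+1},\ldots,q_m$ and re-pushes $p_n$. You will need some such inductive reduction; a one-shot modification of the last pushed run cannot handle the general tight case.
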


\begin{proof}
First, Lemma~\ref{lem:6.9} is immediate if $h = 3$, 
since the sequence of runs $(r_3,r_2,r_1-1)$ produces the stack $(r_1-1,r_2,r_3)$.
Hence, we assume that $h \geqslant 4$.
Let $p_1,\ldots,p_n$ be a sequence of runs, with $n$ minimal, that produces $\S$.
We prove Lemma~\ref{lem:6.9} by induction on $n$.

If the last step carried when producing $\S$ was pushing the run $P_n$ onto the stack,
then the sequence $p_1,\ldots,p_{n-1},p_n-1$ produces the stack $r_1-1,r_2,\ldots,r_h$, and we are done in this case.
Hence, assume that the last step carried was a merging step \#2.

Let $\S' = (q_1,\ldots,q_m)$ be the stable stack produced by the sequence $p_1,\ldots,p_{n-1}$,
and let~$i$ be the largest integer such that $q_i < p_n$.
After pushing $p_n$, the runs $q_1,\ldots,q_i$ are merged into one single run $r_2$,
and we also have $p_n = r_1$ and $q_{i+j} = r_{2+j}$ for all $j \geqslant 1$.
Incidentally, this proves that $m=h+i-2$ and, since $h \geqslant 4$, that $i \leqslant m-2$.
We also have $i \geqslant 2$, otherwise, if~$i = 1$, we would have had a merging step \#3 instead.

If $r_1 = p_n \geqslant q_i + 2$, then the sequence $p_1,\ldots,p_{n-1},p_n-1$ also
produces the stack $(r_1-1,r_2,\ldots,r_h)$, and we are done in this case. Hence, we further assume that $r_1 = p_n = q_i + 1$.
Since $q_{i-1} + q_i \leqslant r_2 < r_3 = q_{i+1}$, we know that $i+1$ is not an obstruction index of $S$.
Let $a \leqslant n-1$ be a positive integer such that $p_1+\ldots+p_a = q_i+q_{i+1}+\ldots+q_m$.
Lemma~\ref{lem:6.6} states that $p_{a+1} = q_{i-1}$,
and therefore that the sequence of runs $p_1,\ldots,p_{a+1}$ produces the stack $(q_{i-1},\ldots,q_m)$.

If $i = 2$ and if $q_1 \geqslant 3$, then $(q_1-1)+q_2 \geqslant q_2+2 > r_1$, and therefore
the sequence of runs $p_1,\ldots,p_a,q_1-1,r_1$ produces
the stable stack $(r_1,r_2-1,r_3,\ldots,r_h)$. However, if~$i = 2$ and $q_1 \leqslant 2$, then
the sequence of runs $p_1,\ldots,p_a,r_1-2$ produces
the stable stack $(r_1-2,r_2-2,\ldots,r_h)$.
Hence, in both cases, we are done, by choosing respectively $\kappa = 1$ and $\kappa = 4$.

Les us now assume that $i \geqslant 3$. Observe that $n \geqslant a + i \geqslant 3$
since, after the stack $(q_{i-1},\ldots,q_m)$ has been created, it remains to create
runs $q_1,\ldots,q_{i-2}$ and finally, $r_1$, which requires pushing at least $i-1$ runs in addition
to the $a+1$ runs already pushed.
Therefore, we must have $q_1 + \ldots + q_{i-1} \geqslant q_i$,
unless what the sequence $p_1,\ldots,p_a,q_1 + \ldots + q_{i-1},p_n$
would have produced the stack $\S$, despite being of length
$a + 2 < n$. In particular, since $q_1 + q_2 < q_3$, it follows that $i \geqslant 4$. Consequently, we have
$q_1 \geqslant 1$, $q_2 \geqslant 2$, $q_3 \geqslant 4$,
and therefore $q_1+\ldots+q_{i-1} \geqslant 7$, i.e. $r_2 \geqslant q_2 + 7 = r_1 + 6$.

Finally, by induction hypothesis, there exists a sequence $p'_1,\ldots,p'_u$, with $u$ minimal,
that produced the stack $(q'_1,\ldots,q'_v)$ such that $q'_v = q_i$, $q'_{v-1} = q_{i-1}$ and
$q_1+\ldots+q_{i-2} = q'_1+\ldots+q'_{v-2} + \kappa$ for some $\kappa \in \{1,4\}$.
Lemma~\ref{lem:6.6} also states that $p'_1 = q_i$ and that $p'_2 = q_{i-1}$.
It is then easy to check that the sequence of runs $p_1,\ldots,p_{b+1},p'_3,\ldots,p'_u$
produces the stable stack $(q'_1,\ldots,q'_{v-2},q_{i-1},q_i,\ldots,q_m)$.
Since $q'_j < q_{i-1} < q_i < r_1 < r_3 = q_{i+1}$ for all $j \leqslant v-2$,
pushing the run $p_n = r_1$ onto that stack and letting merging steps \#2 occur then gives the stack
$(r_1,r_2-\kappa,r_3,\ldots,r_h)$, which completes the proof since $r_1 \leqslant r_2 - 6 < r_2 - \kappa$.
\end{proof}

In what follows, we will only consider stacks that are producible and stable.
Hence, from this point on, we omit mentioning
that they systematically must be producible and stable,
and we will say that ``the stack $\S$ exists'' in order to say that
``the stack $\S$ is producible and stable''.

\begin{lemma}\label{lem:7.0}
Let $\S = (r_1,\ldots,r_h)$ and $\S' = (r'_1,\ldots,r'_\ell)$ be two stacks. Then
(a) for all~$i \leqslant h$, there exists a stack $(r_1,\ldots,r_i)$, and
(b) if $r_{h-1} = r'_1$ and $r_h = r'_2$, then there exists a stack $(r_1,\ldots,r_h,r'_3,\ldots,r'_\ell)$.
\end{lemma}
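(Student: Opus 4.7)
The plan is to build explicit push sequences for the two stacks in question, starting from the minimal push sequences for $\S$ and $\S'$ provided by Lemma~\ref{lem:6.6}, and to verify that the \TS main loop behaves as expected by exploiting the fact that, in those minimal sequences, some runs are never modified once built. The two parts of the lemma are really two instances of the same idea: appending runs underneath a certified-invisible block.

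For part~(a), I proceed by reverse induction on $i$, the case $i=h$ being trivial. For the inductive step, let $q_1,\ldots,q_m$ be a minimal push sequence for the stack $(r_1,\ldots,r_{i+1})$; Lemma~\ref{lem:6.6} gives $q_1=r_{i+1}$, and moreover the bottom run $r_{i+1}$ is never modified during the execution. I claim that the tail $q_2,\ldots,q_m$ produces $(r_1,\ldots,r_i)$. Indeed, whenever the stack has height~$3$ with $r_{i+1}$ in third position, neither case~\#2 nor case~\#4 can fire: case~\#2 would directly merge $r_{i+1}$, and case~\#4 would merge $R_1$ and $R_2$ under the condition $r_1+r_2\geqslant r_{i+1}$, but the resulting stack $(R_1+R_2,r_{i+1})$ would then immediately trigger case~\#3, again merging $r_{i+1}$. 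So only case~\#3 may fire at such moments, with the same condition $r_1\geqslant r_2$ as the top-$2$ case~\#3 in the reduced execution; at other moments $r_{i+1}$ sits at position $\geqslant 4$ and plays no role in top-$3$ checks. The reduced execution therefore mirrors the original step by step. Stability of $(r_1,\ldots,r_i)$ is immediate since it shares its top three runs with $(r_1,\ldots,r_{i+1})$.

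For part~(b), let $u_1,\ldots,u_a$ and $v_1,\ldots,v_b$ be minimal push sequences for $\S'$ and $\S$. By Lemma~\ref{lem:6.6}, some prefix $u_1,\ldots,u_U$ produces the stable intermediate stack $(r'_2,r'_3,\ldots,r'_\ell)=(r_h,r'_3,\ldots,r'_\ell)$, and some prefix $v_1,\ldots,v_V$ produces the single-run stack $(r_h)$. I propose the concatenation $u_1,\ldots,u_U,\,v_{V+1},\ldots,v_b$ as a sequence producing $(r_1,\ldots,r_h,r'_3,\ldots,r'_\ell)$. After the first block is processed the stack is $(r_h,r'_3,\ldots,r'_\ell)$; the task is to show that processing the second block on top yields the same top portion (runs strictly above $r_h$) as processing it on the stack $(r_h)$, which by construction yields $(r_1,\ldots,r_h)$. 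The crucial size bound is that every run appearing strictly above $r_h$ during the $\S$-execution has size at most $r_{h-1}$: it is a partial merge destined for some $r_k$ with $k\leqslant h-1$, and $r_{h-1}$ dominates all such $r_k$ by Lemma~\ref{lem:inv-K}. Hence $r_1\leqslant r_{h-1}$ and $r_1+r_2\leqslant 2r_{h-1}$, whereas stability of $\S'$ yields $r'_3>r'_1+r'_2=r_{h-1}+r_h>2r_{h-1}$. This rules out cases~\#2 and~\#4 whenever $r'_3$ sits in the third position (case~\#3 never involves the third run), and the deeper runs $r'_4,\ldots,r'_\ell$ never reach top-$3$ position at all. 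When $r_h$ rather than $r'_3$ is in third position, the same analysis as in (a) shows that only case~\#3 can fire, as in the original. Thus the two executions agree throughout and the concatenated sequence produces the desired stack; stability follows from that of $\S$. The main obstacle is this size-bound analysis, which certifies that the additional bottom runs of $\S'$ are invisible to the top-$3$ decisions of \TS.
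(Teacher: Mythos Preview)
Your argument for part~(a) is correct and is essentially the paper's argument unfolded into an induction; the paper simply takes the suffix $p_{a+1},\ldots,p_m$ directly, but the content is the same.

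Part~(b), however, has a genuine gap. You claim that some prefix $u_1,\ldots,u_U$ of the minimal sequence for $\S'$ produces the stable stack $(r'_2,r'_3,\ldots,r'_\ell)$. This is false in general. Whenever $4$ is an obstruction index of $\S'$ (i.e., $r'_4 \leqslant r'_2 + r'_3$), the run $r'_2$ is assembled from several pushes via merges of type~\#2, and those merges are only triggered \emph{after} the first push destined for $r'_1$ arrives on top. Concretely, take $\S' = (8,10,20,28,92)$ with minimal sequence $92,28,20,6,4,8$: the stable stacks produced by its prefixes are $(92)$, $(28,92)$, $(20,28,92)$, $(6,20,28,92)$, $(4,6,20,28,92)$, and finally $(8,10,20,28,92)$; the stack $(10,20,28,92)$ never appears. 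Worse, $(10,20,28,92)$ is not even stable, since $10+20 \geqslant 28$. So the object you want to splice onto simply need not exist.

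The repair is immediate and is exactly what the paper does: push the \emph{entire} minimal sequence for $\S'$, obtaining $(r'_1,\ldots,r'_\ell) = (r_{h-1},r_h,r'_3,\ldots,r'_\ell)$, and then append $v_3,\ldots,v_b$ (skipping $v_1=r_h$ and $v_2=r_{h-1}$, which are already present as the top two runs). Your size-bound analysis --- that every intermediate run strictly above $r_h$ has length at most $r_{h-1}$, whence $r'_3 > r_{h-1}+r_h > 2r_{h-1}$ blocks cases~\#2 and~\#4 whenever $r'_3$ reaches position~$3$ --- carries over verbatim to this corrected construction and is in fact the substance the paper leaves implicit.
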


\begin{proof}
Let $p_1,\ldots,p_m$ and $p'_1,\ldots,p'_n$ be two sequences that respectively produce $\S$ and $\S'$.
Let us further assume that $m$ is minimal.
First, consider some integer $i \leqslant h$, and let $a$ be the integer such that $p_1+\ldots+p_a = r_{i+1}+\ldots+r_h$.
It comes at once that the sequence $p_{a+1},\ldots,p_m$ produces the stack $(r_1,\ldots,r_i)$.
Second, since $m$ is minimal, Lemma~\ref{lem:6.6} proves that $p_1 = r_h = r'_2$ and that $p_2 = r_{h-1} = r'_1$ and,
once again, the sequence $p'_1,\ldots,p'_n,p_3,\ldots,p_3$ produces the stack $(r_1,\ldots,r_h,r'_3,\ldots,r'_\ell)$,
which is also stable.
\end{proof}

\begin{lemma}\label{lem:7}
For all positive integers $k$ and $\ell$ such that $k \leqslant \ell \alpha_\ell$,
there exists a stack $(r_1,\ldots,r_h)$
such that $r_h = \ell$, $k-3 \leqslant r_1+\ldots+r_{h-1} \leqslant k$,
and $r_1+\ldots+r_{h-1} = k$ if $k = \ell\alpha_\ell$.
\end{lemma}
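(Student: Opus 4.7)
The plan is to first realise the bound $\ell\alpha_\ell$ as the sum of runs in a concrete producible stable stack with $r_h=\ell$, and then iteratively decrement this sum using Lemma~\ref{lem:6.9}. By Definition~\ref{def:alphabeta} and the equality $\alpha_n=\beta_n$ proved in the corollary of Lemma~\ref{lem:inv-K}, the quantity $\ell\alpha_\ell$ is the supremum, over producible stable stacks $\S=(r_1,\ldots,r_h)$ and indices $i\leqslant h$ with $r_i=\ell$, of $r_2+\ldots+r_{i-1}$, and Lemma~\ref{lem:7.0}(a) lets me restrict to $i=h$. Since the candidate sums form a subset of $\mathbb{N}$ bounded above (Lemma~\ref{lem:4b}), the supremum is in fact a maximum, and I pick a stable stack $\S^\star=(r_1^\star,\ldots,r_{h^\star}^\star)$ realising it, with $r_{h^\star}^\star=\ell$. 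Because $\S^\star$ is stable, $r_1^\star$ can be chosen freely in $\{1,\ldots,r_2^\star-1\}$, and by tuning it I can arrange $r_1^\star+\ldots+r_{h^\star-1}^\star$ to equal $\ell\alpha_\ell$, settling the equality case $k=\ell\alpha_\ell$.

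Starting from $\S^\star$, I apply Lemma~\ref{lem:6.9} as long as the current stack has height $\geqslant 3$. Each application preserves the top two runs (in particular keeping $r_h=\ell$) but decreases $r_1+\ldots+r_{h-2}$, and hence $r_1+\ldots+r_{h-1}$, by some $\kappa\in\{1,4\}$. This yields a strictly decreasing sequence $s_0>s_1>\ldots>s_J$ of achievable sums with $s_{j-1}-s_j\in\{1,4\}$ and $s_0$ equal to $\ell\alpha_\ell$. For any target $k$ that the walk eventually drops below, let $j$ be the least index with $s_j\leqslant k$; then $s_j\geqslant s_{j-1}-4\geqslant(k+1)-4=k-3$, so $s_j\in[k-3,k]$ and the stack at step $j$ is the one sought.

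The walk can only terminate at a stack of height $2$, which must be of the form $(r_1,\ell)$ with $r_1<\ell$; hence $s_J<\ell$. The targets $k$ that the walk never drops below therefore all satisfy $k<\ell$, and I cover them by direct constructions: the one-run stack $(\ell)$ for $k\leqslant 3$ (empty sum $0$), and the stable two-run stack $(k,\ell)$ for $4\leqslant k<\ell$. The main obstacle is Step~1: establishing that the supremum $\ell\alpha_\ell$ is actually attained by a producible stable stack, and that the full sum $r_1+\ldots+r_{h-1}$ (which includes the otherwise unconstrained topmost run $r_1$) can be made to equal $\ell\alpha_\ell$ exactly, requires a careful choice of $r_1^\star$ together with a witness that the maximum of $r_2+\ldots+r_{h-1}$ is not strictly larger than $\ell\alpha_\ell-1$. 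Once this is in place, everything else is driven by the purely numerical fact that jumps of size $\{1,4\}$ cannot skip a window of width~$4$.
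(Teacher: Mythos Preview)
Your overall approach is the same as the paper's: show that the value $\ell\alpha_\ell$ is achievable as a full sum $r_1+\cdots+r_{h-1}$ with $r_h=\ell$, then walk down via Lemma~\ref{lem:6.9} in steps of $1$ or $4$, and cover small $k$ by direct two-run (or one-run) stacks. The walk and the small-$k$ part are correct, and your observation that a step of size at most $4$ cannot skip an interval of width $4$ is exactly the paper's argument.

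The gap is in your Step~1, which you yourself flag as the main obstacle. You take a maximising stable stack $\S^\star$ with $r_2^\star+\cdots+r_{h^\star-1}^\star=\ell\alpha_\ell$ and propose to ``tune'' $r_1^\star$ so that the \emph{full} sum $r_1^\star+\cdots+r_{h^\star-1}^\star$ equals $\ell\alpha_\ell$. This is impossible: since $r_1^\star\geqslant 1$, the full sum is at least $1+\ell\alpha_\ell>\ell\alpha_\ell$. Your fallback---asking for a witness that the maximum of $r_2+\cdots+r_{h-1}$ is at most $\ell\alpha_\ell-1$---directly contradicts the definition of $\alpha_\ell$ (that maximum \emph{is} $\ell\alpha_\ell$), so it cannot be established. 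Note that this gap only affects the equality clause ``$r_1+\cdots+r_{h-1}=k$ if $k=\ell\alpha_\ell$''; for the inequality $k-3\leqslant r_1+\cdots+r_{h-1}\leqslant k$, starting the walk from $s_0=r_1^\star+\ell\alpha_\ell>\ell\alpha_\ell$ would already suffice.

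The fix is not to tune $r_1^\star$ but to \emph{remove} it. If $p_1,\ldots,p_m$ is a minimal sequence producing $\S^\star$, then by Lemma~\ref{lem:6.6} applied with $k=1$ (since $\S^\star$ is stable, $3$ is not an obstruction index) the final pushed run is exactly $r_1^\star$, and the stable stack present just before that push is $(r_2^\star,\ldots,r_{h^\star}^\star)$. This stack is producible, has bottom run $\ell$, and its sum above the bottom is precisely $r_2^\star+\cdots+r_{h^\star-1}^\star=\ell\alpha_\ell$. That is the missing fact $\ell\alpha_\ell\in\Omega$, which the paper states rather tersely. With this one replacement, your proof is complete and coincides with the paper's.
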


\begin{proof}
First, if $\ell = 1$, then $\alpha_\ell = 0$, and therefore Lemma~\ref{lem:7} is vacuously true.
Hence, we assume that $\ell \geqslant 2$.
Let $\Omega$ be the set of integers $k$ for which
some sequence of runs $p_1,\ldots,p_m$ produces a stack $\S = (r_1,\ldots,r_h)$
such that $r_1+\ldots+r_{h-1} = k$ and $r_h = \ell$.
First, if $k \leqslant \ell-1$, the sequence $\ell,k$ produces the stack $(\ell,k)$,
thereby proving that $\{1,2,\ldots,\ell-1\} \in \Omega$.
Second, it follows from Lemma~\ref{lem:7.0} that $\ell \alpha_\ell \in \Omega$.

Finally, consider some integer $k \in \Omega$ such that $k \geqslant \ell$, and let $p_1,\ldots,p_m$
be a sequence of runs that produces a stack 
$(r_1,\ldots,r_h)$ such that $r_1+\ldots+r_{h-1} = k$ and $r_h = \ell$.
Since $k \geqslant \ell = r_h > r_{h-1}$, we know that $h \geqslant 3$. Hence, due to Lemma~\ref{lem:6.9},
either $k-1$ or $k-4$ (or both) belongs to $\Omega$. This completes the proof of Lemma~\ref{lem:7}.
\end{proof}

\begin{lemma}\label{lem:8}
For all positive integers $k$ and $\ell$ such that $k \leqslant \ell$, we have
$\alpha_\ell \geqslant (1 - k/\ell) \alpha_k$ and $\alpha_\ell \geqslant k \alpha_k / \ell$.
\end{lemma}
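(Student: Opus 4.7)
The plan is to establish both inequalities by constructing, from a stable producible stack that witnesses $\alpha_k$, a related stack containing a run of size $\ell$ whose middle sum furnishes the desired lower bound on $\alpha_\ell$.

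Common starting point: by the definition of $\alpha_k$ (and since $\alpha_k = \beta_k$), there exists a stable producible stack $\S = (r_1,\ldots,r_h)$ and an index $j$ with $r_j = k$ and $r_2 + \ldots + r_{j-1} = k\alpha_k$. Using Lemma~\ref{lem:7.0}~(a) I truncate to $j = h$, so $k$ sits at the bottom of $\S$; set $a = r_{h-1}$, noting that $a < k$ by Lemma~\ref{lem:inv-K}~(a) when $h \geqslant 4$ (the shallow cases are handled directly).

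For $\alpha_\ell \geqslant k\alpha_k/\ell$, the plan is to glue $\S$ to a short bridge via Lemma~\ref{lem:7.0}~(b). When $a + k < \ell$ (which always holds if $\ell \geqslant 2k$), the three-run stack $\S' = (a, k, \ell)$ is stable and producible --- a straightforward check of the push conditions, since the sequence $\ell,k,a$ triggers no merge --- and its top two runs match the bottom two of $\S$, so the glued stack has $\ell$ at its bottom and middle sum $(r_2 + \ldots + r_{h-1}) + k = k\alpha_k + k$, giving $\ell \alpha_\ell \geqslant k\alpha_k + k > k\alpha_k$. When $k \leqslant \ell < 2k$ the direct bridge fails, and I would instead fall back on Lemma~\ref{lem:7} with $L = \ell$ and $K = \lfloor k\alpha_k \rfloor$, validating the precondition $K \leqslant \ell\alpha_\ell$ by a short induction on $\ell$ (the base case $\ell = k$ being trivial).

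For $\alpha_\ell \geqslant (1-k/\ell)\alpha_k$, the target middle sum $(\ell - k)\alpha_k$ can far exceed any one-bridge gain, so the plan is to iterate the bridging construction: repeated applications of Lemma~\ref{lem:7.0}~(b), each appending a bridge that contributes roughly $k\alpha_k + k$ to the middle sum and grows the bottom by a controlled factor, produce after $\lceil\ell/k\rceil$ iterations a stack with bottom $\ell$ whose middle sum reaches $(\ell - k)\alpha_k$. A cleaner alternative is a single invocation of Lemma~\ref{lem:7} with $L = \ell$ and $K = \lfloor (\ell - k)\alpha_k \rfloor$, again verifying $K \leqslant \ell \alpha_\ell$ by strong induction on $\ell$, using the conclusion of inequality~(b) and the previous values of $\alpha$.

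The main obstacle is technical rather than conceptual: at every step, the constructed stack must remain stable and producible, which means checking the strict-growth condition of Lemma~\ref{lem:inv-K}~(a) together with the obstruction-index constraints of Lemma~\ref{lemma:hard-b}. For the iterative construction underlying inequality~(a), one must additionally choose the bridge sizes so that consecutive bridges' top and bottom runs match correctly, which typically forces Fibonacci-like growth of bridge sizes along the chain; likewise, the awkward case $k \leqslant \ell < 2k$ in inequality~(b) is precisely where the direct three-run bridge breaks and the fallback to Lemma~\ref{lem:7} together with an inductive hypothesis on $\ell$ becomes essential.
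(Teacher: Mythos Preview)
Your plan has a real gap. For $\alpha_\ell \geqslant k\alpha_k/\ell$ in the range $k \leqslant \ell < 2k$, you propose to invoke Lemma~\ref{lem:7} with $K = \lfloor k\alpha_k \rfloor$; but the hypothesis of that lemma is $K \leqslant \ell\alpha_\ell$, which is precisely the bound $\ell\alpha_\ell \geqslant k\alpha_k$ you are trying to prove. The suggested ``short induction on~$\ell$'' does not break the circle: the inductive hypothesis yields $(\ell-1)\alpha_{\ell-1} \geqslant k\alpha_k$, and deducing $\ell\alpha_\ell \geqslant k\alpha_k$ from this would require a comparison between $\alpha_{\ell-1}$ and $\alpha_\ell$ of exactly the sort Lemma~\ref{lem:8} is meant to supply in the first place. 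Nor can the three-run bridge be salvaged: a stable stack $(a, k, \ell)$ needs $a + k < \ell$, while $a = r_{h-1}$ can be as large as $k-1$, forcing $\ell \geqslant 2k$. The same circularity afflicts your ``cleaner alternative'' for the first inequality, and the iterated bridging scheme is left too vague to evaluate.

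The paper's proof sidesteps all of this with a single idea you did not consider: \emph{scaling}. Take a minimal producing sequence $p_1,\ldots,p_m$ for a witness stack $(r_1,\ldots,r_h)$ with $p_1 = r_h = k$ and $r_1+\ldots+r_{h-1} = k\alpha_k$, set $n = \lfloor \ell/k \rfloor$, and push the sequence $\ell, np_2, \ldots, np_m$. The merge conditions are homogeneous inequalities, so multiplying all run sizes by $n$ preserves the dynamics exactly; enlarging the bottom run from $nk$ to $\ell \geqslant nk$ only slackens the height-$\leqslant 3$ conditions, which (by Lemma~\ref{lem:6.5} and the fact that the bottom run is never merged) were never triggered anyway. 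The resulting stack is $(nr_1,\ldots,nr_{h-1},\ell)$, giving $\ell\alpha_\ell \geqslant nk\alpha_k$ in one stroke, and both inequalities follow immediately from $n \geqslant 1$ and $n \geqslant \ell/k - 1$, with no case split and no induction.
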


\begin{proof}
Let $n = \lfloor \ell / k \rfloor$. Using Lemma~\ref{lem:7}, there exists a sequence of runs $p_1,\ldots,p_m$
that produces a stack $(r_1,\ldots,r_h)$ such that $r_h = k$ and $r_1+\ldots+r_{h-1} = k \alpha_k$.
By choosing~$m$ minimal, Lemma~\ref{lem:6.6} further proves that $p_1 = r_h = k$.
Consequently, the sequence of runs $n p_1,\ldots, n p_{m-1}, \ell$ produces the stack
$(n r_1,\ldots,n r_{h-1},\ell)$, and therefore we have
$\ell \alpha_\ell \geqslant n(r_1+\ldots+r_{h-1}) = n k \alpha_k \geqslant \max\{1,\ell / k - 1\} k \alpha_k = \ell \max\{k/\ell,1 - k / \ell\} \alpha_k$.
\end{proof}

A first intermediate step towards proving that $\lim_{n \to \infty} \alpha_n = \alpha_\infty$
is the following result, which is a consequence of the above Lemmas.

\begin{proposition}\label{pro:8.1}
Let $\overline{\alpha} = \sup\{\alpha_n \mid n \in \mathbb{N}^\ast\}$.
We have $1+\alpha_\infty/2 < \overline{\alpha} \leqslant \alpha_\infty$, and $\alpha_n \to \overline{\alpha}$ when $n \to +\infty$.
\end{proposition}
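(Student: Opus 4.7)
My plan is to address the three claims separately, following their order in the statement.

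The bound $\overline{\alpha} \leqslant \alpha_\infty$ is immediate from Lemma~\ref{lem:4b}, which asserts $\alpha_n \leqslant \alpha_\infty$ for every $n \geqslant 1$; taking the supremum over $n$ yields the inequality. For the convergence $\alpha_n \to \overline{\alpha}$, I will use Lemma~\ref{lem:8}. Given $\epsilon > 0$, pick $k$ with $\alpha_k > \overline{\alpha} - \epsilon$. Then for every $\ell \geqslant k$, Lemma~\ref{lem:8} gives $\alpha_\ell \geqslant (1 - k/\ell)\alpha_k \geqslant \alpha_k - k \alpha_\infty/\ell$, which exceeds $\overline{\alpha} - 2\epsilon$ as soon as $\ell > k \alpha_\infty/\epsilon$. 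Combined with the trivial inequality $\alpha_\ell \leqslant \overline{\alpha}$, this yields the limit.

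The hard part is the strict inequality $\overline{\alpha} > 1 + \alpha_\infty/2$. By definition of $\overline{\alpha}$, it suffices to exhibit a single positive integer $n$ and a producible stable stack $\S = (r_1,\ldots,r_h)$ with $r_h = n$ and $r_2+\ldots+r_{h-1} > (1+\alpha_\infty/2)\,n$. The strategy is to construct $\S$ iteratively by splicing, using the gluing operation of Lemma~\ref{lem:7.0}(b) combined with Lemma~\ref{lem:7}: start from a producible stable stack whose bottom two runs are some fixed pair $(a,b)$, then splice below it a producible stable stack whose two top runs are also $(a,b)$, thereby enlarging the bottom run while enriching the middle of the stack. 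At each splice, Lemma~\ref{lem:7} provides the freedom to load the middle of the spliced stack with a sum of run lengths close to $\ell \alpha_\ell$, and the ratio $r_{h-1}/r_h$ of the resulting stack must be chosen inside the regime where the expansion function $f$ of Definition~\ref{def:expansion} is close to its near-maximal values $\alpha_\infty x$.

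The main obstacle is that a naive ``extend by one bottom run'' recursion only produces ratios $(r_2+\ldots+r_{h-1})/r_h$ that converge to a golden-ratio-like constant (arising from the Fibonacci recurrence governing the bottom runs), which is strictly smaller than $1 + \alpha_\infty/2$. To push past this threshold, the construction must, before each splice, reinitialise the top of the stack with a near-optimal subconfiguration (whose existence is guaranteed by Lemma~\ref{lem:7} applied to the current value of $\alpha_\ell$), and one must then carefully verify at every iteration that stability (namely $r_1 < r_2$ and $r_1 + r_2 < r_3$) together with the matching condition $r'_1 = r_{h-1}$, $r'_2 = r_h$ required by Lemma~\ref{lem:7.0}(b) are preserved. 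Checking that the resulting sequence of stacks eventually achieves $\alpha_n > 1 + \alpha_\infty/2$ strictly---and not merely the weaker value $(1+\alpha_\infty)/2 = f(1/2)$ that a single splice from a maximally top-loaded stack would yield---is where the essential quantitative work of this proposition lies.
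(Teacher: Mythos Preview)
Your treatment of the upper bound $\overline{\alpha} \leqslant \alpha_\infty$ and of the convergence $\alpha_n \to \overline{\alpha}$ is correct and essentially matches the paper.

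For the strict inequality $\overline{\alpha} > 1 + \alpha_\infty/2$, however, you outline a general iterative splicing construction but never execute it: you identify the obstacle (a naive recursion only reaches a golden-ratio-type threshold) and the ingredients you would use (Lemmas~\ref{lem:7} and~\ref{lem:7.0}(b)), but then stop precisely where, in your own words, ``the essential quantitative work of this proposition lies.'' As written this is a plan rather than a proof, and it is not clear that the splicing scheme you describe can be made to exceed $1+\alpha_\infty/2$ without already knowing that $\alpha_k$ is large for the intermediate values of $k$ that arise --- which would be circular.

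The paper sidesteps all of this: it simply exhibits a single explicit sequence of run lengths (about a hundred integers, the largest being $360$), verifies by direct simulation that Java's \TS turns it into a specific stable stack with $r_h = 360$, and checks numerically that the resulting ratio is at least $133/40$, which exceeds $1+\alpha_\infty/2 = 2 + \sqrt{7}/2$ via the elementary inequality $53^2 = 2809 > 2800 = (20\sqrt{7})^2$. No inductive or splicing machinery is needed for this step. The kind of iterative construction you sketch is closer in spirit to what the paper does later, in Lemmas~\ref{lem:8.2}--\ref{lem:8.4}, where it is genuinely needed to push $\overline{\alpha}$ all the way to $\alpha_\infty$; for the present proposition a single explicit witness suffices and is far simpler.
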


\begin{proof}
Let $\overline{\alpha} = \sup\{\alpha_n \mid n \in \mathbb{N}^\ast\}$. Lemma~\ref{lem:4b} proves that $\overline{\alpha} \leqslant \alpha_\infty$.
Then, let $\varepsilon$ be a positive real number, and let $k$ be a positive integer such that $\alpha_k \geqslant \overline{\alpha} - \varepsilon$.
Lemma~\ref{lem:8} proves that $\liminf \alpha_n \geqslant \alpha_k \geqslant \overline{\alpha} - \varepsilon$,
and therefore $\liminf \alpha_n \geqslant \overline{\alpha}$.
This proves that $\alpha_n \to \overline{\alpha}$ when $n \to +\infty$.

Finally, it is tedious yet easy to verify that the sequence
$360$, $356$, $3$, $2$, $4$, $6$, $10$, $2$, $1$, $22$, $4$, $2$, $1$, $5$, $1$, $8$, $4$, $2$, $1$, $73$, $4$,
$2$, $5$, $7$, $2$, $16$, $3$, $2$, $4$, $6$, $21$, $4$, $2$, $22$, $4$, $2$, $1$, $5$, $8$, $3$, $2$, $79$,
$3$, $2$, $4$, $6$, $2$, $10$, $6$, $3$, $2$, $33$, $4$, $2$, $5$, $7$, $1$, $13$, $4$, $2$, $1$, $5$, $1$,
$80$, $4$, $2$, $5$, $7$, $1$, $95$, $3$, $2$, $4$, $6$, $10$, $20$, $4$, $2$, $5$, $7$, $3$, $2$, $26$, $6$,
$3$, $1$, $31$, $3$, $2$, $4$, $6$, $2$, $1$, $12$, $4$, $2$, $5$
produces the stack $(5$, $6$, $12$, $18$, $31$, $36$, $68$, $95$, $99$, $195$, $276$, $356$, $360)$. Moreover,
since $(20 \sqrt{7})^2 = 2800 < 2809 = 53^2$, it follows that $80 + 20 \sqrt{7} < 133$, i.e., that $1 + \alpha_\infty/2 = 2 + \sqrt{7}/2 < 133/40$. This proves that
\[\overline{\alpha} \geqslant \alpha_{360} \geqslant \frac{5+6+12+18+31+36+68+95+99+195+276+356}{360} = \frac{133}{40} > 1 + \alpha_\infty/2.\]
\end{proof}

\begin{lemma}\label{lem:8.2}
There exists a positive integer $N$ such that, for all integers $n \geqslant N$ and
$k = \lfloor (n-6)/(\overline{\alpha}+2) \rfloor$,
the stack $(k+1,k+,\alpha_k,n-4,n)$ exists. 
\end{lemma}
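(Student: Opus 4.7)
The plan is to construct the target stack explicitly, by composing Lemmas~\ref{lem:7} and~\ref{lem:7.0}. First, by Proposition~\ref{pro:8.1}, $\alpha_k$ tends to $\overline{\alpha}$ and $\overline{\alpha} > 1 + \alpha_\infty/2 > 3$; thus, for $n$ (and hence $k = \lfloor(n-6)/(\overline{\alpha}+2)\rfloor$) sufficiently large, $\alpha_k$ is arbitrarily close to $\overline{\alpha}$. Combined with the defining relation $k(\overline{\alpha}+2) \leqslant n-6$ and the bound $\alpha_k \leqslant \overline{\alpha}$ from Lemma~\ref{lem:4b}, this yields the numerical inequalities that will govern stability of the final stack: $k+1 < k\alpha_k$ (which uses $\overline{\alpha} > 3$) and $(k+1) + k\alpha_k < n-4$ (since $k(1+\alpha_k)+1 \leqslant k(\overline{\alpha}+2) - k + 1 \leqslant n-5-k$).

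Next, I invoke Lemma~\ref{lem:7} with $\ell = k$ to obtain a producible stable stack whose bottom run has size $k$ and whose runs above sum to $k\alpha_k$, up to an additive error of at most~$3$ (and with equality when $k\alpha_k$ is itself attained). This produces the ``small'' skeleton that will constitute the top portion of the target stack.

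The key technical step is then to attach the two runs $n-4$ and $n$ below this skeleton; this is done through Lemma~\ref{lem:7.0}(b). I build an auxiliary producible stable stack whose top two runs coincide with the bottom two runs of the stack delivered by Lemma~\ref{lem:7}, and whose bottom-most pair is $(n-4, n)$. Gluing the two stacks along this matching pair of consecutive runs yields a stack with the prescribed shape, and Lemma~\ref{lem:7.0}(a) is then used, if necessary, to trim the top to match the run of size $k+1$.

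The main obstacle is producibility: one cannot simply construct the stack by a naive push sequence starting from the empty stack, because pushing a run of size $k\alpha_k$ on top of $(n-4, n)$ immediately triggers a Case~\#4 merge (we have $k\alpha_k + (n-4) > n$), collapsing the two bottom runs and destroying the intended structure. The power of Lemma~\ref{lem:7.0}(b) is precisely that it concatenates sub-stacks that have already been certified producible in isolation, bypassing any global push simulation that would encounter such spurious merges. Once producibility is secured this way, stability reduces to the two inequalities $r_1 < r_2$ and $r_1 + r_2 < r_3$ established in the first paragraph, and the lemma follows.
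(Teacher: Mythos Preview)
Your outline correctly identifies the obstacle (a direct push onto $(n-4,n)$ triggers Case~\#4) and the overall gluing strategy via Lemma~\ref{lem:7.0}(b). But you have deferred the entire difficulty into the sentence ``I build an auxiliary producible stable stack whose top two runs coincide with the bottom two runs of the stack delivered by Lemma~\ref{lem:7}, and whose bottom-most pair is $(n-4,n)$'' without saying how. That auxiliary stack is exactly as hard to produce as the original one: pushing $k$ (or anything of that order) onto $(n-4,n)$ triggers the same Case~\#4 merge you flagged, so you are back where you started. The paper spends almost the whole proof on precisely this point: it first uses Lemma~\ref{lem:4b} and the expansion function to pin down $p_2=r_{h-1}$ in the window $(\theta k,k)$, and then builds $(p_2,k,n-4,n)$ by an explicit sequence of small runs ($n,n-4,3,2,4,6,10,\ldots$ with a case split on $k'/2^\ell$) so that the run of size $k$ arises from a cascade of \#2 merges rather than from a single push. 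None of this is supplied by Lemmas~\ref{lem:7} or~\ref{lem:7.0}.

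There is also a mismatch at the end. Gluing the Lemma~\ref{lem:7} stack onto your auxiliary stack yields $(r_1,\ldots,r_{h-1},k,n-4,n)$, which is not the target $(k+1,k(1+\alpha_k),n-4,n)$. The missing move is to push one more run of size $k+1$, which triggers a chain of \#2 merges collapsing $r_1,\ldots,r_{h-1},k$ into a single run of size $k(1+\alpha_k)$; that is what the paper does. Your appeal to Lemma~\ref{lem:7.0}(a) cannot do this job: that lemma lets you discard runs from the \emph{bottom} of a producible stack, not trim or replace runs at the top, and it certainly cannot conjure the new run $k+1$.
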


\begin{proof}
Proposition~\ref{pro:8.1} proves that there exists a positive real number $\nu > 1+\alpha_\infty/2$
and a positive integer $L \geqslant 256$ such that $\alpha_\ell \geqslant \nu$ for all $\ell \geqslant L$.
Then, we set $N = \lceil (\overline{\alpha}+2)L \rceil + 6$.
Consider now some integer $n \geqslant N$, and let $k = \lfloor (n-6)/(\overline{\alpha}+2) \rfloor$. By construction, we have $k \geqslant L$,
and therefore $\alpha_k \geqslant \nu$.

Let $p_1,\ldots,p_m$ be a sequence of runs that produces a stack $(r_1,\ldots,r_h)$ such that
$r_h = k$ and $r_1+\ldots+r_{h-1} = k \alpha_k$. Lemma~\ref{lem:4b} proves that $\alpha_k \leqslant f(r_{h-1}/k)$,
where $f$ is the expansion function of Definition~\ref{def:expansion}.
Since $\alpha_k \geqslant \nu > 1+\alpha_\infty/2 = f(1/2)$, it follows that $k > r_{h-1} > \theta k$.
Assuming that $m$ is minimal, Lemma~\ref{lem:6.6} proves that $p_1 = r_h = k$ and that $p_2 = r_{h-1}$.

Now, let $k' = \lfloor k / 2 \rfloor + 1$, and let $\ell$ be the largest integer such that $2^{\ell+4} \leqslant k'$.
Since $k \geqslant L \geqslant 256$, we know that $k' \geqslant 128$, and therefore that $\ell \geqslant 3$.
Observe also that, since $\theta = \alpha_\infty/(2\alpha_\infty-1) > 11/20$ and $k' \geqslant 20$, we have
$r_{m-1} > \theta k \geqslant \lfloor k / 2 \rfloor + k/20 \geqslant k'$.
We build a stack of runs $p_2,k,n-4,n$ by distinguishing several cases, according to the value of~$k' / 2^\ell$.

\begin{itemize} 
 \item If $16 \times 2^\ell \leqslant k' \leqslant 24 \times 2^\ell + 1$, let $x$ be
 the smallest integer such that $x \geqslant 2$ and $k' < 2(9 \times 2^\ell + x + 1)$.
 Since $k' \leqslant 24 \times 2^\ell + 1$, we know that $x \leqslant 3 \times 2^\ell$, and that $x = 2$ if $k \leqslant 18 \times 2^\ell + 1$.
 Therefore, the sequence of runs $(n,n-4,3,2,4,6,10,3 \times 8,3 \times 16,\ldots,3 \times 2^\ell,3 \times 2^\ell+x)$ produces the stack
 $(3 \times 2^\ell+x,3 \times 2^{\ell+1}+1,n-4,n)$. Moreover, observe that
 $(3 \times 2^{\ell+1}+1) + (9 \times 2^\ell + x + 1) = 15 \times 2^\ell + 4 < k'$ if $16 \times 2^\ell \leqslant k' \leqslant 18 \times 2^\ell + 1$, and that
 $(3 \times 2^{\ell+1}+1) + (9 \times 2^\ell + x + 1) \leqslant 18 \times 2^\ell + 1 < k'$ if $18 \times 2^\ell + 2 \leqslant k'$.
 Since, in both cases, we also have $k' < 2(9 \times 2^\ell + x + 1)$,
 it follows that $3 \times 2^{\ell+1}+1 < k' - (9 \times 2^\ell + x + 1) < 9 \times 2^\ell + x + 1$.
 
 Consequently, pushing an additional run of size $k'-(9 \times 2^\ell + x + 1)$
 produces the stack $k'-(9 \times 2^\ell + x + 1),9 \times 2^\ell + x + 1,n-4,n$.
 Finally, the inequalities $2k' > k$, $k - k' \geqslant k/2-1 \geqslant k' - 2 \geqslant 18 \times 2^\ell$ and $x \leqslant 3 \times 2^\ell$ prove that
 \[9 \times 2^\ell + x + 1 \leqslant 12 \times 2^\ell + 1 < 16 \times 2^\ell - 2\leqslant k - k' < k'.\]
 Recalling that $k > p_2 > k'$, it follows that pushing additional runs of sizes $k - k'$ and $p_2$
 produces the stack $(p_2,k,n-4,n)$.
 
 \item If $24 \times 2^\ell + 2 \leqslant k' < 32 \times 2^\ell$, let $x$
 be the smallest integer such that $x \geqslant 2$ and $k' < 2(12 \times 2^\ell + x + 1)$.
 Since $k' < 32 \times 2^\ell$, we know that $x+1 \leqslant 2^{\ell+2}$.
 Therefore, the sequence of runs $n,n-4,3,2,4,8,16,32,\ldots,2^{\ell+2},2^{\ell+2}+x$ produces the stack
 $(2^{\ell+2}+x,2^{\ell+3}+1,n-4,n)$. Moreover, the inequalities
 \[(2^{\ell+3}+1) + (3 \times 2^{\ell+2} + x + 1) \leqslant 6 \times 2^{\ell+2} + 1 < k' < 2(3 \times 2^{\ell+2} + x + 1)\]
 prove that $2^{\ell+3}+1 < k'-(3 \times 2^{\ell+2}+x+1) < 3 \times 2^{\ell+2}+x+1$.
 
 Consequently, pushing an additional run of size $k'-(3 \times 2^{\ell+2}+x+1)$
 produces the stack $(k'-(3 \times 2^{\ell+2}+x+1),3 \times 2^{\ell+2}+x+1,n-4,n)$.
 Finally, the inequalities $2k' > k$, $k - k' \geqslant k/2-1 \geqslant k'-2 \geqslant 6 \times 2^{\ell+2}$ and $x+1 \leqslant 2^{\ell+2}$ prove that
 \[3 \times 2^{\ell+2}+x+1 \leqslant 4 \times 2^{\ell+2} < 6 \times 2^{\ell+2} \leqslant k - k' < k'.\]
 Recalling once again that $k > p_2 > k'$, it follows that pushing additional runs of sizes $k - k'$ and $p_2$
 produces the stack $(p_2,k,n-4,n)$ in this case too.
\end{itemize}

Finally, after having obtained the stack $(p_2,k,n-4,n)$, let us add the sequence of runs $p_3,\ldots,p_m,k+1$.
Since $k(1+\alpha_k) + (k+1) \leqslant k(\overline{\alpha}+2)+1 \leqslant n-6+1 < n-4$,
adding these runs produces the stack $(k+1,k + k \alpha_k,n-4,n)$, which completes the proof.
\end{proof}

\begin{lemma}\label{lem:8.3}
For all integers $n \geqslant N$ and
$k = \lfloor (n-6)/(\overline{\alpha}+2) \rfloor$
there exists a stack $(k+3, k(\alpha_k-1)-7-x-y, k\alpha_k-3-x, k(1+\alpha_k), n-4, n)$
for some integers $x,y \in \{0,1,2,3\}$. 
\end{lemma}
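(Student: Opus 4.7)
The plan is to extend the run-sequence construction of Lemma~\ref{lem:8.2}, which builds the 4-entry stack $(k+1, k(1+\alpha_k), n-4, n)$ via an explicit doubling cascade culminating in a trigger push of $k+1$. That cascade first reaches the intermediate stable stack $(p_2, k, n-4, n)$, and then the pushes $p_3, \ldots, p_m, k+1$ collapse all the top runs into the single entry $k(1+\alpha_k)$ sitting above $(n-4, n)$. For Lemma~\ref{lem:8.3}, I propose replacing the single trigger~$k+1$ with a longer tail of pushes that, after producing the $k(1+\alpha_k)$ entry, adds two further doubling cascades on top and ends with a small top run of size $k+3$.

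Concretely, after reaching a state where $k(1+\alpha_k)$ is the second-from-top entry (above $(n-4, n)$), I would run a scaled copy of the Lemma~\ref{lem:8.2} cascade to build, above $k(1+\alpha_k)$, a new large entry of size approximately $k\alpha_k$; integer rounding in the doubling steps forces the exact value to equal $k\alpha_k - 3 - x$ for some $x \in \{0,1,2,3\}$, with the slack absorbed by one or two applications of Lemma~\ref{lem:6.9}, which lets us reduce the top-part sum by $1$ or $4$ without altering the bottom two entries. I would then run another such cascade above this new entry to produce a third large entry of size $k(\alpha_k-1) - 7 - x - y$, $y \in \{0,1,2,3\}$, and finally push a small run of size $k+3$, yielding the target stack.

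For this to make sense we must verify that $(n-4, n)$ is never touched by any merge during the construction. This rests on two facts: Java's \TS lacks merge case~\#5, so obstruction indices at depth $\geq 4$ are permitted (indeed, in the target stack, position~$5$ is an obstruction since $r_4 + r_3 \geq n-4$, and so is position~$6$); and the choice $k \leq (n-6)/(\overline{\alpha}+2)$ together with the sum bound of Lemma~\ref{lem:4b} ensures that at each intermediate stable stack, the top two entries are small enough (below $4$ in effective contribution against~$n$, after the large entry layers are in place) to prevent any merge from propagating down past $k(1+\alpha_k)$.

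The main obstacle is the delicate coordination of the three successive cascades: each must collapse exactly at the intended depth, neither too early (leaving the layer unfinished) nor too late (destroying a previously built layer). In Lemma~\ref{lem:8.2} a single carefully tuned trigger run~$k+1$ achieved this, using the invariants of Lemma~\ref{lem:inv-K} to control which runs were swept into the cascade. Here, three such triggers must be composed, and the bookkeeping of sums $r_2 + \ldots + r_{i-1} \leq \alpha_\infty r_i$ from Lemma~\ref{lem:4b} is the key tool for ensuring that at every point, the top of the stack admits the next scheduled push without spurious merges below. The slack variables $x, y$ then simply record the unavoidable integer rounding, each bounded by~$3$ because Lemma~\ref{lem:6.9} produces residues of at most $1 + 4 \cdot 0 = 1$ or $1 + 4 \cdot 1 = 5$ in a single reduction step, and at most one such reduction per new layer suffices.
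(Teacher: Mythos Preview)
Your high-level picture is right: starting from the stack $(k+1,k(1+\alpha_k),n-4,n)$ of Lemma~\ref{lem:8.2}, one builds two more layers by pushing further runs and triggering collapses. But you are reaching for the wrong tools, and this leaves a genuine gap.

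The paper does \emph{not} rerun ``scaled copies of the Lemma~\ref{lem:8.2} cascade''. Instead it uses Lemma~\ref{lem:7} as a black box: for any target sum $s \leqslant \ell\alpha_\ell$ there exists a producible stable stack with bottom run $\ell$ and top-sum in $[s-3,s]$. This is applied first with $\ell=k+1$ and target $k(\alpha_k-1)-4$, giving a stack $(r_1,\ldots,r_{h-1},k+1)$ with $r_1+\ldots+r_{h-1}=k(\alpha_k-1)-4-x$, $x\in\{0,1,2,3\}$; the concatenation Lemma~\ref{lem:7.0} then glues this on top of $(k+1,k(1+\alpha_k),n-4,n)$, and a single push of $k+2$ collapses everything above $k(1+\alpha_k)$ into $k\alpha_k-3-x$. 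One more pass of Lemma~\ref{lem:7} with $\ell=k+2$, another concatenation, and a push of $k+3$ finish the job. The slacks $x,y\in\{0,1,2,3\}$ are exactly the ``within~3'' of Lemma~\ref{lem:7}; your account of them via Lemma~\ref{lem:6.9} (and the arithmetic ``$1+4\cdot 0=1$ or $1+4\cdot 1=5$'') is not correct---Lemma~\ref{lem:6.9} is used inside the proof of Lemma~\ref{lem:7}, not here directly.

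Two further points. First, your proposed verification that $(n-4,n)$ is never touched, via obstruction indices and Lemma~\ref{lem:4b}, is over-engineered: in the paper one simply checks, at each of the two trigger pushes, that the sum of everything above $k(1+\alpha_k)$ stays strictly below $k(1+\alpha_k)$---a one-line arithmetic bound. Second, to apply Lemma~\ref{lem:7} one needs $k(\alpha_k-1)-4 \leqslant (k+1)\alpha_{k+1}$, which the paper gets from Lemma~\ref{lem:8} (showing $k\alpha_k \geqslant (k+1)\alpha_{k+1}$); this step is absent from your sketch.
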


\begin{proof}
Consider some integer $n \geqslant N$, and let $k = \lfloor (n-6)/(\alpha_\infty+2) \rfloor$.
By Lemma~\ref{lem:8.2}, there exists a stack $(k+1, k+k\alpha_k,n-4,n)$.

Lemma~\ref{lem:8} proves then that $k \alpha_k \geqslant (k+1) \alpha_{k+1}$. Therefore, Lemma~\ref{lem:7} proves that
there exists a stack $(r_1,\ldots,r_h)$ such that $r_h = k+1$ and $r_1+\ldots+r_{h-1} = k (\alpha_k-1)-4-x$
for some integer $x \in \{0,1,2,3\}$.
By construction, we have $r_1 < r_2 < \ldots < r_h$, hence $r_{h-1} + r_h < 2(k+1) < k(1+\alpha_k)$.
Consequently, there also exists a stack $(r_1,r_2,\ldots,r_{h-1},k+1, k(1+\alpha_k),n-4,n)$.
Then, $k+2 + r_1+\ldots+r_h \leqslant k+2 + k (\alpha_k-1) - 4 + k+1 = k(1+\alpha_k)-1 < k(1+\alpha_k)$,
and therefore pushing an additional run of size $k+2$ produces a stack $(k+2, k\alpha_k-3-x, k(1+\alpha_k), n-4, n)$.

Once again, there exists a stack $(r'_1,\ldots,r'_{h'})$ such that $r'_{h'} = k+2$ and $r'_1+\ldots+r'_{h'-1} = k (\alpha_k-2)-9-x-y$
for some integer $y \in \{0,1,2,3\}$.
By construction, we have $r'_1 < r'_2 < \ldots < r'_{h'}$, hence $r'_{h'-1} + r'_{h'} < 2(k+2) < k\alpha_k-3-x$,
and therefore there exists a stack $(r'_1,r'_2,\ldots,r'_{h'-1},k+2, k\alpha_k-3-x, k(1+\alpha_k), n-4, n)$.
Then, $k+3 + r'_1+\ldots+r'_{h'} \leqslant k+3 + k (\alpha_k-2)-9-x-y + k + 2 = k\alpha_k-4-x-y < k\alpha_k-3-x$,
and therefore pushing an additional run of size $k+3$ produces a stack $(k+3, k(\alpha_k-1)-7-x-y, k\alpha_k-3-x, k(1+\alpha_k), n-4, n)$.
\end{proof}

We introduce now a variant of the real numbers $\alpha_n$ and $\beta_n$,
adapted to our construction.

\begin{definition}\label{def:alphabeta2}
Let $n$ be a positive integer.
We denote by $\gamma_n$ the smallest real number $m$ such that, in every stack $\S = (r_1,\ldots,r_h)$
such that $h \geqslant 2$, $r_h = n$ and $r_{h-1} = n-4$, we have $r_1+\ldots+r_{h-1} \leqslant m \times n$.
If no such real number exists, we simply set $\gamma_n = +\infty$.
\end{definition}

\begin{lemma}\label{lem:8.4}
Let $\underline{\gamma} = \liminf \gamma_n$.
We have $\underline{\gamma} \geqslant (\overline{\alpha} \underline{\gamma} + 9 \overline{\alpha} - \underline{\gamma} + 3)/(2\overline{\alpha}+4)$.
\end{lemma}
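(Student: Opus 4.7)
For each sufficiently large $n$, my plan is to construct a producible stack with bottom runs $(n-4, n)$ whose total sum $S$ satisfies $S \geq \gamma_m m + C n - O(1)$, where $C := (9\overline{\alpha}+3)/(2(\overline{\alpha}+2))$ and $m = m(n)$ is an integer with $m \to \infty$ and $m/n \to \mu := (\overline{\alpha}-1)/(2(\overline{\alpha}+2))$. Dividing by $n$ and taking $\liminf$ as $n \to \infty$, using that one may choose $m$ along a subsequence so that $\gamma_m \to \underline{\gamma}$, will yield $\underline{\gamma} \geq \mu\,\underline{\gamma} + C$, which rearranges to the claimed inequality.

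The core construction will start from the 6-stack of Lemma~\ref{lem:8.3},
\[\S_6 = \bigl(k+3,\ k(\alpha_k-1) - 7 - x - y,\ k\alpha_k - 3 - x,\ k(1+\alpha_k),\ n-4,\ n\bigr),\]
with $k = \lfloor (n-6)/(\overline{\alpha}+2) \rfloor$ and some $x,y \in \{0,1,2,3\}$. Above $\S_6$, I will graft an auxiliary structure whose core is a $\gamma_m$-optimal substack (its existence follows from the definition of $\gamma_m$ combined with a Lemma~\ref{lem:7}-type constructive argument). The grafting uses Lemma~\ref{lem:7.0}(b), which requires matching the bottom two runs of the extension with the top two runs of $\S_6$; a ``filler'' block of runs, supplied by Lemma~\ref{lem:7} at the appropriate scale, will bridge the $\gamma_m$-substack to this matching pair.

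The parameter $m$ will be determined by balancing contributions: the $\gamma_m$-substack contributes $\gamma_m m$, while the filler block and the five lowermost runs of $\S_6$ together contribute approximately $C n$. The specific ratio $\mu = (\overline{\alpha}-1)/(2(\overline{\alpha}+2))$ emerges from choosing $m$ maximally subject to producibility constraints at the graft point: $m$ cannot exceed roughly $(k+3)\alpha_{k+3}/2$ without disrupting the invariants of the combined stack. Once $\mu$ is fixed, one checks algebraically that the contributions combine to give precisely $\mu\,\underline{\gamma} + C$ in the limit; this is where the constants $9\overline{\alpha}+3$ and $2\overline{\alpha}+4$ in the statement arise.

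The main obstacle will be verifying producibility of the combined stack under Java's TimSort, whose merge rules do not enforce the strong invariant~\eqref{eq:inv1}. The push sequence must be engineered, in the spirit of the proofs of Lemmas~\ref{lem:8.2} and~\ref{lem:8.3}, so that the induced merge cascades yield precisely the target stack rather than collapsing intermediate runs. I expect this to require a detailed case analysis on residues modulo small constants (to absorb the $\pm O(1)$ slack coming from Lemmas~\ref{lem:7} and~\ref{lem:8.3}), together with careful tracking of how the filler block interacts with the top of $\S_6$ via obstruction indices.
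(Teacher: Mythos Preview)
Your approach is the paper's: start from the six-run stack of Lemma~\ref{lem:8.3}, graft a $\gamma_m$-optimal substack on top, sum the contributions, and pass to the $\liminf$. Your target values $\mu = (\overline{\alpha}-1)/(2\overline{\alpha}+4)$ and $C = (9\overline{\alpha}+3)/(2\overline{\alpha}+4)$ are exactly right.

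There is, however, an internal inconsistency you should repair. You write that ``$m$ cannot exceed roughly $(k+3)\alpha_{k+3}/2$'', but $(k+3)\alpha_{k+3}/2 \approx k\overline{\alpha}/2$, which would give $m/n \to \overline{\alpha}/(2\overline{\alpha}+4)$, not your stated $\mu$. The binding constraint is not the top run $k+3$ of $\S_6$ but its \emph{second} run $k(\alpha_k-1)-7-x-y$: for the glued stack to be stable the new top pair $(m-4,m)$ must satisfy $(m-4)+m < k(\alpha_k-1)-7-x-y$, forcing $m \lesssim k(\alpha_k-1)/2$. That is what yields $\mu = (\overline{\alpha}-1)/(2\overline{\alpha}+4)$.

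You also over-anticipate the difficulty of the graft. The paper uses neither a Lemma~\ref{lem:7} filler block nor any residue case analysis. Setting $k' = \lfloor (k(\alpha_k-1)-7-x-y)/2 \rfloor$, one pushes just two runs, of sizes $k'-(k+3)$ and $k'-4$, onto $\S_6$; the first push leaves the stack stable and the second triggers a single $\#2$ merge, producing the top pair $(k'-4,k')$. Since any minimal $\gamma_{k'}$-witness $p_1,\ldots,p_m$ has $p_1=k'$ and $p_2=k'-4$ by Lemma~\ref{lem:6.6}, pushing $p_3,\ldots,p_m$ directly completes the stack (this is precisely the mechanism underlying Lemma~\ref{lem:7.0}(b)). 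The bound on $n\gamma_n$ is then a straight chain of inequalities; no engineering of merge cascades or obstruction indices is needed. Finally, note that $m=k'$ is \emph{determined} by $n$, so the phrase ``choose $m$ along a subsequence'' is misleading: what you actually use is that $\gamma_{k'} \geqslant \underline{\gamma}-\varepsilon$ for all sufficiently large $k'$, which is just the definition of $\liminf$.
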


\begin{proof}
Let $\varepsilon$ be a positive real number, and let $N_\varepsilon \geqslant N$ be an integer such that $\alpha_\ell \geqslant \overline{\alpha}-\varepsilon$ and
$\gamma_\ell \geqslant \underline{\gamma}-\varepsilon$ for all $\ell \geqslant \lfloor (N-6)/(\alpha_\infty+2) \rfloor$.
Since $\overline{\alpha} > 3$, and up to increasing the value of $N_\varepsilon$,
we may further assume that $\ell(\alpha_\ell-3) \geqslant 30$ for every such integers $\ell$.

Then, consider some integer $n \geqslant N_\varepsilon$, and let $k = \lfloor (n-6)/(\overline{\alpha}+2) \rfloor$.
By Lemma~\ref{lem:8.3}, there exists a stack $(k+3, k(\alpha_k-1)-7-x-y, k\alpha_k-3-x, k(1+\alpha_k), n-4, n)$
for some integers $x,y \in \{0,1,2,3\}$. Let also $k' = \lfloor (k(\alpha_k-1)-7-x-y) / 2 \rfloor$.
Since $k(\alpha_k-3) \geqslant 30$,
it follows that $2 (k' - 4) \geqslant k(\alpha_k-1)-7-x-y - 2 - 8 \geqslant k(\alpha_k-1) - 23 \geqslant 2k+7 > 2(k+3)$.
Similarly, and since $\alpha_k \leqslant \alpha_\infty < 5$, we have
$k' \leqslant k(\alpha_k-1) / 2 < 2k < 2(k+3)$.
Consequently, pushing additional runs of sizes $k'-(k+3)$ and $k'-4$ produces the stack $(k'-4,k',k(\alpha_k-1)-7-x-y, k\alpha_k-3-x, k(1+\alpha_k), n-4, n)$.

Finally, by definition of $\gamma_n$, there exists a sequence of runs $p_1,\ldots,p_m$
that produces a stack $(r_1,\ldots,r_h)$ such that $p_1 = r_h = k'$, $p_2 = r_{h-1} = k'-4$ and
$r_1+\ldots+r_{h-1} = k' \gamma_{k'}$.
Hence, pushing the runs $p_3,\ldots,p_m$ produces the stack
$(r_1,\ldots,r_{h-1},k',k(\alpha_k-1)-7-x-y, k\alpha_k-3-x, k(1+\alpha_k), n-4, n)$.

Then, recall that that $\underline{\gamma} \leqslant \overline{\alpha}$, that $3 < \overline{\alpha} \leqslant \alpha_\infty < 5$ and that $0 \leqslant x,y \leqslant 3$.
It follows that $k \geqslant (n-6)/(\overline{\alpha}+2)-1 \geqslant n/(\overline{\alpha}+2) - 3$ and that
$k' \geqslant (k(\alpha_k-1)-7-x-y) / 2 - 1 \geqslant k(\alpha_k-1)/2 - 8$.
This proves that
\begin{align*}
n \gamma_n & \geqslant r_1+\ldots+r_{h-1} + k' + k(\alpha_k-1)-7-x-y + k\alpha_k-3-x + k(1+\alpha_k) + n-4 \\
& \geqslant k' \gamma_{k'} + k' + k(\alpha_k-1)-13 + k\alpha_k-6+k(1+\alpha_k) + n-4 \\
& \geqslant k' (1 + \underline{\gamma}-\varepsilon) + 3 k \alpha_k + n - 23 \\
& \geqslant (k(\alpha_k-1)/2-8) (1 + \underline{\gamma}-\varepsilon) + 3 k \alpha_k + n - 23 \\
& \geqslant k (3\alpha_k + (1 + \underline{\gamma}-\varepsilon)(\alpha_k-1)/2) + n - 23 - 8 \times 6 \\
\end{align*}

\begin{align*}
n \gamma_n
& \geqslant (n/(\overline{\alpha}+2) - 3) (3\overline{\alpha}-3\varepsilon + (1 + \underline{\gamma}-\varepsilon)(\overline{\alpha}-\varepsilon-1)/2) + n - 71 \\
& \geqslant \left(6 \overline{\alpha}-6\varepsilon + (1 + \underline{\gamma}-\varepsilon)(\overline{\alpha}-\varepsilon-1) + 2 \overline{\alpha}+4 \right) n/(2\overline{\alpha}+4) - \\
& \quad\quad 3 (3\overline{\alpha} + (1 + \underline{\gamma})(\overline{\alpha}-1)/2) - 71 \\
& \geqslant \left(\overline{\alpha} \underline{\gamma} + 9 \overline{\alpha}-\underline{\gamma} + 3 - (\overline{\alpha} + \underline{\gamma} -\varepsilon)\varepsilon \right) n/(2\overline{\alpha}+4) - 152.
\end{align*}

Hence, by choosing $n$ arbitrarily large, then $\varepsilon$ arbitrarily small, Lemma~\ref{lem:8.4} follows.
\end{proof}

From Lemma~\ref{lem:8.4}, we derive the asymptotic evaluation of the sequence $(\alpha_n)$,
as announced at the beginning of Section~\ref{sec:good2}.

\begin{proposition}\label{pro:8.5}
We have $\overline{\alpha} = \alpha_\infty = 2 + \sqrt{7}$.
\end{proposition}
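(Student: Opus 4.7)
The plan is to combine the inequality of Lemma~\ref{lem:8.4} with the trivial bound $\underline{\gamma} \leqslant \overline{\alpha}$ in order to extract a quadratic inequality satisfied by $\overline{\alpha}$, and then solve it. Since Lemma~\ref{lem:4b} already provides the upper bound $\overline{\alpha} \leqslant \alpha_\infty$, all that remains is the matching lower bound $\overline{\alpha} \geqslant 2 + \sqrt{7}$.

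First I would observe that $\gamma_n \leqslant \alpha_n$ for every $n \geqslant 1$, directly from Definitions~\ref{def:alphabeta} and~\ref{def:alphabeta2}: indeed, the set of stacks considered to define $\gamma_n$ is a subset of those considered for $\alpha_n$. By Proposition~\ref{pro:8.1}, the sequence $(\alpha_n)$ converges to $\overline{\alpha}$, hence $\underline{\gamma} = \liminf \gamma_n \leqslant \overline{\alpha}$.

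Next, I would rearrange the inequality of Lemma~\ref{lem:8.4}. Multiplying by $2\overline{\alpha}+4$ and grouping the terms involving $\underline{\gamma}$ on the left, one gets $\underline{\gamma}(\overline{\alpha}+5) \geqslant 9\overline{\alpha}+3$, so $\underline{\gamma} \geqslant (9\overline{\alpha}+3)/(\overline{\alpha}+5)$. Combining this with $\underline{\gamma} \leqslant \overline{\alpha}$ yields $\overline{\alpha}(\overline{\alpha}+5) \geqslant 9\overline{\alpha}+3$, that is,
\[\overline{\alpha}^2 - 4\,\overline{\alpha} - 3 \geqslant 0.\]
Since $\overline{\alpha} > 0$, the only admissible root of this quadratic is $\overline{\alpha} \geqslant 2 + \sqrt{7}$, which together with Lemma~\ref{lem:4b} gives $\overline{\alpha} = \alpha_\infty$.

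The proof is essentially mechanical once Lemma~\ref{lem:8.4} is in hand, so no serious obstacle is expected at this step; all the difficulty of the second part of Theorem~\ref{thm:good-size} is concentrated in the constructions of Lemmas~\ref{lem:8.2}, \ref{lem:8.3} and~\ref{lem:8.4}. The only subtlety to verify carefully is that $\underline{\gamma}$ is finite (so that we may freely manipulate the inequality), which follows from $\underline{\gamma} \leqslant \overline{\alpha} \leqslant \alpha_\infty < +\infty$.
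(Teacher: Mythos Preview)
Your approach and algebra match the paper's proof exactly: rearrange Lemma~\ref{lem:8.4} to obtain $\underline{\gamma}(\overline{\alpha}+5) \geqslant 9\overline{\alpha}+3$, combine with $\underline{\gamma} \leqslant \overline{\alpha}$ to get $\overline{\alpha}^2 - 4\overline{\alpha} - 3 \geqslant 0$, and conclude using the upper bound from Lemma~\ref{lem:4b}.

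There is, however, a slip in your justification of $\gamma_n \leqslant \alpha_n$. The ``subset of stacks'' argument does not suffice, because the two definitions bound \emph{different} sums: by Definition~\ref{def:alphabeta}, $\alpha_n$ controls $r_2+\ldots+r_{i-1}$, whereas by Definition~\ref{def:alphabeta2}, $\gamma_n$ controls $r_1+\ldots+r_{h-1}$, which includes the extra term $r_1$. Restricting to a subset of stacks while simultaneously enlarging the quantity being bounded gives no inequality either way. (The paper also asserts $\overline{\alpha} \geqslant \underline{\gamma}$ without an explicit argument, so the omission is shared; but the reason you give is not correct.) A clean fix is the doubling trick already used in the proof of Lemma~\ref{lem:inv-K}: if $(r_1,\ldots,r_h)$ is a stable producible stack with $r_h=n$ realising $\gamma_n$, then $(1,2r_1,\ldots,2r_h)$ is again stable and producible, and looking at the position of $2r_h=2n$ one reads off $\alpha_{2n}\cdot 2n \geqslant 2r_1+\ldots+2r_{h-1} = 2n\gamma_n$. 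Hence $\alpha_{2n}\geqslant\gamma_n$ for every $n$, and therefore $\overline{\alpha}\geqslant\underline{\gamma}$ as needed.
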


\begin{proof}
Lemma~\ref{lem:8.4} states that $\underline{\gamma} \geqslant (\overline{\alpha} \underline{\gamma} + 9 \overline{\alpha} - \underline{\gamma} + 3)/(2\overline{\alpha}+4)$ or, equivalently,
that $\underline{\gamma} \geqslant (9\overline{\alpha}+3)/(\overline{\alpha}+5)$.
Since $\overline{\alpha} \geqslant \underline{\gamma}$, it follows that
$\overline{\alpha} \geqslant (9\overline{\alpha}+3)/(\overline{\alpha}+5)$, i.e., that
$\overline{\alpha} \geqslant 2 + \sqrt{7}$ or that $\overline{\alpha} \leqslant 2 - \sqrt{7} < 0$.
The latter case is obviously impossible, hence $\overline{\alpha} = \alpha_\infty = 2 + \sqrt{7}$.
\end{proof}

Finally, we may prove that the constant $\Delta$ of Theorem~\ref{thm:good-size} is optimal, as a consequence of the following result.

\begin{lemma}\label{lem:9.3}
For all real numbers $\Lambda > \Delta$, there exists a positive real number $K_{\Lambda}$ such that, for all $h \geqslant 1$,
there is a stack $(r_1,\ldots,r_h)$ for which $r_h \leqslant K_{\Lambda} \Lambda^h$.
\end{lemma}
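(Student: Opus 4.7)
For each $\Lambda > \Delta$, the goal is to exhibit, for every height $h$, a producible stable stack of height $h$ whose bottom run $r_h$ is at most $K_\Lambda \Lambda^h$. To match the upper bound of Theorem~\ref{thm:good-size}, the construction should saturate the tight regime isolated in the proof of Lemma~\ref{lem:next-growth}: blocks of $j-\ell = 3$ consecutive obstruction indices, separated by strong growth indices, so that the amortized growth factor per position is exactly $\Delta$. Such configurations are far from automatic since producibility, studied in Lemmas~\ref{lem:6.5}--\ref{lem:6.9}, severely constrains the allowable run sequences.

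\textbf{Starting point.} The natural building block is Lemma~\ref{lem:8.3}, which for every large enough $n$ produces a height-6 stack $(k+3,\, k(\alpha_k-1)-7-x-y,\, k\alpha_k-3-x,\, k(1+\alpha_k),\, n-4,\, n)$ with $k\approx n/(\alpha_\infty+2)$. Proposition~\ref{pro:8.5} ensures $\alpha_k \to \alpha_\infty$, so the key internal ratio $r_2/r_1$ of this gadget tends to $\alpha_\infty-1 = 1+\sqrt{7} = \Delta^5$, which is precisely the $\Delta^5$-inflation per 5-position block that the optimality statement demands.

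\textbf{Iteration.} The plan is to assemble taller stacks by repeatedly splicing such gadgets onto the bottom, via the concatenation rule of Lemma~\ref{lem:7.0}(b). A direct iteration with Lemma~\ref{lem:8.3} alone is doomed: its output has $r_h/r_{h-1} \approx 1$, whereas attaching another Lemma~\ref{lem:8.3}-gadget requires the current bottom pair $(r_{h-1},r_h)$ to agree with the top pair $(k'+3,\,k'(\alpha_{k'}-1)-O(1))$ of the next gadget, i.e.\ to have ratio $\approx \Delta^5$. One must therefore first establish, in the spirit of Lemmas~\ref{lem:7}, \ref{lem:8.2} and~\ref{lem:8.3}, a more flexible family of gadgets in which \emph{both} the top pair and the bottom pair can be prescribed, within narrow tolerances, with ratio close to $\Delta^5$. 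The existence of such gadgets should again rest on Proposition~\ref{pro:8.5} applied at every scale of the recursion. Given this family, an induction on the number $j$ of splicings yields a producible stable stack of height $h = h_0 + c\,j$ whose bottom is bounded by $C\cdot\Delta^{cj}$; the slack $\Lambda > \Delta$ absorbs the constant overhead, and intermediate heights are reached by starting the recursion from bases of the appropriate residue class modulo~$c$.

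\textbf{Main obstacle.} The hard part will be designing and verifying the compatible family of flexible gadgets. The integer wobbles $x,y \in \{0,1,2,3\}$ in Lemma~\ref{lem:8.3} and the floor in $k = \lfloor (n-6)/(\alpha_\infty+2) \rfloor$ inject $O(1)$ errors at every splicing; these must remain simultaneously (i) small enough to preserve the exact matching $r_{h-1}=r'_1$, $r_h=r'_2$ required by Lemma~\ref{lem:7.0}(b), and (ii) multiplicatively negligible over the iterations, lest they snowball into a prefactor destroying the $\Lambda^h$ bound. This is exactly where the ``much care'' flagged in the proof sketch of Theorem~\ref{thm:good-size} enters: the construction essentially reverse-engineers the extremal case $j-\ell=3$ of Lemma~\ref{lem:next-growth}, building the stack so that every $c$-th position is a strong growth index preceded by three consecutive obstruction indices, with sizes tuned to saturate the numerical equality $\Delta^5 = 2(\alpha_\infty+2)/(\alpha_\infty-1)$ that underlies the lower bound. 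Once the compatible family is in hand, the bound $r_h \leq K_\Lambda \Lambda^h$ follows by a direct induction on~$j$.
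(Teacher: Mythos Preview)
Your high-level strategy---iterate a gadget and splice via Lemma~\ref{lem:7.0}(b)---is exactly right, but you have overlooked that the iterable gadget is already built in the paper, and you have misidentified the shape it needs.  Inside the proof of Lemma~\ref{lem:8.4}, the Lemma~\ref{lem:8.3} stack is extended by two further runs, of sizes $k'-(k+3)$ and $k'-4$ with $k'=\lfloor (k(\alpha_k-1)-7-x-y)/2\rfloor$, yielding the height-$7$ stack
\[
(k'-4,\;k',\;k(\alpha_k-1)-7-x-y,\;k\alpha_k-3-x,\;k(1+\alpha_k),\;n-4,\;n).
\]
The point is that \emph{both} the top pair $(k'-4,k')$ and the bottom pair $(n-4,n)$ have the same normalised form ``$(m-4,m)$'', i.e.\ ratio $\approx 1$, \emph{not} ratio $\approx\Delta^5$ as you wrote.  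This is precisely why the quantity $\gamma_n$ was introduced.  Setting $n_1=k'$, one checks in a single inequality that $n_0\leqslant\Lambda^5 n_1$ whenever $n_0$ is large enough (the slack $\Lambda>\Delta$ swallows the floor and the $x,y\in\{0,1,2,3\}$ in one shot), and then Lemma~\ref{lem:7.0}(b) lets you iterate verbatim: the stack of height $5\ell+2$ has bottom run $n_0\leqslant\Lambda^{5\ell}N_\varepsilon$.

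So the ``main obstacle'' you describe is largely a non-issue.  No new flexible family is needed; the matching problem was solved one lemma earlier by normalising to $(m-4,m)$; and the integer wobbles do not snowball, because the estimate $n_0\leqslant\Lambda^5 n_1$ is established once and then simply multiplied $\ell$ times.  Your plan is not wrong in spirit, but it proposes to redo, in a harder direction (matching at ratio $\Delta^5$), work the paper has already done at ratio $1$.
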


\begin{proof}
Let $\varepsilon$ be an arbitrarily small real number such that $0 < \varepsilon < \Lambda/\Delta - 1$,
and let $N_\varepsilon$ be a large enough integer such that $\alpha_\ell \geqslant \alpha_\infty - \varepsilon$ and
for all $\ell \geqslant \lfloor (N-6)/(\alpha_\infty+2) \rfloor$.
Then, consider some integer $n_0 \geqslant N_\varepsilon$, and let $k = \lfloor (n-6)/(\alpha_\infty+2) \rfloor$.
As shown in the proof of Lemma~\ref{lem:8.4}, there are integers $x,y \in \{0,1,2,3\}$, and $n_1 = \lfloor (k(\alpha_k-1)-7-x-y) / 2 \rfloor$,
such that there exists a stack $(n_1-4,n_1,k(\alpha_k-1)-7-x-y, k\alpha_k-3-x, k(1+\alpha_k), n_0-4, n_0)$.
Since $\alpha_\infty \leqslant 5$, we have then
\begin{align*}
  \Delta^5 n_1 
  & \geqslant \Delta^5 (k (\alpha_\infty-1-\varepsilon)/2 - 8) \geqslant \Delta^5 \left((\alpha_\infty-1-\varepsilon)/(2\alpha_\infty+4) n_0 - 16\right) \\
  & \geqslant (\alpha_\infty-1-\varepsilon)/(\alpha_\infty-1) n_0 - 16 \Delta^5.
\end{align*}
It follows that $n_0 \leqslant \Delta^5 (n_1+16) (\alpha_\infty-1)/(\alpha_\infty-1-\varepsilon) \leqslant \Delta^5 (1+\varepsilon)^2 n_1 \leqslant \Lambda^5 n_1$.

Then, we repeat this construction, but replacing $n_0$ by $n_1$, thereby constructing a new integer $n_2$,
then replacing $n_0$ by $n_2$, and so on, until we construct an integer $n_\ell$ such that $n_\ell < N_\varepsilon$.
Doing so, we built a stack $(n_\ell-4,n_\ell,\ldots,n_0-4,n_0)$ of size $5\ell+2$,
and we also have $\Lambda^{5\ell} N_\varepsilon \geqslant \Lambda^{5\ell} n_\ell \geqslant \Lambda^{5(\ell-1)} n_{\ell-1} \geqslant \ldots \geqslant n_0$.
Choosing $K_\Lambda = N_\varepsilon$ completes the proof.
\end{proof}

\begin{proof}[Proof of Theorem~\ref{thm:good-size}]
We focus here on proving the second part of Theorem~\ref{thm:good-size}.
Consider a real constant $\Delta' > \Delta$, and let $\Lambda = (\Delta + \Delta') / 2$.
If $h$ is large enough, then
$h K_{\Lambda} \Lambda^h \leqslant (\Delta')^{h - 4}$.
Then, let $(r_1,\ldots,r_h)$ be a stack such that $r_h \leqslant K_{\Lambda} \Lambda^h$,
and let $n$ be some integer such that $(\Delta')^{h - 4} \leqslant n < (\Delta')^{h - 3}$, if any.
Considering the stack $(r_1,\ldots,r_h+m)$, where $m = n - (r_1+r_2+\ldots+r_h)$,
we deduce that $h_{\max} \geqslant h > 3 + \log_{\Delta'}(n)$.
Therefore, if $n$ is large enough, we complete the proof by choosing
$h = \lfloor \log_{\Delta'}(n) \rfloor + 3$.
\end{proof}

% 
% \newpage
% 
% \subsection{Programs}

% 
% \begin{algorithm}[ht]
% \caption{C++ code, file {\tt listobject.c}, python 3.6.5}%\url{https://www.python.org/downloads/release/python-345/}
% \label{alg:mc_cpluplus}
% \begin{small}
% \begin{minted}{c++}
% /* Merge the two runs at stack indices i and i+1.
%  * Returns 0 on success, -1 on error.
%  */
% static Py_ssize_t merge_at(MergeState *ms, Py_ssize_t i);
% 
% /* Examine the stack of runs waiting to be merged, merging 
%  * adjacent runs until the stack invariants are re-established:
%  *
%  * 1. len[-3] > len[-2] + len[-1]
%  * 2. len[-2] > len[-1]
%  *
%  * See listsort.txt for more info.
%  *
%  * Returns 0 on success, -1 on error.
%  */
% static int merge_collapse(MergeState *ms){
%     struct s_slice *p = ms->pending;
% 
%     assert(ms);
%     while (ms->n > 1) {
%         Py_ssize_t n = ms->n - 2;
%         if ((n > 0 && p[n-1].len <= p[n].len + p[n+1].len) ||
%             (n > 1 && p[n-2].len <= p[n-1].len + p[n].len)) {
%             if (p[n-1].len < p[n+1].len)
%                 --n;
%             if (merge_at(ms, n) < 0)
%                 return -1;
%         }
%         else if (p[n].len <= p[n+1].len) {
%                  if (merge_at(ms, n) < 0)
%                         return -1;
%         }
%         else
%             break;
%     }
%     return 0;
% }
% \end{minted}
% \end{small}
% \end{algorithm}
% 
% \begin{algorithm}[ht]
% \caption{Java code, file {\tt java.util.TimSort}, jdk 9}%
% \label{alg:mc_java}
% \begin{small}
% \begin{minted}{java}
% /* Allocate runs-to-be-merged stack (which cannot be expanded).  
%  * The stack length requirements are described in listsort.txt.  
%  * The C version always uses the same stack length (85), but 
%  * this was measured to be too expensive when sorting "mid-sized" 
%  * arrays (e.g., 100 elements) in Java. Therefore, we use smaller 
%  * (but sufficiently large) stack lengths for smaller arrays. 
%  * The "magic numbers" in the computation below must be changed 
%  * if MIN_MERGE is decreased. See the MIN_MERGE declaration above 
%  * for more information. The maximum value of 49 allows for an 
%  * array up to length  Integer.MAX_VALUE-4, if array is filled by 
%  * the worst-case stack size  increasing scenario. More explana-
%  * tions are given in section 4 of: http://envisage-project.eu/
%  * wp-content/uploads/2015/02/sorting.pdf 
%  */
% int stackLen = (len <    120  ?  5 :
%                 len <   1542  ? 10 :
%                 len < 119151  ? 24 : 49);
% runBase = new int[stackLen];
% runLen = new int[stackLen];
% ...
% 
% /**
%  * Examines the stack of runs waiting to be merged and merges  
%  * adjacent runs until the stack invariants are reestablished:
%  *
%  *     1. runLen[i - 3] > runLen[i - 2] + runLen[i - 1]
%  *     2. runLen[i - 2] > runLen[i - 1]
%  *
%  * This method is called each time a new run is pushed onto 
%  * the stack, so the invariants are guaranteed to hold 
%  * for i < stackSize upon entry to the method.
%  */
% private void mergeCollapse() {
%     while (stackSize > 1) {
%         int n = stackSize - 2;
%         if (n > 0 && runLen[n-1] <= runLen[n] + runLen[n+1]) {
%             if (runLen[n - 1] < runLen[n + 1])
%                 n--;
%             mergeAt(n);
%         } else if (runLen[n] <= runLen[n + 1]) {
%             mergeAt(n);
%         } else {
%             break; // Invariant is established
%         }
%     }
% }
% \end{minted}
% \end{small}
% \end{algorithm}
% 
% \begin{algorithm}[ht]
% \caption{Java code that raises a {\tt java.lang.ArrayIndexOutOfBoundsException}, file \url{http://igm.univ-mlv.fr/~pivoteau/Timsort/Test.java}}%
% \label{alg:bug_java}
% \renewcommand{\baselinestretch}{.92}
% \begingroup
%     \fontsize{6pt}{8pt}\selectfont
% \begin{minted}{java}
% import java.util.Arrays;
% public class Test {
%   final static int[] runLengths = new int[] { 76405736, 74830360, 1181532, 787688, 1575376, 2363064, 3938440,
%     6301504, 1181532, 393844, 15753760, 1575376, 787688, 393844, 1969220, 3150752, 1181532, 787688, 5513816,
%     3938440, 1181532, 787688, 1575376, 18116824, 1181532, 787688, 1575376, 2363064, 3938440, 787688, 26781392,
%     1181532, 787688, 1575376, 2363064, 393844, 4332284, 1181532, 787688, 1575376, 12209164, 1181532, 787688,
%     1575376, 2363064, 787688, 393844, 4726128, 1575376, 787688, 1969220, 76405758, 53168940, 1181532, 787688,
%     1575376, 2363064, 3938440, 1575376, 787688, 393844, 10633788, 1181532, 787688, 1575376, 2363064, 4332284,
%     1181532, 787688, 1575376, 12996852, 1181532, 787688, 1575376, 2363064, 393844, 17329136, 1575376, 787688,
%     393844, 1969220, 3150752, 1181532, 393844, 7483036, 1575376, 787688, 1969220, 2756908, 1181532, 787600,
%     76405780, 38202802, 114608494, 66, 44, 88, 176, 352, 704, 1408, 2816, 5632, 11264, 22528, 45056, 90112,
%     180224, 360448, 720896, 1441792, 2883584, 5767168, 11387222, 22495132, 319836, 213224, 426448, 639672,
%     1066120, 1705792, 426448, 213224, 106612, 4584316, 426448, 213224, 106612, 533060, 106612, 852896, 426448,
%     213224, 1599180, 1172732, 319836, 213224, 426448, 5223988, 319836, 213224, 426448, 639672, 1066120, 319836,
%     213224, 7782676, 426448, 213224, 533060, 746284, 213224, 1705792, 319836, 213224, 426448, 639672, 2238852,
%     426448, 213224, 106612, 2345464, 426448, 213224, 106612, 533060, 106612, 852896, 426448, 213224, 106612,
%     22921602, 15245516, 319836, 213224, 426448, 639672, 1172732, 319836, 213224, 426448, 3304972, 319836,
%     213224, 426448, 639672, 213224, 1279344, 426448, 213224, 533060, 3838032, 319836, 213224, 426448, 639672,
%     213224, 106612, 5330600, 319836, 213224, 426448, 639672, 1066120, 213224, 2345464, 426448, 213224, 106612,
%     533060, 106612, 852896, 426448, 213224, 106524, 22921624, 11460724, 34382260, 66, 44, 88, 176, 352, 704,
%     1408, 2816, 5632, 11264, 22528, 45056, 90112, 180224, 360448, 720896, 1001792, 1783584, 2649020, 6739370,
%     102630, 68420, 136840, 205260, 342100, 547360, 102630, 68420, 1436820, 102630, 68420, 136840, 205260,
%     342100, 547360, 102630, 68420, 136840, 205260, 68420, 34210, 1607870, 102630, 68420, 136840, 205260, 342100,
%     68420, 34210, 2428910, 102630, 68420, 136840, 205260, 34210, 410520, 102630, 68420, 136840, 1094720, 102630,
%     68420, 136840, 205260, 68420, 34210, 444730, 136840, 68420, 34210, 171050, 34210, 6876232, 4618350, 102630,
%     68420, 136840, 205260, 34210, 342100, 136840, 68420, 34210, 992090, 102630, 68420, 136840, 205260, 68420,
%     342100, 205260, 102630, 68420, 1163140, 102630, 68420, 136840, 205260, 68420, 1607870, 102630, 68420,
%     136840, 205260, 342100, 34210, 684200, 136840, 68420, 171050, 239470, 102630, 68332, 6876254, 3438028,
%     10314194, 66, 44, 88, 176, 352, 704, 1408, 2816, 5632, 11264, 22528, 45056, 90112, 180224, 360448, 500896,
%     840554, 2018720, 32736, 21824, 43648, 65472, 21824, 10912, 141856, 43648, 21824, 10912, 54560, 10912,
%     425568, 43648, 21824, 54560, 76384, 21824, 10912, 185504, 32736, 21824, 43648, 65472, 21824, 10912, 491040,
%     32736, 21824, 43648, 65472, 109120, 10912, 731104, 32736, 21824, 43648, 65472, 120032, 32736, 21824, 43648,
%     327360, 32736, 21824, 43648, 65472, 21824, 130944, 43648, 21824, 54560, 2062390, 1396736, 32736, 21824,
%     43648, 65472, 109120, 43648, 21824, 10912, 294624, 32736, 21824, 43648, 65472, 120032, 32736, 21824, 43648,
%     360096, 32736, 21824, 43648, 65472, 10912, 480128, 32736, 21824, 43648, 65472, 109120, 196416, 65472, 32736,
%     10912, 76384, 21824, 10824, 2062412, 1031118, 3093442, 66, 44, 88, 176, 352, 704, 1408, 2816, 5632, 11264,
%     22528, 45056, 90112, 180224, 258170, 605616, 9768, 6512, 13024, 19536, 6512, 3256, 42328, 13024, 6512, 3256,
%     16280, 3256, 126984, 13024, 6512, 16280, 22792, 6512, 3256, 55352, 9768, 6512, 13024, 19536, 6512, 3256,
%     146520, 9768, 6512, 13024, 19536, 32560, 3256, 218152, 9768, 6512, 13024, 19536, 35816, 9768, 6512, 13024,
%     100936, 9768, 6512, 13024, 19536, 6512, 3256, 39072, 13024, 6512, 16280, 618662, 416768, 9768, 6512, 13024,
%     19536, 32560, 13024, 6512, 3256, 87912, 9768, 6512, 13024, 19536, 35816, 9768, 6512, 13024, 107448, 9768,
%     6512, 13024, 19536, 3256, 143264, 13024, 6512, 3256, 16280, 26048, 9768, 3256, 61864, 13024, 6512, 16280,
%     22792, 9768, 3168, 618684, 309254, 927850, 66, 44, 88, 176, 352, 704, 1408, 2816, 5632, 11264, 22440, 23056,
%     45056, 72314, 181632, 2838, 1892, 3784, 5676, 9460, 15136, 2838, 946, 37840, 3784, 1892, 946, 4730, 7568,
%     2838, 1892, 13244, 9460, 2838, 1892, 3784, 43516, 2838, 1892, 3784, 5676, 9460, 1892, 65274, 2838, 1892,
%     3784, 5676, 946, 10406, 2838, 1892, 3784, 30272, 2838, 1892, 3784, 5676, 1892, 946, 12298, 3784, 1892, 946,
%     4730, 185438, 127710, 2838, 1892, 3784, 5676, 9460, 3784, 1892, 946, 26488, 2838, 1892, 3784, 5676, 946,
%     10406, 2838, 1892, 3784, 31218, 2838, 1892, 3784, 5676, 946, 42570, 2838, 1892, 3784, 5676, 9460, 17974,
%     3784, 1892, 4730, 6622, 2838, 1804, 185460, 92642, 278014, 66, 44, 88, 176, 352, 704, 1408, 2816, 5632,
%     9064, 11528, 23606, 54340, 858, 572, 1144, 1716, 2860, 4576, 858, 286, 11440, 1144, 572, 286, 1430, 2288,
%     858, 572, 4004, 2860, 858, 572, 1144, 13156, 858, 572, 1144, 1716, 2860, 572, 19448, 858, 572, 1144, 1716,
%     286, 3146, 858, 572, 1144, 8866, 858, 572, 1144, 1716, 572, 286, 3432, 1144, 572, 1430, 55506, 38610, 858,
%     572, 1144, 1716, 2860, 1144, 572, 286, 7722, 858, 572, 1144, 1716, 3146, 858, 572, 1144, 9438, 858, 572,
%     1144, 1716, 286, 12584, 1144, 572, 286, 1430, 2288, 858, 286, 5434, 1144, 572, 1430, 2002, 858, 484, 55528,
%     27676, 83116, 66, 44, 88, 176, 352, 704, 1408, 1716, 3872, 8118, 16192, 264, 176, 352, 528, 176, 88, 1144,
%     352, 176, 88, 440, 88, 3432, 352, 176, 440, 616, 176, 88, 1408, 264, 176, 352, 528, 176, 88, 3960, 264, 176,
%     352, 528, 880, 88, 5808, 264, 176, 352, 528, 968, 264, 176, 352, 2640, 264, 176, 352, 528, 176, 1056, 352,
%     176, 440, 16566, 11264, 264, 176, 352, 528, 880, 352, 176, 2376, 264, 176, 352, 528, 968, 264, 176, 352,
%     2816, 264, 176, 352, 528, 88, 3872, 264, 176, 352, 528, 880, 1584, 528, 264, 88, 616, 176, 16588, 8206,
%     24706, 66, 44, 88, 176, 352, 704, 1408, 2090, 4708, 66, 44, 88, 132, 220, 352, 66, 44, 88, 990, 66, 44, 88,
%     132, 220, 418, 88, 44, 110, 154, 66, 44, 1122, 66, 44, 88, 132, 220, 88, 44, 22, 1716, 88, 44, 110, 154, 44,
%     352, 66, 44, 88, 132, 44, 22, 462, 110, 44, 22, 528, 88, 44, 22, 110, 22, 176, 88, 44, 22, 4950, 3256, 66,
%     44, 88, 132, 22, 242, 66, 44, 88, 704, 66, 44, 88, 132, 44, 22, 264, 88, 44, 110, 814, 88, 44, 110, 154, 22,
%     1144, 66, 44, 88, 132, 220, 44, 22, 506, 88, 44, 22, 110, 22, 176, 88, 44, 22, 4972, 2398, 7282, 66, 44, 88,
%     176, 242, 418, 660, 1496, 66, 44, 88, 132, 242, 66, 44, 88, 682, 66, 44, 88, 132, 44, 22, 264, 88, 44, 110,
%     1716, 858, 88, 44, 110, 154, 44, 22, 352, 66, 44, 88, 132, 44, 22, 1738, 198, 1760, 175, 156, 18, 17, 19,
%     36, 65, 21, 20, 22, 18, 452, 114, 95, 18, 17, 21, 36, 18, 17, 115, 76, 144, 44, 38, 61, 20, 19, 21, 17 };
% 
%   public static void main(String[] args) {
%     Integer[] arrayToSort = new Integer[1091482190];
%     Arrays.fill(arrayToSort, 0);
%     int sum = -1;
%     for (int i : runLengths) {
%       sum += i;
%       arrayToSort[sum] = 1;
%     }
%     Arrays.sort(arrayToSort);
%   }
% }
% \end{minted}
% \endgroup
% \renewcommand{\baselinestretch}{1}
% \end{algorithm}

\end{document}